\documentclass[final,12pt]{clear2025} 

\title[Median-based Splitting Rules for Causal Trees and Forests]{Median-based Splitting Rules for Causal Trees and Forests}
\usepackage{times}
\usepackage{algorithm} 
\usepackage{multirow}
\usepackage{dsfont} 
\usepackage{mathtools} 
\usepackage{physics} 
\usepackage{booktabs} 
\usepackage{bm} 
\input{ee.sty} 
\usepackage{float}
\usepackage{comment}
\usepackage{thmtools}
\usepackage{enumitem}
\usepackage{array}
\usepackage{rotating}

\newenvironment{notes}
  {\begin{minipage}{\linewidth}\smallskip\footnotesize\emph{Notes:}}
  {\end{minipage}}

\clearauthor{%
 \Name{Lennard Maßmann}
 \Email{lennard.massmann@uni-due.de}\\
 \addr Chair of Econometrics, Faculty of Business Administration and Economics, University of Duisburg-Essen, Universitätsstraße 12, 45117 Essen, Germany \\
 \addr Ruhr Graduate School in Economics (RGS Econ), Research Academy Ruhr, Universitätsstr. 150, 44801 Bochum, Germany
 \AND
 \Name{Karolina Gliszczy\'nska-Schroeder} \Email{karolina.gliszczynska@vwl.uni-due.de}\\
 \addr Chair of Econometrics, Faculty of Business Administration and Economics, University of Duisburg-Essen, Universitätsstraße 12, 45117 Essen, Germany%
}

\begin{document}


\newtheorem{assump}{Assumption}


\maketitle


\begin{abstract}%

Heavy-tailed and skewed outcomes are common in the randomized experiments and observational studies used to estimate heterogeneous treatment effects, yet the mean-squared-error criterion that guides splitting in honest causal trees is sensitive to the extreme values they generate.
Building on the causal forest framework \citep{athey_recursive_2016, wager_estimation_2018}, we introduce the Median Squared Deviation (MSD) criterion, which replaces the leafwise difference in means in the honest splitting objective with the Hodges--Lehmann location estimator while leaving honest leaf estimation and forest inference unchanged. Two further median-based rules, the Median Absolute Deviation (MAD) and the Least Median of Squares (LMS), serve as robust baselines. We evaluate the criteria in a simulation study covering precision, bias, and confidence interval coverage. MSD restricts its robustness to split selection and lowers the error of conditional average treatment effect estimates under heavy-tailed and skewed outcomes. Further, we re-visit two empirical applications: the first analyzes the electoral effects of a Mexican conditional cash transfer program on precinct-level observations, while the second application studies antiretroviral treatments in HIV-positive adults. 
\end{abstract}

\begin{keywords}%
    Heterogeneous treatment effects, causal forests, Hodges--Lehmann estimator, robust estimation, heavy-tailed outcomes
\end{keywords}

\begin{jelclass}
    C14, C21, C52
\end{jelclass}


\section{Introduction}
\label{sec:Introduction}

Extreme and heavy-tailed outcomes pose a recurring challenge for estimating treatment effects in randomized trials and observational studies in economic applications \citep{li_robust_2021, li_robust_2023, Athey2023, ghosh_robustness_2026}. A natural and established remedy is to base inference on the ranks of the outcomes rather than their magnitudes \citep{lehmann_nonparametrics_1975} within the potential outcomes framework \citep{imbens_causal_2015}. Building on this idea, \cite{Rosenbaum1993} combines the location estimator of \cite{hodges_estimates_1963} with rank statistics to estimate a constant additive treatment effect. \cite{ghosh_robustness_2026} establish an asymptotic theory for this rank-based estimator.
Although the constant treatment effect assumption serves as a convenient and widely used starting point for causal inference \citep{rosenbaum_covariance_2002, Athey2023}, many applications in economics, social sciences, and public health exhibit treatment effect heterogeneity \citep{kunzel_meta-learners_2019, yao_survey_2021, kennedy_towards_2023, cinelli_challenges_2025}. Motivated by the goal of combining nonparametric heterogeneous treatment effect (HTE) estimation with robustness to heavy-tailed or skewed outcomes, this paper extends the causal forest framework \citep{athey_recursive_2016, wager_estimation_2018} by using the Hodges--Lehmann estimator as the basis for median-based splitting rules in the underlying causal trees. The construction retains the honest sample splitting of \citet{athey_recursive_2016} and \citet{wager_estimation_2018} and we use pointwise confidence intervals based on the bootstrap of little bags (BLB) \citep{sexton_standard_2009, athey_generalized_2019} to quantify uncertainty around the conditional average treatment effect (CATE) estimates.

Nonparametric estimation of HTEs has become a central theme in modern causal inference, motivated by the need to understand how treatment effects vary across subpopulations without imposing restrictive parametric assumptions on the outcome model. In the potential outcomes framework \citep{imbens_causal_2015}, where each unit has unobserved counterfactual responses under treatment and control, a broad class of flexible methods exists, from regression trees and forests to meta-learners and Bayesian nonparametric models \citep{wager_estimation_2018, kunzel_meta-learners_2019, hahn_bayesian_2020}, that adaptively model treatment effects as functions of covariates without specifying a fixed functional form. 
These approaches are useful in settings where the true relationship between covariates, outcomes, and treatment effects is complex and high-dimensional. While much of the HTE literature focuses on smooth or well-behaved outcome distributions, heavy-tailed or irregular potential outcomes are prevalent in real-world applications. 
\cite{li_robust_2021, li_robust_2023} analyze electronic health record data to estimate and compare HTEs between two alternative therapies for hypertension with a right-skewed outcome variable, flexibly adjusting for a set of demographic and clinical covariates. 
\cite{Athey2023} study two applications using skewed and heavy-tailed house price data and medical expenditures for pneumonia patients as outcome variables. In this paper, we revisit the conditional cash transfer experiment of \cite{de_la_o_conditional_2013}, also studied by \cite{ghosh_robustness_2026}. This randomized experiment evaluates the electoral impact of Mexico's conditional cash transfer program on skewed political support outcomes. Additionally, we analyze the ACTG 175 antiretroviral trial of \cite{hammer_trial_1996}, which was also used as an empirical application by \cite{leqi_median_2022}. For both applications, we examine potential treatment effect heterogeneity while accounting for their skewed outcomes.

Causal trees and their aggregation into causal forests provide a flexible, nonparametric framework for estimating the HTE by adapting recursive partitioning to target CATE rather than prediction error. \cite{athey_recursive_2016} and \cite{wager_estimation_2018} formalize an honest estimation approach by separating tree construction from leaf-level effect estimation to reduce adaptive bias and enable asymptotically valid inference, while forest aggregation and subsampling stabilize estimates in high-dimensional settings. 
Subsequent work connects causal forests to semiparametric efficiency theory through orthogonalized and doubly robust score constructions \citep{chernozhukov_doubledebiased_2018, kennedy_towards_2023}, thereby mitigating bias from nuisance function estimation and improving robustness to model misspecification. Nonetheless, practical performance can deteriorate under heavy-tailed or skewed outcomes, where classical mean-squared-error splitting criteria become sensitive to extreme values \citep{galimberti_notes_2011}.

Within random forest algorithms based on the framework of \citet{Breiman2001}, splitting rules of the underlying trees typically rely on MSE minimization for regression tasks or impurity measures such as the Gini index for classification. For predictive tasks, several robust alternatives have been proposed. 
\citet{Hastie2009} review absolute and Huber loss functions as splitting criteria to reduce sensitivity to extreme outcomes, and \citet{roy_robustness_2012} propose median-based strategies encompassing tree aggregation, splitting criterion choice, and outcome transformation to robustify random forest predictions. \citet{ishwaran2015effect} study weighted splitting criteria of which MSE and Gini index splitting emerge as special cases, while \citet{L1_splitting} compare log-rank tests and integrated absolute differences as splitting criteria in survival forests. 
\citet{LiMartin2017} provide a unifying framework for forest-type regression that supports a wide range of robust loss functions, nesting classical random forests and quantile regression forests as special cases. Translating these robustness ideas to the causal forest setting is non-trivial, however, because the splitting criteria must be adapted to the honest sample-splitting framework of \citet{athey_recursive_2016} and \citet{wager_estimation_2018}. In this framework, the data used for split selection are separated from the data used for leafwise treatment-effect estimation. Beyond the standard MSE-based splitting rules for causal trees, \citet{athey_recursive_2016} discuss several alternative partitioning estimators, including transformed-outcome trees, fit-based trees, which choose splits by improvements in outcome fit, and t-statistic trees, which prioritize treatment-effect heterogeneity across candidate leaves. Further extensions include \citet{Lechner_Modified_CF}, who augment variance-based splitting by a propensity-score penalty, and \citet{chernozhukov_riesznet_2022}, who generalize the splitting mechanism via their ForestRiesz estimator. 

Our approach differs in construction from the weighted least-absolute-deviation criterion of \citet{li_robust_2021}. We retain the honest, mean-based splitting objective of \citet{athey_recursive_2016} and \citet{wager_estimation_2018} and replace its difference-in-means anchor with the Hodges--Lehmann location estimator. To our knowledge, anchoring the honest splitting objective on the Hodges--Lehmann estimator has not previously been studied. Our primary contribution is the resulting Median Squared Deviation (MSD) criterion, anchored on this estimator within the causal forest framework. We implement a computationally convenient version of this criterion and evaluate its performance in simulations and applications.
We further examine two related heuristic variants, the Median Absolute Deviation (MAD) and the Least Median of Squares (LMS), which we treat as practical alternatives without the same formal guarantees. 

The paper is structured as follows. Section~\ref{sec:PO_framework} reviews the potential outcome framework. Section~\ref{sec:causal_trees} introduces causal trees and the honest sample-splitting construction. Section~\ref{sec:MedianSplitting} introduces the Hodges--Lehmann estimator as a consistent estimator of the leafwise location shift and implements it in new splitting rules, primarily the Median Squared Deviation (MSD) criterion and, as heuristic variants, the Median Absolute Deviation (MAD) and the Least Median of Squares (LMS). Section~\ref{sec:simulations} presents a simulation study comparing the different causal tree splitting rules, embedded in the resulting causal forest estimators, in terms of ATE and CATE precision and confidence interval coverage. Section~\ref{sec:empappl} illustrates the methods on the Progresa conditional cash transfer experiment of \cite{de_la_o_conditional_2013} and the ACTG 175 antiretroviral trial of \cite{hammer_trial_1996}. Section~\ref{sec:conclusion} concludes.

\section{Potential Outcomes Framework}
\label{sec:PO_framework}
To formalize the HTEs motivated in Section~\ref{sec:Introduction}, we work within the potential outcomes framework, a standard approach for causal inference in economics and the social sciences, particularly for observational data and randomized experiments \citep{dominici_controlled_2021, hernan_miguel_a_what_nodate}. 
The general idea was introduced by Jerzy Neyman in his 1923 work on randomized agricultural experiments \citep{splawa-neyman_application_1990}. It was developed further in a series of papers \citep{rubin_estimating_1974, rubin_inference_1976, rubin_bayesian_1978, rubin_randomization_1980} and addresses what \cite{holland_statistics_1986} later termed the fundamental problem of causal inference, that only one of the two potential outcomes is observed for each unit.
The resulting Rubin Causal Model \citep{rubin_rubin_2011, hernan_miguel_a_what_nodate} rests mainly on the three assumption of stable unit treatment values, unconfoundedness, and common support, as reviewed below. 
We consider independent and identically distributed sampled data $(X_i, Y_i, D_i)$ for $i = 1, ..., N$ with the pair of potential outcomes $(Y_i(0), Y_i(1))$ for each unit $i$ based on a binary treatment indicator $D_i \in \{0,1\}$ where \(D_i=1\) denotes treatment and \(D_i=0\) denotes control, and a covariate vector $X_i$ of dimension $K$, where $K$ is the number of covariates. The individual treatment effect (ITE), 
\begin{equation}
\label{eq:ITE}
    \tau_i \coloneqq Y_i(1)-Y_i(0),
\end{equation}
is based on the existence of a potential outcome pair for each observation $i$. The notation of potential outcomes relies on the Stable Unit Treatment Value Assumption (SUTVA), which rules out interference between units and multiple or hidden versions of treatment. Assumption~\ref{assump:SUTVA} formalizes the link between the potential outcomes and the observed outcome.

\begin{assump}{Stable unit treatment value assumption (SUTVA).}
\label{assump:SUTVA}
\begin{align*}
\text{If } D_i = d \text{, then } Y_i(d) = Y_i^{obs} \text{ , } \forall d \in \{0, 1 \} \text{ , } \forall i \in \{1, ..., N \}. 
\end{align*}
\end{assump}
Under Assumption~\ref{assump:SUTVA}, the observed outcome can be written as
\[Y_i^{obs}=Y_i(1)D_i+Y_i(0)(1-D_i).\]
Together with the no-interference and no-hidden-versions components of SUTVA, this implies that each unit has two well-defined potential outcomes and that unit $i$'s potential outcomes are unaffected by the treatment assignment of other units. Moreover, Assumption \ref{assump:SUTVA} removes different treatment manifestations from consideration, which is also known as treatment variation irrelevance. While Assumption \ref{assump:SUTVA} is a common but relatively strong assumption, modifications such as allowing for multiple versions of treatment \citep{vanderweele_ignorability_2008, vanderweele_causal_2013, laffers_identification_2020} or allowing for interference between individuals within certain groups of the population \citep{hudgens_toward_2008} have been proposed but are not the focus of this paper. 

We further assume an unconfounded treatment assignment mechanism independent of potential outcomes in Assumption \ref{assump:unconfoundedness}. Given the covariates, the presence of unmeasured confounders is ruled out and leads to conditional independence between potential outcomes and treatment status.

\begin{assump}{Unconfoundedness.}
\label{assump:unconfoundedness}
\begin{align*}
D_i \perp\!\!\!\perp \left(Y_i(1), Y_i(0)\right) \mid X_i, \\
\text{or equivalently,} \quad 
Pr(D_i \mid Y_i(1), Y_i(0), X_i) = Pr(D_i \mid X_i).
\end{align*}
\end{assump}

In perfectly randomized experiments, Assumption \ref{assump:unconfoundedness} is fulfilled by design as the assignment mechanisms are under the researcher's control. In the analysis of observational studies, this assumption and its sensitivity can, at best, be verified indirectly. Often, it remains untestable and has to be justified in the context of the research question at hand \citep{imbens_nonparametric_2004}.  

Thirdly, we assume common support so that treatment effects are identifiable throughout the support of the covariates. Let the conditional treatment probability or propensity score be defined as  \begin{equation}
\label{eq:propensity}
p(X_i=x) \coloneqq Pr(D_i=1|X_i=x).
\end{equation} 

The following overlap condition in Assumption \ref{assump:overlap} requires at every covariate value in the support of \(X_i\), units have a nonzero probability of receiving either treatment state.

\begin{assump}{Overlap.}
\label{assump:overlap} 
\begin{align*}
\exists\, \epsilon > 0 \text{ such that } \epsilon < p(X_i=x) < 1-\epsilon, \quad \forall x \text{ in the support of } X_i, \text{ with probability } 1.
\end{align*}
\end{assump}

Strong overlap improves the stability of treatment-effect estimates and is also
important for the construction of valid confidence intervals. Moreover, convergence rates of semi-parametric estimators depend heavily on the degree of common support. Two common solutions to ensure sufficient overlap are trimming based on the propensity score and the use of more robust inference procedures that estimate data adaptively based on the amount of covariate overlap \citep{damour_overlap_2021, busso_new_2014, rothe_robust_2017}. Assumptions \ref{assump:unconfoundedness} and \ref{assump:overlap} imply that treatment assignment is ignorable for the identification of treatment effects. Using Assumptions \ref{assump:SUTVA} to \ref{assump:overlap}, one can identify the conditional average treatment effect (CATE) as
\begin{align}
\begin{split}
\label{eq:CATE}
\tau(X_i = x) &\coloneqq \mathbb{E}[Y_i(1) - Y_i(0) \mid X_i = x] \\
&= 
\underbrace{\mathbb{E}[Y_i \mid D_i = 1, X_i = x]}_{\coloneqq\mu_1(x)} 
- 
\underbrace{\mathbb{E}[Y_i \mid D_i = 0, X_i = x]}_{\coloneqq\mu_0(x)}. 
\end{split}
\end{align}
The CATE in Equation~\eqref{eq:CATE} is the target estimand of this paper, and the splitting rules of Section~\ref{sec:MedianSplitting} are designed to create partitions of the covariate space that allow a robust estimation of the sample analog of \eqref{eq:CATE}. Note that the sample analogs of the conditional mean functions $\mu_1(x)$ and $\mu_0(x)$ can be estimated separately from observational data and that the ITE in Equation~\eqref{eq:ITE} should not be confused with the CATE in \eqref{eq:CATE}, since in general $\tau(X_i) \neq \tau_i$ \citep{post2024flexible, PostvandenHeuvel+2025}.
Averaging the CATE over the covariate distribution yields the Average Treatment Effect (ATE),
\begin{equation} \label{eq:ATE}
\tau = \mathbb{E}[Y_i(1) - Y_i(0)] = \mathbb{E}\left[\tau(X_i)\right],
\end{equation}
a scalar summary of the heterogeneity in \eqref{eq:CATE} that we also report in Sections~\ref{sec:simulations} and~\ref{sec:empappl}. 
Because both the CATE and the ATE are functionals of the same conditional mean functions $\mu_1(x)$ and $\mu_0(x)$, a rich representation of the CATE in Equation~\eqref{eq:CATE} requires a flexible nonparametric estimator of these conditional means. Causal trees and forests provide such an estimator while supporting valid inference through honest sample splitting, as summarized in Section~\ref{sec:causal_trees}.

\section{Causal Trees and Causal Forests}
\label{sec:causal_trees}

By partitioning the feature space into rectangles, the general tree-based method builds upon the idea of fitting trivial separate models to each rectangle, aiming at estimating the expectation of the outcome variable conditional on the regressors \citep{Hastie2009}. These methods can cope very well with large-scale datasets while the structure of a single tree retains its interpretability. Throughout this paper we focus on regression trees based on the CART procedure proposed by \cite{cart84} and follow the notation of \cite{athey_recursive_2016}.\footnote{Related approaches we do not consider are, for instance, the Iterative Dichotomiser 3 (ID3) or multivariate adaptive regression splines (MARS) \citep{quinlan_induction_1986, friedman_multivariate_1991}.} A CART regression tree $\Pi$, also referred to as a partition, divides the feature space into separate segments in a recursive manner until reaching a set of leaves $\Pi = \{l_1, ..., l_L\}$ with a predefined minimum leaf size \citep{wager_estimation_2018}. Let $\mathcal{S}$ denote a sample with sample size $N$ and let $l(x; \Pi)$ be a specific leaf $l \in \Pi$ with $x \in l$. Then the leaf-level conditional mean function $\mu_d(x;\Pi)$ is the population average outcome within the leaf containing $x$,
\begin{align}
\label{eq:conditional_mean_function}
    \mu_d(x; \Pi) = \mathbb{E}[Y_i|D_i=d, X_i \in l(x;\Pi) ]. 
\end{align}
Here $\mu_d(x; \Pi)$ is the leaf-averaged, partition-dependent counterpart of the pointwise population conditional mean $\mu_d(x)$ defined in \eqref{eq:CATE}, averaging $\mu_d(\cdot)$ over the leaf $l(x; \Pi)$ containing $x$ rather than conditioning on the single point $X_i = x$.
For a partition $\Pi$ chosen independently of a sample $\mathcal{S}$, the leaf sample mean is an unbiased estimate of this conditional mean function and is given by
\begin{equation}
\label{estim_cond_mean_func}
\hat\mu_d(x; \mathcal{S}, \Pi) \coloneqq \frac{1}{\# (i \in \mathcal{S}: D_i= d,  X_i \in l(x; \Pi))} \sum_{i\in \mathcal{S}:D_i=d, X_i \in l(x; \Pi)} Y_i.
\end{equation}

 Given a tree $\Pi$ and a test point $x$, the treatment effect within a specific leaf $l \in \Pi$ can thus be denoted as the difference of average outcomes between the treatment and control group in leaf $l$
\begin{equation}
\label{leaf_treatment_effect}
\hat\tau(x; \mathcal{S}, \Pi) \coloneqq \hat\mu_1(x; \mathcal{S}, \Pi) - \hat\mu_0(x; \mathcal{S}, \Pi).
\end{equation}
For simplicity and without loss of clarity, we abstract from explicitly including $\Pi$ or $\mathcal{S}$ in our further notation if the partition itself is not relevant for the main argument (i.e. $\hat\tau(x; \mathcal{S})$ instead of $\hat\tau(x; \mathcal{S}, \Pi)$). 
We further use two interchangeable forms for any leaf-constant quantity: a pointwise form $\hat{\tau}(x; \mathcal{S}, \Pi)$ when predicting for individuals or summing over $i \in \mathcal{S}$, and a leaf-direct form $\hat{\tau}(l; \mathcal{S}, \Pi)$ when summing over $l \in \Pi$. Both refer to the same object via $l = l(x; \Pi)$, and the sample argument $\mathcal{S}$ is likewise suppressed when clear from context.

In the usual CART algorithm, recursive partitioning and leaf estimation use the same training data set. This adaptive procedure uses the same observations to choose the partition and to estimate the within-leaf quantities. An adaptive reuse of the data is prone to overfitting, because splits are chosen precisely to improve in-sample fit. 
Tree complexity is typically controlled through cost-complexity pruning. After growing a large tree, a penalty on the number of terminal leaves is used to select a smaller subtree. The benefit of the conventional, adaptive CART procedure is its efficient data usage, as all training observations are considered for partitioning and estimation. 
However, the CART algorithm tends to produce more extreme leaf means as spurious extreme values are grouped with high probability \citep{cart84, Hastie2009}.

\cite{athey_recursive_2016} address this problem by introducing honesty.
A tree is called honest if the data used to estimate within-leaf responses is disjoint from, and thereby independent of, the data used to determine the splits of the partition $\Pi$. Honesty is a property of the
estimation protocol, not of any specific algorithm: the honest causal tree of \cite{athey_recursive_2016}, the Double-Sample Tree of \cite{wager_estimation_2018}, and the Propensity Tree of \cite{wager_estimation_2018} all achieve it by slightly different constructions. 
Honest leaf estimation mitigates overfitting bias that would otherwise arise from using the same data both to select $\Pi$ and to estimate $\hat\tau(l)$, and is what delivers per-leaf conditional unbiasedness for $\tau(l;\Pi)$.

Following \citet{athey_recursive_2016}, we distinguish between three disjoint samples by considering a training sample $\mathcal{S}^{tr}$ to build a tree, an estimation sample $\mathcal{S}^{est}$ to estimate effects for a given tree partition $\Pi$, and a test sample $\mathcal{S}^{te}$ to evaluate out-of-sample performance. 
We denote subsample sizes by $N^{\star}_{d,l}$, where the superscript $\star \in \{\mathrm{tr}, \mathrm{est}, \mathrm{te}\}$ indicates the sample $\mathcal{S}^{\star}$, the subscript $d \in \{0,1\}$ indicates treatment status, and $l \in \Pi$ indicates the leaf. Subscripts are dropped when the corresponding restriction is not imposed, so that $N^{\mathrm{tr}}$ is the full training-sample size, $N^{\mathrm{tr}}_1$ the number of treated training observations, $N^{\mathrm{tr}}_l$ the number of training observations in leaf $l$, and $N^{\mathrm{tr}}_{1,l}$ the number of treated training observations in leaf $l$. The sample superscript is suppressed when clear from context.

Throughout the paper, we refer to a standard honest causal tree
\citep{athey_recursive_2016} that uses the following tree construction and estimation
procedure in four steps.
\begin{enumerate}[label=\textup{(}\roman*\textup{)}, ref=\roman*]
    \item \label{itm:first} split the data into a training sample
      $\mathcal{S}^{tr}$, an estimation sample $\mathcal{S}^{est}$, and a test
      sample $\mathcal{S}^{te}$;
    \item \label{itm:second} grow a tree $\Pi$ on $\mathcal{S}^{tr}$ by
      maximizing the honest splitting criterion
      $\hat Q^{\,\text{MSE, H}}_\tau$ in \eqref{eq:MSE_CT_H}, which trades the
      in-sample heterogeneity reward
      $\sum_l (N^{tr}_l/N^{tr})\,\hat\tau(l)^2$ against a leaf-variance
      penalty weighted by $(1/N^{tr}_l + 1/N^{est}_l)$, and determine tree
      complexity by $k$-fold cross-validation within $\mathcal{S}^{tr}$ via
      cost-complexity pruning. We use $k$ here for the number of cross-validation folds, a standard convention that is unrelated to the observation indices elsewhere in the paper;
    \item \label{itm:third} with $\Pi$ fixed, estimate leaf-level CATEs as the
      difference-in-means on the held-out estimation sample,
      $\hat{\tau}(l;\mathcal{S}^{est}) = \bar{Y}^{\,est}_{1,l} - \bar{Y}^{\,est}_{0,l}$. Because $\mathcal{S}^{est}$ and $\mathcal{S}^{tr}$ are independent, this estimator is conditionally unbiased for $\tau(l;\Pi)$, and
    \item \label{itm:fourth} predict for a new $x$ by routing $x$ through
      $\Pi$ to its leaf $l(x;\Pi)$ and reading
      $\hat\tau(x;\Pi) = \hat\tau\bigl(l;\mathcal{S}^{est}\bigr)$.
\end{enumerate}

Step~(\ref{itm:second}) is the core component in the causal tree construction process in \citet{athey_recursive_2016} and is
where the four splitting rules in this paper differ. We introduce the
mean-based criterion of \citet{athey_recursive_2016} here in detail and use it
as baseline for constructing the median-based and heuristic rules in
Sections~\ref{sec:MSD} and \ref{sec:MAD_LMS}: each rule is presented through
an infeasible MSE on $(\mathcal{S}^{te},\mathcal{S}^{est})$ that defines
its population target, followed by an in-sample criterion on
$\mathcal{S}^{tr}$ that is the sample analog used for split selection.

Since the individual treatment effect $\tau_i = Y_i(1) - Y_i(0)$ is never
observed, the natural prediction risk
$\frac{1}{N^{te}}\sum_{i\in\mathcal{S}^{te}}(\tau_i - \hat\tau(X_i))^2$ is
infeasible. \citet{athey_recursive_2016} work with the modified MSE that
removes the unobservable component while preserving the ranking of candidate
trees, since $\tau_i^2$ depends on neither $\Pi$ nor $\mathcal{S}^{est}$:
\begin{equation}\label{eq:MSE_treatment_effect}
  \text{MSE}_\tau(\mathcal{S}^{te}, \mathcal{S}^{est}, \Pi)
  = \frac{1}{N^{te}} \sum_{i \in \mathcal{S}^{te}}
    \Bigl[\bigl(\tau_i - \hat{\tau}(X_i; \mathcal{S}^{est}, \Pi)\bigr)^2 - \tau_i^2\Bigr].
\end{equation}
Equation~\eqref{eq:MSE_treatment_effect} is still not feasible in step~(\ref{itm:second}):
the leaf estimates $\hat\tau(X_i;\mathcal{S}^{est},\Pi)$ depend on $\mathcal{S}^{est}$, which has not been used during split selection, and the individual $\tau_i$ are unobserved on $\mathcal{S}^{te}$. To make the criterion feasible, \citet{athey_recursive_2016} take expectations of \eqref{eq:MSE_treatment_effect} over \((\mathcal S^{te},\mathcal S^{est})\) to obtain the expected mean squared error \(\mathrm{EMSE}_\tau(\Pi)\). They then construct a training-sample estimator of \(-\mathrm{EMSE}_\tau(\Pi)\) by replacing the unknown leaf CATE \(\tau(l;\Pi)\) with the training-sample difference-in-means and by subtracting the corresponding variance correction. This yields the honest splitting criterion
\begin{equation}\label{eq:MSE_CT_H}
  \hat{Q}_\tau^{\,\text{MSE, H}}(\mathcal{S}^{tr}, \Pi)
  = \sum_{l \in \Pi} \frac{N_l^{tr}}{N^{tr}}
    \left[\hat{\tau}(l;\mathcal{S}^{tr})^2
    - \left(\frac{1}{N_l^{tr}} + \frac{1}{N_l^{est}}\right)
      \!\left(\frac{\hat{S}^2_{1}(l;\mathcal{S}^{tr})}{p}
      + \frac{\hat{S}^2_{0}(l;\mathcal{S}^{tr})}{1-p}\right)\right],
\end{equation}
which is maximized over $\Pi$. Here, \(\hat\tau(l;\mathcal S^{tr})=\bar Y^{tr}_{1,l}-\bar Y^{tr}_{0,l}\) is the
training-sample CATE estimate, and
\(\hat S_d^2(l;\mathcal S^{tr})\) is the within-leaf outcome variance for treatment status \(d\). The factor \(N_l^{tr}/N^{tr}\) estimates the leaf probability, while
\(N_l^{est}\) denotes the estimation-sample leaf size associated with leaf \(l\) under \(\Pi\).
The treatment share $p$ equals the known propensity score in the randomized designs considered here and should not be confused with the leaf probability.
The first term in \eqref{eq:MSE_CT_H}, $\hat\tau(l;\mathcal{S}^{tr})^2$
weighted by $N_l^{tr}/N^{tr}$, is the in-sample heterogeneity reward: splits
that produce leaves with larger differences in CATEs increase this term. The second term is the variance penalty. It corrects the upward bias of the plug-in $\hat\tau(l;\mathcal{S}^{tr})^2$ and discounting heterogeneity that will be hard to detect on
$\mathcal{S}^{est}$. The main contribution of this paper
is the modification of \eqref{eq:MSE_treatment_effect} and \eqref{eq:MSE_CT_H} by replacing its mean-based anchor with median-based alternatives that are less sensitive to heavy-tailed or skewed outcome distributions.

Although every quantity in \eqref{eq:MSE_CT_H} is computed on $\mathcal{S}^{tr}$, the penalty depends on the estimation-sample leaf size \(N_l^{est}\), which determines how strongly estimation variance is penalized. The observations in $\mathcal{S}^{est}$ are used only later for honest leafwise estimation in step~(\ref{itm:third}). The relation between \eqref{eq:MSE_CT_H} and \eqref{eq:MSE_treatment_effect} is derived in detail in Appendix~\ref{appendix:Honest_Splitting_CT}. We further note that the superscript ``H'' on $\hat Q^{\,\text{MSE, H}}_\tau$ marks the splitting criterion as honest due to the variance penalty derived under the EMSE objective. An adaptive criterion
\(\hat Q^{\,\mathrm{MSE,A}}_\tau\) omits this term. This is conceptually different from honest leaf estimation, which uses an independent $\mathcal{S}^{est}$ in step~(\ref{itm:third}) to compute $\hat\tau(l;\mathcal{S}^{est})$. Honest leaf estimation delivers per-leaf conditional unbiasedness for $\tau(l;\Pi)$, given the selected partition.

So far, we have focused on split selection and leafwise estimation for a single honest causal tree. The criterion in \eqref{eq:MSE_CT_H} is defined at the tree level. In practice, however, single trees can be unstable because small changes
in the data may lead to different partitions and high-variance leaf estimates.  For this reason, the estimator we
evaluate in Sections~\ref{sec:simulations} and~\ref{sec:empappl} is the
corresponding causal forest of \citet{wager_estimation_2018}: an ensemble of
$B$ honest causal trees grown on random subsamples of the data and aggregated
by averaging,
\begin{equation}\label{eq:causal_forest}
    \hat{\tau}(X_i) \;=\; \frac{1}{B}\sum_{b=1}^{B} \hat{\tau}_b(X_i),
\end{equation}
where $\hat\tau_b(X_i)$ is the prediction of the $b$-th tree, built and
re-estimated according to steps~(\ref{itm:first})--(\ref{itm:fourth}) on its
assigned subsample. The splitting rules considered in this paper (the mean-based
$\hat{Q}^{\,\text{MSE, H}}_\tau$ in \eqref{eq:MSE_CT_H}, the median-based MSD
in Section~\ref{sec:MSD}, and the heuristic MAD and LMS in
Section~\ref{sec:MAD_LMS}) differ only in step~(\ref{itm:second}) of each tree. Honest leaf estimation, subsampling, and forest aggregation are kept fixed across methods.

Aggregating the CATE estimates in \eqref{eq:causal_forest} over the full sample gives a simple plug-in estimator of the ATE in \eqref{eq:ATE},
\begin{equation} \label{eq:diff_estimator_ATE}
\hat{\tau}_{\text{plug-in}} 
= \frac{1}{N} \sum_{i \in \mathcal{S}} \hat{\tau}(X_i),
\end{equation}
which is consistent but not necessarily efficient. For the ATE comparisons in Sections~\ref{sec:simulations} and~\ref{sec:empappl} we instead use the augmented inverse probability weighting (AIPW) estimator \citep{athey_policy_2021, robins_estimation_1994, chernozhukov_doubledebiased_2018}, detailed in Appendix~\ref{appendix:AIPW_ATE_CF}.

The pointwise consistency and asymptotic normality of the causal forest CATE estimates in \eqref{eq:causal_forest} are established in Theorem~4.1 of \citet{wager_estimation_2018},
which carries the generic forest central limit theorem of their Theorem~3.1 into the potential outcomes setting under unconfoundedness and overlap. In particular the trees must be honest, $\alpha$-regular, random-split,
and symmetric in the sense of \citet{wager_estimation_2018}, and the subsample size must scale appropriately.
We note that there is recent literature that scrutinizes these conditions of \citet{wager_estimation_2018}. \citet{cattaneo_honest_2025} show that adaptive causal trees generate unbalanced leaves with non-vanishing probability and argue that fixed-fraction $\alpha$-regularity may be incompatible with standard recursive partitioning, so that honesty alone may not guarantee uniform convergence. By contrast,
\citet{bladt_consistency_2026} show that honest trees remain consistent when leaves localize in feature space while retaining enough observations for stable within-leaf estimation. We treat
honesty and $\alpha$-regularity as maintained assumptions, enforced at the level of the implementation in Appendix~\ref{appendix:implementation}, and note that
the operative practical condition is the leaf-size regime rather than the splitting rule.

Replacing the mean-based criterion with the Median Squared Deviation (MSD), Median Absolute Deviation (MAD), or Least Median of Squares (LMS) rule introduced in the next section changes only the split-selection criterion in
step~(\ref{itm:second}). The honest sample split, subsampling scheme, and difference-in-means leaf estimator in step~(\ref{itm:third}) remain unchanged.\footnote{The balance constraint is Definition~4 of \citet{wager_estimation_2018}, which keeps $\alpha$-regularity intact under criteria whose unconstrained optimum can be highly unbalanced, most notably LMS.} 
Therefore, their Theorem~4.1 continues to apply. The result relies on the finite-moment conditions of \citet{wager_estimation_2018}, which the heavy-tailed but finite-variance designs of Section~\ref{sec:simulations} satisfy, so the robustness we target lies in the stability of split selection under heavy-tailed or skewed outcome distributions. Appendix~\ref{appendix:implementation} documents that the implemented forest satisfies the conditions of Theorems~3.1 and~4.1 of \citet{wager_estimation_2018} and identifies the mechanism enforcing it, so the transfer rests on properties of the protocol that we briefly discuss there.

\section{Median-based Splitting Rules}
\label{sec:MedianSplitting}

We introduce median-based splitting rules as robust alternatives to the MSE-based criteria in Equations~\eqref{eq:MSE_treatment_effect} to \eqref{eq:MSE_CT_H}. \cite{roy_robustness_2012} proposed median-based splits to robustify regression forests for outcome prediction with outliers, while \cite{athey_recursive_2016} proposed the causal tree framework in Section \ref{sec:causal_trees} to allow for estimation and inference of CATE. We combine both approaches. This idea is also related to the robust M-estimation perspective of \citet{Huber1964},
where the target estimand can remain unchanged while the loss function used for estimation is modified to improve finite-sample stability (see also \citet{Hastie2009}). We adapt this logic to the causal-tree setting, the target estimand remains the CATE in \eqref{eq:CATE}, while the split-selection objective is modified by replacing its mean-based rule with median-based quantities, to quantify treatment effect heterogeneity under irregular outcome distributions. Because changing the criterion affects only which partition is chosen, and not how effects are estimated within leaves, the leaf-level unbiasedness and the forest inference of Section~\ref{sec:causal_trees} are unaffected by the choice of the splitting rule. This is a property of the honest leaf estimator, not of the splitting criterion in particular, and it holds for all considered rules. Whether the MSD criterion itself reproduces the honest variance-bias decomposition is a separate question, discussed in Section~\ref{sec:MSD}.

Before introducing the splitting rules, let us define a median-based analogue of the average treatment effect, the median treatment effect (MTE), as
\begin{align} \label{eq:MTE}
        \tau_{\operatorname{med}} &\coloneqq \operatorname{med} \tau_i = \operatorname{med} (Y_i(1) - Y_i(0)), 
\end{align} 
Unlike the estimation of the ATE in \eqref{eq:ATE}, the estimation of the MTE in Definition \eqref{eq:MTE} is particularly challenging, because it is generally not identified from the marginal distributions of \(Y_i(1)\) and \(Y_i(0)\) alone. In particular, it depends on the joint distribution of \((Y_i(1),Y_i(0))\), while only one potential outcome is observed for each unit and crucially, two different joint distributions can share the exact same marginal distributions yet produce very different $\tau_{\operatorname{med}}$. Thus, additional assumptions on the dependence structure of the potential outcomes are required \citep{Addanki2024LimitsOA}.

The same issue remains for the conditional median treatment effect (CMTE) adapted to our leaf-wise potential outcomes setting and defined as 
\begin{align} \label{eq:CMTE}
        \tau_{\operatorname{med}}(X_i) &\coloneqq \operatorname{med}\bigl(Y_i(1) - Y_i(0)\mid X_i \bigr).  
\end{align}
Since $\tau_i = Y_i(1) - Y_i(0)$ is not observable, we cannot directly compute the median of these unobserved quantities.  The key distinction to keep in mind is that
\begin{align}\label{eq:median_of_differences}
  \tau_{\operatorname{med}}(X_i) = \operatorname{med} (Y_i(1) - Y_i(0) \mid X_i )\neq \operatorname{med}\bigl(Y_i(1) \mid X_i \bigr) - \operatorname{med}\bigl(Y_i(0) \mid X_i  \bigr),
\end{align}
which is the difference of conditional medians. The difference of conditional medians is identifiable from observed data \citep{leqi_median_2022}, but as \citet{Addanki2024LimitsOA} argue, it can diverge substantially from the median of individual differences because the coupling between $Y_i(1)$ and $Y_i(0)$ conditional on $X_i$ is unobservable.

\subsection{The Hodges–Lehmann estimator as splitting anchor}

Instead of using the right-hand side of \eqref{eq:median_of_differences} directly as a splitting target, we follow the literature on the Hodges--Lehmann (HL) estimator and the Wilcoxon rank-sum statistic \citep{hodges_estimates_1963, hoyland_HL, hollander_nonparametric_2014}. The HL estimator is most naturally interpreted as a rank-based estimator of a location-shift parameter. We therefore first establish its leafwise shift properties.
For a fixed \(\Pi\), define the leafwise median treatment effect in
the leaf containing \(x\) as
\[\tau_{\operatorname{med}}(x;\Pi):=\operatorname{med}\bigl(Y_i(1)-Y_i(0)\mid X_i\in l(x;\Pi)\bigr).\]

The connection between the shift parameter, $\tau_{\operatorname{med}}(x;\Pi)$ or the CATE, as well as its role in the subsequent splitting criteria, requires additional assumptions, which are discussed in the following subsections.
Let $\mathcal{S}_{1}$ and $\mathcal{S}_{0}$ denote the subsamples for the treatment and control group of sample $\mathcal{S}$, with corresponding sample sizes $N_{1}$ and $N_{0}$. Given a tree $\Pi$ and test point $x$, we define the HL estimator adapted to the leaf-wise potential outcomes setting as
\begin{equation}\label{eq:HL-leafwise}
  \hat\tau_{\mathrm{HL}}(x;\Pi)
=
\operatorname{med}\Bigl\{
Y_j-Y_m
:\,
j\in\mathcal S_1,\;
m\in\mathcal S_0,\;
X_j\in l(x;\Pi),\;
X_m\in l(x;\Pi)
\Bigr\}.
\end{equation}
To study the large-sample behavior of $\hat\tau_{\mathrm{HL}}(x;\Pi)$ in the leafwise setting, we
first introduce the following assumptions.

\begin{assump}\label{ass:HL_shift}
For a fixed $\Pi$ and a target point $x$, let $l(x;\Pi)$ be the leaf containing $x$.
Define the leafwise population conditional distributions
\[
F_{1,l}(z) := Pr\bigl(Y_i(1)\le z \mid X_i\in l(x;\Pi)\bigr),
\qquad
F_{0,l}(z) := Pr\bigl(Y_i(0)\le z \mid X_i\in l(x;\Pi)\bigr).
\]
Assume that these distributions are absolutely continuous, with corresponding densities $f_{1,l}$ and
$f_{0,l}$. In addition, suppose that:
\begin{itemize}
    \item[$i)$] the leafwise potential-outcome distributions differ by a location shift, that is,
    there exists a leaf-specific parameter $\Delta(x;\Pi)\in\mathbb{R}$ such that
    \[
    F_{1,l}(z)=F_{0,l}(z-\Delta(x;\Pi))
    \qquad \text{for all } z\in\mathbb{R};
    \]
     \item[$ii)$] under condition~(i), the common shifted density \(f_l\),
    defined by
    \[
    f_{0,l}(z)=f_l(z),
    \qquad
    f_{1,l}(z)=f_l(z-\Delta(x;\Pi)),
    \]
    is continuous and satisfies
    \[
    0<\int f_l^2(u)\,du<\infty;
    \]
    \item[$iii)$] as the total sample size tends to infinity, the leaf containing $x$ contains
    diverging numbers of treated and control observations, with
    \[
    N_{0,l(x;\Pi)} + N_{1,l(x;\Pi)} \to \infty
    \qquad\text{and}\qquad
    \frac{N_{0,l(x;\Pi)}}{N_{0,l(x;\Pi)} + N_{1,l(x;\Pi)}} \to \lambda \in (0,1).
    \]
\end{itemize}
\end{assump}
Condition (i) in Assumption \ref{ass:HL_shift} is the main restriction. This condition requires the treatment to shift the leafwise outcome distribution without changing its shape. Under this restriction, the treatment-control comparison is summarized by the scalar shift parameter \(\Delta(x;\Pi)\), making the Hodges--Lehmann estimator a natural rank-based anchor. The shift model holds automatically under a within-leaf constant effect (Lemma~\ref{lemma:Delta_equals_tau_leaf}), and we return in Section~\ref{sec:MSD} to the weaker reading of the criterion when it does not.
The following results, proved in the appendix, establish that $  \hat\tau_{\mathrm{HL}}(x;\Pi)$ is a well-behaved anchor for our splitting criteria.

\begin{theorem}\label{thm:theorem_consistency}
Under Assumption~\ref{ass:HL_shift}, the Hodges--Lehmann estimator
$\hat\tau_{\mathrm{HL}}(x;\Pi)$ associated with the Wilcoxon rank-sum statistic in the
Mann--Whitney form is asymptotically normal and centered around
$\Delta(x;\Pi)$. In particular, $\hat\tau_{\mathrm{HL}}(x;\Pi)$ is a consistent estimator of
$\Delta(x;\Pi)$.
\end{theorem}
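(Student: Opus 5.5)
The plan is to treat the leaf $l=l(x;\Pi)$ as fixed and to condition on the leaf counts $(N_{0,l},N_{1,l})$. Within the leaf, the treated outcomes $\{Y_j: j\in\mathcal S_1, X_j\in l\}$ and the control outcomes $\{Y_m: m\in\mathcal S_0, X_m\in l\}$ are then two independent i.i.d.\ samples. The independence comes from i.i.d.\ sampling, Assumption~\ref{assump:SUTVA} (observed outcome equals the relevant potential outcome), and Assumption~\ref{assump:unconfoundedness}. The one genuine reduction is to show that the conditional law of $Y_i$ given $D_i=d$, $X_i\in l$ equals $F_{d,l}$. This holds exactly under randomization with constant $p$, which covers the designs considered here. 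Under general unconfoundedness, I would state the result for the treated and control leafwise outcome distributions and note that the propensity must be constant within the leaf, or else the $F_{d,l}$ must be read as the observed-arm conditional laws. With this reduction in place, the problem is the classical two-sample shift model $G(z)=F(z-\Delta)$, and $\hat\tau_{\mathrm{HL}}(x;\Pi)$ in \eqref{eq:HL-leafwise} is exactly the Hodges--Lehmann shift estimator of \cite{hodges_estimates_1963}.

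Next I would write the Mann--Whitney statistic
\[
U(\delta)=\sum_{j,m}\mathds{1}\{Y_j-\delta>Y_m\},
\]
which is nonincreasing in $\delta$. The estimator $\hat\tau_{\mathrm{HL}}$ is the point where $U(\delta)$ crosses $N_{0,l}N_{1,l}/2$. Because $U$ is monotone, for any $t$ we have
\[
\Pr\Bigl(\sqrt{n}\,(\hat\tau_{\mathrm{HL}}-\Delta)\le t\Bigr) \approx \Pr\Bigl(U(\Delta+t/\sqrt n)\le N_0N_1/2\Bigr),
\]
up to ties, which have probability zero by absolute continuity. Here $n=N_{0,l}+N_{1,l}$. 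The CLT for $U$ then gives the limit. Under the shifted alternative $\Delta+t/\sqrt n$, $U$ is a two-sample U-statistic. Hoeffding's projection yields asymptotic normality with variance $N_0N_1(N_0+N_1+1)/12$ to first order, using Assumption~\ref{ass:HL_shift}(iii) with $\lambda\in(0,1)$. Its mean is $N_0N_1\Pr(Y_1-Y_0>t/\sqrt n)$ for $Y_1-Y_0$ with density $f_l*\tilde f_l$, whose value at $0$ is $\int f_l^2$. By continuity of $f_l$ and Assumption~\ref{ass:HL_shift}(ii), this gives the expansion $1/2-(t/\sqrt n)\int f_l^2+o(n^{-1/2})$.

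Combining the two steps gives
\[
\sqrt n(\hat\tau_{\mathrm{HL}}-\Delta)\Rightarrow N\Bigl(0,\;\bigl[12\lambda(1-\lambda)(\textstyle\int f_l^2)^2\bigr]^{-1}\Bigr).
\]
Consistency then follows, since $n\to\infty$ by (iii). Alternatively, consistency can be shown directly: $U(\Delta\pm\varepsilon)/(N_0N_1)$ converges to a value strictly below or above $1/2$, because $\int f_l^2>0$ makes the density of the difference positive near $0$.

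The main obstacle is the local expansion of the mean of $U$ together with the uniformity needed to apply the CLT under a drifting parameter. To handle this, I would verify that $\Pr(Y_1-Y_0>s)$ is differentiable at $s=0$ with derivative $-\int f_l^2$. This requires dominated convergence, using that $f_l$ is continuous and square-integrable, and Scheffé-type arguments for continuity of the convolution at zero. I would also check that the Hoeffding variance is continuous in the shift, so that the Lindeberg condition holds along the sequence. Rather than rederive everything, I would cite Hodges and Lehmann (1963) and Lehmann (1975) for these standard steps once the reduction to the two-sample shift model is established.
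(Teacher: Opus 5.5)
Your proposal is correct and follows essentially the same route as the paper. Both arguments invert the monotone Mann--Whitney statistic (the paper's Lemma~\ref{lem:lemma_bounds}), apply the two-sample U-statistic CLT under the contiguous local shift with variance constant $1/(12\lambda(1-\lambda))$ and drift $\int f_l^2$, and read off the limit $N\bigl(0,[12\lambda(1-\lambda)(\int f_l^2)^2]^{-1}\bigr)$. Your explicit reduction from the potential-outcome laws $F_{d,l}$ to the observed-arm laws (propensity constant within the leaf) is a hypothesis the paper uses only implicitly through randomization, so flagging it is a genuine improvement rather than a deviation.
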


\begin{proof}
See Appendix \ref{appendix:proofs}.
\end{proof}

In practice, recursive partitioning produces progressively smaller leaves, so finite-sample behavior also matters. A useful property is median unbiasedness, meaning that the
estimator is equally likely to overestimate or underestimate its target. The following lemma gives a general
bound. 

\begin{lemma} \label{lem:median_unbiased}
Let $\mathcal W_l=t(Y^0_l,Y^1_l)$ be the leafwise Wilcoxon rank-sum statistic. If
\[
Pr\bigl(t(Y^0_l,Y^1_l)=\xi_l \mid \Delta(x;\Pi)=0\bigr)=\delta,
\]
where $\xi_l$ is a symmetry point, then
\[
\frac12-\frac{\delta}{2}
\le
Pr\!\left(\hat\tau_{\mathrm{HL}}(x;\Pi)\le \Delta(x;\Pi)\right)
\le
\frac12+\frac{\delta}{2}.
\]
\end{lemma}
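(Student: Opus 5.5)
The plan is to reduce the probability statement about $\hat\tau_{\mathrm{HL}}(x;\Pi)$ to a statement about the leafwise Wilcoxon statistic under the null shift, using the classical Hodges--Lehmann inversion. Throughout, write $\Delta=\Delta(x;\Pi)$. Take $t$ in Mann--Whitney form,
\[
t(Y^0_l,Y^1_l+c)=\#\{(j,m): Y_j+c-Y_m>0\},
\]
where $j$ ranges over treated and $m$ over control observations in $l(x;\Pi)$. This is a nonincreasing step function of the shift $c$.

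\textbf{Step 1: inversion.} By the definition of $\hat\tau_{\mathrm{HL}}$ in \eqref{eq:HL-leafwise} as the median of the $N_{1,l}N_{0,l}$ Walsh-type differences, the event $\hat\tau_{\mathrm{HL}}\le \Delta$ is sandwiched between two events:
\[
\{t(Y^0_l,Y^1_l-\Delta)<\xi_l\}\subseteq\{\hat\tau_{\mathrm{HL}}\le\Delta\}\subseteq\{t(Y^0_l,Y^1_l-\Delta)\le\xi_l\}.
\]
Here $\xi_l=N_{1,l}N_{0,l}/2$ is the symmetry point. The reason is that $t(\cdot,Y^1_l-\Delta)$ counts the differences $Y_j-Y_m$ that exceed $\Delta$, and the median of these differences is at most $\Delta$ essentially iff at most half of them exceed $\Delta$. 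The boundary cases, including the even-count median convention, are absorbed by the weak versus strict inequality.

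\textbf{Step 2: null distribution.} Under Assumption~\ref{ass:HL_shift}(i), the shifted sample $Y^1_l-\Delta$ has the same distribution as the control sample, and the two samples are i.i.d. within the leaf because $\Pi$ is fixed. So $t(Y^0_l,Y^1_l-\Delta)$ has the law of $t(Y^0_l,Y^1_l)$ under $\Delta=0$. Exchangeability of the pooled sample makes this null law symmetric about $\xi_l$. Hence $P(t<\xi_l)=P(t>\xi_l)=(1-\delta)/2$, using $P(t=\xi_l\mid\Delta=0)=\delta$.

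\textbf{Step 3: conclude.} Combining the two steps gives
\[
\tfrac{1-\delta}{2}\le P(\hat\tau_{\mathrm{HL}}\le\Delta)\le\tfrac{1-\delta}{2}+\delta=\tfrac12+\tfrac\delta2,
\]
which is the claimed bound.

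\textbf{Main obstacle.} The hardest part is making the inversion in Step 1 exact. This means handling ties among the differences, the even-count median convention, and the jump of $t$ exactly at $\Delta$. Continuity from Assumption~\ref{ass:HL_shift} removes ties almost surely. I would first prove the inclusions by checking directly the order statistics of the sorted differences $D_{(1)}\le\dots\le D_{(n)}$ and counting how many lie strictly above $\Delta$. If a weak/strict mismatch remains, the statement's $\delta$-slack is exactly what absorbs it.
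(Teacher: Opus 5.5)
Your proposal is correct and is essentially the paper's argument. The paper invokes the classical Hodges--Lehmann bound, i.e.\ the inversion of Lemma~\ref{lem:lemma_bounds} at $a=\Delta(x;\Pi)$ combined with the null symmetry of the leafwise Wilcoxon statistic about $\xi_l$, and your Steps~1--3 spell this out; your Step~1 inclusions in fact hold deterministically, ties and the even-count convention included. Two small slips: $t(Y^0_l,Y^1_l+c)$ is nondecreasing, not nonincreasing, in $c$, and treating the observed leafwise arms as i.i.d.\ draws from $F_{1,l}$ and $F_{0,l}$ requires randomized (constant-propensity) assignment within the leaf, as the paper notes, not merely a fixed $\Pi$.
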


\begin{proof}
See Appendix \ref{appendix:proofs}.
\end{proof}

Lemma~\ref{lem:median_unbiased} shows that exact median unbiasedness obtains whenever the symmetry point is unattainable, so whenever $\delta=0$. When the symmetry point is attainable, the same lemma quantifies the possible deviation from exact median unbiasedness by the size of $\delta$. Theorem~\ref{thm:theorem_median_unbiased} applies this observation to show that $\hat\tau_{\mathrm{HL}}(x;\Pi)$ is median unbiased for $\Delta(x;\Pi)$. 

\begin{theorem}\label{thm:theorem_median_unbiased}
For a fixed $\Pi$ and a target point $x$, let $l(x;\Pi)$ be the leaf containing $x$.
Suppose that, under the null shift $\Delta(x;\Pi)=0$, the distribution of the leafwise Wilcoxon rank-sum statistic $\mathcal W_l$ is symmetric about
\[
\xi_l=\frac{N_{1,l}N_{0,l}}{2}.
\]
Then the Hodges--Lehmann estimator $\hat\tau_{\mathrm{HL}}(x;\Pi)$ is median unbiased for
$\Delta(x;\Pi)$ when $N_{1,l}N_{0,l}$ is odd, in the sense that
\[
Pr\!\left(\hat\tau_{\mathrm{HL}}(x;\Pi)\le \Delta(x;\Pi)\right)=\frac12.
\]
If \(N_{1,l}N_{0,l}\) is even, the estimator is approximately median unbiased, in the sense that its deviation from exact median unbiasedness is bounded by \(\delta/2\), with \(\delta\) defined in Lemma~\ref{lem:median_unbiased}.
\end{theorem}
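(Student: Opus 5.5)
The approach is to reduce Theorem~\ref{thm:theorem_median_unbiased} to Lemma~\ref{lem:median_unbiased} by working with the Mann--Whitney form of the statistic. For any candidate shift $d$, let $\mathcal W_l(d)=\#\{(j,m): Y_j-d>Y_m\}$, where $j$ ranges over treated and $m$ over control units in $l(x;\Pi)$. Three facts drive the argument.

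First, $\mathcal W_l(d)$ is a nonincreasing step function of $d$. It drops by one exactly at each pairwise difference $Y_j-Y_m$; by absolute continuity in Assumption~\ref{ass:HL_shift}, these differences are almost surely distinct.

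Second, $\hat\tau_{\mathrm{HL}}(x;\Pi)$ is the median of these $N_{1,l}N_{0,l}$ differences. This gives the standard duality: $\hat\tau_{\mathrm{HL}}\le d$ holds iff at most about half of the differences exceed $d$, i.e.\ iff $\mathcal W_l(d)\le \xi_l$. The boundary case $\mathcal W_l(d)=\xi_l$ is assigned to one side according to whether the median is a single order statistic or an average of two.

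Third, by the shift model (i), $\{Y_j-\Delta\}$ and $\{Y_m\}$ are i.i.d.\ from the same distribution $F_{0,l}$. Hence $\mathcal W_l(\Delta)$ has exactly the null distribution of $\mathcal W_l$ under $\Delta(x;\Pi)=0$, which is symmetric about $\xi_l$ by hypothesis.

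The main step evaluates the duality at $d=\Delta(x;\Pi)$:
\[
Pr\bigl(\hat\tau_{\mathrm{HL}}\le\Delta\bigr)=Pr\bigl(\mathcal W_l(\Delta)<\xi_l\bigr)+\theta\,Pr\bigl(\mathcal W_l(\Delta)=\xi_l\bigr),
\]
with $\theta\in[0,1]$ determined by the tie convention. Symmetry gives $Pr(\mathcal W_l<\xi_l)=Pr(\mathcal W_l>\xi_l)=(1-\delta)/2$.

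If $N_{1,l}N_{0,l}$ is odd, then $\xi_l=N_{1,l}N_{0,l}/2$ is not an integer. Since $\mathcal W_l$ is integer-valued, $\delta=0$ and the probability is exactly $1/2$. If the product is even, $\xi_l$ is attainable, and the display lies in $[\tfrac12-\tfrac\delta2,\tfrac12+\tfrac\delta2]$. This is precisely Lemma~\ref{lem:median_unbiased}, which I would invoke directly once the identification of $\mathcal W_l(\Delta)$ with the null statistic is made.

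I expect the main obstacle to be the careful bookkeeping in the duality step. When the number of differences is odd, the median is the order statistic of rank $(N_{1,l}N_{0,l}+1)/2$, and the event $\{\hat\tau_{\mathrm{HL}}\le\Delta\}$ must be shown to equal $\{\mathcal W_l(\Delta)\le (N_{1,l}N_{0,l}-1)/2\}$ exactly, with the strict versus weak inequalities matched. I would verify this by counting differences exceeding $\Delta$, using that almost surely no difference equals $\Delta$. In the even case, the averaged median can fall on either side, so only the bound from the lemma is claimed.
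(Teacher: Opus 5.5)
Your proposal is correct and takes essentially the same route as the paper. The paper's proof likewise notes that for odd $N_{1,l}N_{0,l}$ the symmetry point $\xi_l$ is a non-integer and hence unattainable, so $\delta=0$ and Lemma~\ref{lem:median_unbiased} yields exactly $\tfrac12$, while for even products the same lemma yields the $\delta/2$ bound. Your explicit duality $\{\hat\tau_{\mathrm{HL}}\le\Delta\}=\{\mathcal W_l(\Delta)<\xi_l\}$ in the odd case, and your identification of $\mathcal W_l(\Delta)$ with the null statistic via the shift model, spell out what the paper delegates to that lemma and to \citet{hodges_estimates_1963}. In the even case, $\theta$ should be read as the conditional probability $Pr(\hat\tau_{\mathrm{HL}}\le\Delta\mid\mathcal W_l(\Delta)=\xi_l)$, not as something fixed by a tie convention. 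Also note that treating the observed treated outcomes in the leaf as draws from $F_{1,l}$ relies on randomized assignment.
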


\begin{proof}
See Appendix \ref{appendix:proofs}.
\end{proof}

\subsection{Motivating Example: Splitting under Heavy Tails}
\label{sec:motivation}

Consider a leaf $l$ of a tree $\Pi$ grown on $\mathcal{S}^{tr}$ with $N_{1,l} = N_{0,l} = 10$
observations per treatment group, and let the data follow a constant additive treatment effect,
\begin{align}\label{eq:sharp-null}
\begin{split}
  Y_i(0) &= 3 + \varepsilon_i, \\
  Y_i(1) &= Y_i(0) + \tau, \qquad \tau = 2.  
\end{split}
\end{align}
Under \eqref{eq:sharp-null} the conditional average and conditional median
treatment effects in $l$ both equal $\tau$, and the leafwise shift model of Assumption \ref{ass:HL_shift}(i) holds with $\Delta(x; \Pi) = \tau$ by Lemma \ref{lemma:Delta_equals_tau_leaf}. Thus, the example isolates a simple setting in which the HL anchor is well defined. 

What is at stake here is not how leaf effects are estimated, but which partition is selected.
Honest leaf estimation in step (\ref{itm:third}) reports the difference in means on the held-out $\mathcal{S}^{est}$, and the median-based rules of this section leave that step untouched. The quantity a heavy-tailed draw can corrupt is the training-sample anchor $\hat\tau(l ; \mathcal{S}^{tr})$ entering the splitting criterion \eqref{eq:MSE_CT_H}.
We therefore compare two candidate anchors for that criterion, the training-sample difference-in-means estimator and the HL estimator,
\begin{align}\label{eq:two-estimators}
\begin{split}
  \hat\tau(l; \Pi) &= \bar Y_{1,l} - \bar Y_{0,l}, \\
  \hat\tau_{\mathrm{HL}}(l; \Pi)
    &= \operatorname{med} \{ Y_{j,l} - Y_{m,l}: 1 \le j \le N^{tr}_{1,l},\,
      1 \le m \le N^{tr}_{0,l} \},
\end{split}
\end{align}
where $\hat\tau_{\mathrm{HL}}(l; \Pi)$ is the median of all $N^{tr}_{1,l} \times N^{tr}_{0,l} = 100$ pairwise treated--control differences within $l$ and $\bar Y_{1,l}, \bar Y_{0,l}$ the corresponding training-sample outcome averages.  
The difference-in-means estimator targets $\tau$ because the treatment and control means differ by $\tau$ in population.
The HL estimator targets the same quantity because each pairwise treated--control difference equals $Y_{j,l}-Y_{m,l}=\tau+(\varepsilon_j-\varepsilon_m)$, and $\varepsilon_j-\varepsilon_m$ is symmetric about zero when $\varepsilon_j$ and $\varepsilon_m$ are independent draws from the same baseline distribution.
Hence, the population median of the pairwise differences is $\tau$, even if the baseline distribution itself is skewed. Both anchors in \eqref{eq:two-estimators} are therefore consistent for $\tau$ but they might differ in finite samples.

To see this, let the baseline be heavy-tailed with $\varepsilon_i$ in \eqref{eq:sharp-null} as
\begin{align}
\varepsilon_i \stackrel{\mathrm{iid}}{\sim} t_\nu
\end{align}
and $\nu = 3$, in particular $\Var(\varepsilon) = 3$.
A single observation that is $c$ units above the typical treated outcome increases the treatment-group mean by $c/N^{tr}_{1,l}$. For $N^{tr}_{1,l}=10$, this increase equals $c/10$ and choosing $c=10$ gives an increase of $1$, so $\hat\tau(l;\mathcal{S}^{tr})$ moves from $\tau=2$ to approximately $3$.
One such draw suffices, because the sample mean has breakdown zero. 
The same draw enters only $N^{tr}_{0,l}=10$ of the $100$ pairwise differences, so the median defining $\hat\tau_{\mathrm{HL}}(l;\Pi)$ is still taken over $90$ uncontaminated values and is left essentially unchanged \citep{hollander_nonparametric_2014}. We note that honesty does not fully close this gap. Step (\ref{itm:third}) removes the training-sample draw from the reported effect by re-estimating on $\mathcal{S}^{est}$, but it cannot remove it from $\Pi$, because the split has already been made and the partition is not revised. The mean-based criterion in \eqref{eq:MSE_CT_H} rewards apparent leaf-heterogeneity through $\hat\tau(l;\mathcal{S}^{tr})^2$ while the variance penalty does not offset this increase. Appendix \ref{app:motivation} provides further details for this behavior based on our setup in \eqref{eq:sharp-null}. Because \eqref{eq:MSE_CT_H} is maximized over a large number of candidate splits, and the maximum of a collection of heavy-tailed statistics is governed by its tails, the split that isolates the extreme draw is preferentially selected, yielding trees that track tail noise rather than genuine heterogeneity. Anchoring the split  criterion at $\hat\tau_{\mathrm{HL}}(l;\Pi)$ suppresses this behavior while still targeting $\tau$ under \eqref{eq:sharp-null}.

\subsection{Median Squared Deviation as a Robust Splitting Criterion}
\label{sec:MSD}

We now construct the Median Squared Deviation (MSD) criterion, the main median-based alternative to the honest MSE criterion in \eqref{eq:MSE_CT_H}. The MSD rule modifies only the split-selection objective in step~(\ref{itm:second}) by replacing the mean-based leaf anchor in the sample analog of \eqref{eq:MSE_treatment_effect} with the leafwise Hodges--Lehmann estimator from \eqref{eq:HL-leafwise}. Honest leaf estimation remains difference-in-means estimation on \(\mathcal S^{est}\), so the final target remains the leafwise CATE \(\tau(l;\Pi)\). The idea is that, when the conditional outcome distributions are
heavy-tailed, $\hat\tau(l;\mathcal S^{tr},\Pi)$ is volatile and the within-leaf variances that enter the honest penalty are inflated, both of which distort split selection. 

The HL estimator is consistent for the population location shift $\Delta(x;\Pi)$, and under the location-shift conditions in Appendix~\ref{appendix:HL_estimator} we have $\Delta(x;\Pi) = \tau(l;\Pi)$. 
When those conditions fail, the Hodges--Lehmann anchor estimates the location shift $\Delta(x;\Pi)$, which need not equal $\tau(l;\Pi)$, so the criterion no longer targets the CATE objective exactly and may rank splits suboptimally. This affects only partition selection. 
Because leaf effects are estimated by difference in means on $\mathcal{S}^{est}$ under every splitting rule that we consider in this paper, the resulting CATE estimates remain unbiased for $\tau(l;\Pi)$ on whatever partition is chosen, so a misaligned anchor costs estimation efficiency through a coarser or less informative partition, not validity of the estimate. Scenarios~S3 and~S4 in Section~\ref{sec:simulations} illustrate this violation, and the MSD forest retains the lowest bias and root-mean-square error there, which bounds the practical size of this efficiency cost on the designs we study.

Starting from the modified, infeasible MSE in Equation~\eqref{eq:MSE_treatment_effect}, we
construct a median squared deviation splitting rule by replacing the leafwise treatment effect
estimator $\hat\tau(x;\mathcal S^{est},\Pi)$ with the Hodges--Lehmann estimator
$\hat\tau_{\mathrm{HL}}(x;\mathcal S^{est},\Pi)$. The previous subsection established that
$\hat\tau_{\mathrm{HL}}(x;\Pi)$ estimates the leafwise location shift $\Delta(x;\Pi)$. This gives rise to the infeasible median squared deviation (MSD) splitting criterion based on the MSE in Equation~\eqref{eq:MSE_treatment_effect}: 
\begin{equation}\label{eq:mse_tau_med}
    \mathrm{MSE}_{\tau,\mathrm{med}}(\mathcal{S}^{te},\mathcal{S}^{est},\Pi)
    \coloneqq \frac{1}{N^{te}} \sum_{i \in \mathcal{S}^{te}}
      \Bigl[\bigl(\tau_i - \hat{\tau}_{\mathrm{HL}}(X_i; \mathcal{S}^{est}, \Pi)\bigr)^2 - \tau_i^2\Bigr],
\end{equation}
where $\hat{\tau}_{\mathrm{HL}}(X_i;\mathcal{S}^{est},\Pi)$ is the leaf-constant HL estimate assigned to unit~$i$. The subtraction of $\tau_i^2$ again preserves the relative ranking of candidate trees since $\tau_i^2$ does not depend on $\Pi$ or $\mathcal{S}^{est}$, as in \eqref{eq:MSE_treatment_effect}. Mirroring the honest derivation in Appendix~\ref{appendix:Honest_Splitting_CT} and using
Theorems~\ref{thm:theorem_consistency} and~\ref{thm:theorem_median_unbiased}, we obtain the honest objective
\begin{align}
\label{eq:honest_median}
\begin{split}
\hat{Q}^{\,\text{MSD, H}}_\tau(\mathcal{S}^{tr}, \Pi)
    &= \sum_{l \in \Pi} \frac{N_l^{tr}}{N^{tr}}
    \left[\hat{\tau}_{\mathrm{HL}}(l;\mathcal{S}^{tr})^2
    -  \left(\frac{1}{N_l^{tr}} + \frac{1}{N_l^{est}}\right) \widehat{V}_l \right] 
\end{split}   
\end{align}
where $\hat\tau_{\mathrm{HL}}(l;\mathcal{S}^{tr})$ is the within-leaf HL estimate on the
training sample and $\widehat V_l$ estimates its asymptotic variance. The $(1/N_l^{tr}+1/N_l^{est})$ penalty weight is identical to that of \eqref{eq:MSE_CT_H}. Further details on the construction are provided in Appendix~\ref{appendix:Honest_Splitting_MSD}.

Two features distinguish \eqref{eq:honest_median} from the mean-based honest criterion $\hat Q^{\,\text{MSE,H}}_\tau$ in \eqref{eq:MSE_CT_H}.
First, the heterogeneity reward is anchored on the training-sample estimate $\hat\tau_{\mathrm{HL}}(l;\mathcal{S}^{tr})^2$ rather than the difference in means.
Second, the variance penalty replaces the treatment and control group sample variances $\hat S^2_d / N^{est}_l$ with $\widehat V_l$, an estimated variance appropriate for a median-based estimator.
Because $\widehat V_l$ requires kernel density evaluations and is costly within a tree-growing loop, our implementation maximizes an auxiliary splitting criterion to preserve the qualitative effect of penalizing leaves with weak effective signal-to-noise while avoiding density estimation at every candidate split. 
\begin{align}\label{eq:Q_MSD_code_main}
\begin{split}
    \widehat Q^{\,\text{MSD}}_\tau\left(\mathcal{S}^{tr}, \Pi\right)
    &=\sum_{l\in\Pi}\frac{N_l^{tr}}{N^{tr}}
       \!\left[\hat\tau_{\mathrm{HL}}(l;\mathcal{S}^{tr})^2\,-\Bigl(
       \hat\tau_{\mathrm{HL}}\left(l;\mathcal{S}^{tr}\right)-\hat\tau\left(l;\mathcal{S}^{tr}\right)\Bigr)^2\right]
\end{split}
\end{align}
The honest MSD criterion in \eqref{eq:honest_median} reproduces the variance-bias decomposition of the mean-based criterion in \eqref{eq:MSE_CT_H}, with the Hodges--Lehmann variance in place of the outcome-variance penalty. The implemented criterion in \eqref{eq:Q_MSD_code_main} is the leading-order surrogate of \eqref{eq:honest_median}. This surrogate drops the explicit variance penalty in favor of the squared difference between Hodges--Lehmann estimates and standard mean-based leafwise estimates while reproducing the decomposition only up to a residual of order $O(1/N_l^{est})$, derived in Appendix~\ref{appendix:Honest_Splitting_MSD}. It is this surrogate that the simulations of Section~\ref{sec:simulations} evaluate.

\subsection{Two Baseline Robust Splitting Criteria}
\label{sec:MAD_LMS}

We complement the previous discussion with two additional splitting rules that are robust by construction but differ from MSD in their theoretical role. First, we consider the median absolute deviation (MAD) rule that uses the absolute difference between the mean-based leaf estimate and the Hodges--Lehmann leaf estimate, and second a least median squares (LMS) rule built on fit-based
partitioning \citep{ZeileisModelBasedRP, athey_recursive_2016, roy_robustness_2012}. We treat both as heuristics and report them in
Section~\ref{sec:simulations} as comparison rules for MSD, since they do not yield the same honest EMSE decomposition. Appendix~\ref{appendix:MAD_LMS} provides further details about MAD and LMS.

Following \citet{roy_robustness_2012}, the absolute-deviation analog of the
infeasible CATE-targeting MSE in \eqref{eq:MSE_treatment_effect} is
\begin{equation}\label{eq:infeasible_MAD}
    \mathrm{MAD}_\tau(\mathcal{S}^{tr})
    \;:=\;\frac{1}{N^{tr}}\sum_{i\in\mathcal{S}^{tr}}
    \bigl|\tau_i - \tau_{\mathrm{med}}(X_i)\bigr|,
\end{equation}
where $\tau_{\mathrm{med}}(X_i)$ is the CMTE from \eqref{eq:CMTE}.
\eqref{eq:infeasible_MAD} is doubly infeasible: $\tau_i$ is unobserved (as in
the MSE), and $\tau_{\mathrm{med}}(X_i)$ is not point-identified without
restrictions on the joint distribution of potential outcomes
\citep{Addanki2024LimitsOA}. We obtain a feasible rule by replacing $\tau_i$ by
the mean-based plug-in $\hat\tau(X_i;\mathcal{S}^{tr},\Pi)$ and
$\tau_{\mathrm{med}}(X_i)$ by the leafwise HL estimator
$\hat\tau_{\mathrm{HL}}(X_i;\mathcal{S}^{tr},\Pi)$:
\begin{equation}\label{eq:MAD_split}
    \widehat Q^{\,\mathrm{MAD}}_\tau(\mathcal{S}^{tr},\Pi)
    \;:=\;\frac{1}{N^{tr}}\sum_{i\in\mathcal{S}^{tr}}
    \bigl|\hat\tau(X_i;\mathcal{S}^{tr},\Pi)
    -\hat\tau_{\mathrm{HL}}(X_i;\mathcal{S}^{tr},\Pi)\bigr|.
\end{equation}
Splits are chosen by minimizing \eqref{eq:MAD_split}, so the criterion favors partitions for which the mean-based and Hodges--Lehmann leaf estimates are close.   
Large values of \eqref{eq:MAD_split} indicate leaves in which the mean-based treatment-effect estimate differs substantially from the rank-based location estimate, which can occur when tail observations influence the leaf mean.

The interpretation of \eqref{eq:MAD_split} differs from that of the infeasible criterion in \eqref{eq:infeasible_MAD}. The infeasible \eqref{eq:infeasible_MAD} is defined in terms of individual treatment effects $\tau_i$ and the 
CMTE $\tau_{\mathrm{med}}(X_i)$, whereas the feasible \eqref{eq:MAD_split}
compares two leaf-level estimators of $\tau(l;\Pi)$ (the mean-based
$\hat\tau(X_i;\mathcal{S}^{tr},\Pi)$ and the location-shift-based $\hat\tau_{\mathrm{HL}}(X_i;\mathcal{S}^{tr},\Pi)$) on
$\mathcal{S}^{tr}$. The two objects coincide only under the location-shift
conditions of Appendix~\ref{appendix:HL_estimator}, where
$\Delta(x;\Pi) = \tau_{\mathrm{med}}(x) = \tau(l;\Pi)$. Outside that regime,
\eqref{eq:MAD_split} is a robustness diagnostic and not a sample analog of
\eqref{eq:infeasible_MAD}.

Next, we motivate LMS as a robust analogue of fit-based splitting. The adaptive fit-based criterion of \citet{ZeileisModelBasedRP}, mirroring
\citet{athey_recursive_2016}, selects splits by within-leaf
goodness-of-fit of an outcome model $\hat\mu(D_i,X_i;\mathcal{S}^{tr},\Pi)$
consisting of an intercept and a treatment indicator:
\begin{equation}\label{eq:MSE_fit_A}
    \mathrm{MSE}^{\mathrm{fit\text{-}A}}_\mu(\mathcal{S}^{tr},\mathcal{S}^{tr},\Pi)
    \;:=\;\frac{1}{N^{tr}}\sum_{i\in\mathcal{S}^{tr}}
    \left[\left(Y_i-\hat\mu(D_i,X_i;\mathcal{S}^{tr},\Pi)\right)^2-Y_i^2\right].
\end{equation}
Replacing the mean squared residual within each leaf by the median squared residual \citep{rousseeuw_least_1984} yields the LMS criterion,
\begin{equation}\label{eq:LMS_split}
\widehat Q_\mu^{\mathrm{LMS}}
(\mathcal S^{tr},\Pi)
:=
\sum_{l\in\Pi}
\frac{N_l^{tr}}{N^{tr}}
\underset{i\in\mathcal S_l^{tr}}{\operatorname{med}}
\left[
\left(
Y_i-
\hat\mu(D_i,X_i;\mathcal S^{tr},\Pi)
\right)^2
\right],
\end{equation}
where the median squared residual is computed separately within each leaf. Candidate splits are chosen by minimizing \eqref{eq:LMS_split}, or equivalently by maximizing the reduction in the criterion relative to the parent node.
Like \eqref{eq:MSE_fit_A}, LMS targets fit of $Y$ given $(D,X)$ within leaves, not treatment effect heterogeneity and it inherits the limitations of fit-based partitioning discussed in
\citet[\S 3.1.3]{athey_recursive_2016}. 

Similar to the transformed-outcome trees described in
\citet{athey_recursive_2016}, MAD and LMS are best viewed as benchmark splitting rules with a more practical motivation. They do not mimic the honest EMSE derivation of the causal tree criterion, but they provide simple robust alternatives that can be implemented within the same causal forest algorithm. MAD focuses split selection on the absolute deviation between mean-based and rank-based leaf estimates, thereby highlighting partitions whose estimated effects are sensitive to tail observations. LMS robustifies fit-based partitioning by replacing the
mean of squared residuals with their median. 

\section{Simulation Study}
\label{sec:simulations}

The simulation study builds on the non-linear treatment effect setting of
\citet{wager_estimation_2018} and extends the data-generating process to analyze the conditions of
Assumption~\ref{ass:HL_shift}.  We generate the potential outcomes $Y_i(0), Y_i(1)$ as
\begin{align}\label{eq:PO_dgp}
\begin{split}
    Y_i(0) &= \varepsilon_i,\\
    Y_i(1) &= Y_i(0) + \tau(X_i) + U_i,
\end{split}
\end{align}
where $\varepsilon_i$ is the outcome noise, $\tau(X_i)$ is the CATE from Equation \eqref{eq:CATE}, and $U_i$ is a mean-zero individual treatment effect component independent of $(X_i,\varepsilon_i)$. The treatment indicator $D_i$ is randomized as $D_i\sim\operatorname{Bernoulli}(0.5)$ with $D_i\perp X_i$ and we consider a constant propensity score with $Pr(D_i=1|X_i=x)=0.5$.
The covariates are drawn independently as $X_i\sim\mathrm{Uniform}[0,1]^{K}$ as in \citet{wager_estimation_2018} and we use $K=10$ and a sample size of $N=1000$ as defaults throughout this section. In contrast to \citet{wager_estimation_2018}, we vary three components of the data-generating process across scenarios:
\begin{enumerate}
    \item[(a)] The noise distribution $\varepsilon_i$, either standard
    Gaussian or Student's $t_3$,
    \begin{equation*}
        \varepsilon_i\sim\mathcal{N}(0,1)
        \qquad\text{or}\qquad
        \varepsilon_i\sim t_3.
    \end{equation*}
    \item[(b)] The non-linear treatment effect function $\tau(x)$. In the
    non-sparse setting, $\tau(x)=\zeta(x_1)\,\zeta(x_2)$ with, in general, 
    \begin{equation*}
        \zeta(v) = 1 + \frac{1}{1+\exp\!\left(-20\bigl(v-\tfrac{1}{3}\bigr)\right)},
    \end{equation*}
    such that $\tau(x)$ depends multiplicatively on the first two covariates, $x_1$ and $x_2$, only. In the sparse setting, we add
    $10\cdot\mathds{1}(x_1^2+x_2^2>1.2^2)$ to $\tau(x)=\zeta(x_1)\,\zeta(x_2)$. The sparse region $\{x:x_1^2+x_2^2>1.2^2\}$ is the upper-right corner of the unit square in the first two coordinates, so the extreme responders are a minority subpopulation rather than a bulk effect.
    \item[(c)] The ITE component $U_i$, set either to zero (so that
    $Y_i(1)-Y_i(0)$ is deterministic given $X_i$) or drawn from a
    mean-centered log-normal,
    \begin{equation*}
        U_i = 2\left(\exp(Z_i) - \exp(\tfrac{1}{2})\right),\qquad
        Z_i\sim\mathcal{N}(0,1),
    \end{equation*}
    which is right-skewed with $\mathbb{E}[U_i]=0$ but
$\operatorname{med}(U_i)=2\!\left(1-\exp\!\left(\tfrac{1}{2}\right)\right)<0$. The non-zero specification places $U_i$ in the role of the unmeasured effect modifier studied by \citet{post2024flexible} and \citet{PostvandenHeuvel+2025}, in an additive form consistent with their setup. \citet{PostvandenHeuvel+2025} emphasize that flexible CATE estimators silently miss the resulting gap between ITE and CATE whenever $U_i$ is non-degenerate. The independence $U_i\perp(\varepsilon_i,X_i)$ imposed here corresponds to their identification setup, and
the log-normal shape of $U_i$ mirrors an illustrative ITE distribution in \citet{PostvandenHeuvel+2025}.
\end{enumerate}

We consider four scenarios, summarized in Table~\ref{tab:scenarios}.
Scenario \textbf{S1} serves as the benchmark design. It uses Gaussian
outcome noise, a smooth treatment effect with no sparse region, and no residual individual effect component, which places it closest to the idealized leafwise location shift of Assumption~\ref{ass:HL_shift}. The only departure from an exact location shift arises through the smooth variation of $\tau(X_i)$ inside each leaf, and this departure vanishes as the partition refines, so S1 is the setting in which the MSE criterion is expected to perform well.

Scenario \textbf{S2} differs from S1 only in replacing the Gaussian noise with Student's $t_3$ noise. Because $\Var(t_3)=3$, this both thickens the tails and triples the noise variance relative to the unit-variance Gaussian, so S2 stresses robustness on both counts while leaving the treatment effect structure unchanged. Condition~(i) of Assumption~\ref{ass:HL_shift} therefore continues to hold in the same asymptotic sense as in S1, since changing the noise distribution leaves the within-leaf shift structure unchanged.

Scenario \textbf{S3} retains Gaussian noise and $U_i = 0$ but introduces the sparse region of extreme responders in $\tau(x)$. Leaves that cross the responder boundary then contain a bimodal mixture of baseline and extreme effects, which produces sharp heterogeneity within the leaf and a clear departure from the location shift model. Because the boundary is fixed and the offending leaves shrink in measure as the forest refines, this violation is a property of the partition rather than of the data generating process. 

Scenario \textbf{S4} keeps the smooth treatment effect of S1 but adds a skewed component $U_i$ with mean zero. Since $U_i$ is independent of $X_i$ with $\mathbb{E}[U_i] = 0$, the pointwise CATE remains $\tau(x)$, yet the skew inflates and reshapes the treated potential outcome distribution relative to the control distribution, so the mean and median individual treatment effects no longer coincide. In contrast to S3, this violation is structural and persists at every leaf as the sample size grows.\footnote{Besides the violation of condition (i) in Assumption \ref{ass:HL_shift}, scenario~S4 also violates the symmetry condition~(ii) of
Proposition~\ref{prop:Delta_equals_tau_equals_tauMed}, since the skewed $U_i$
makes the within-leaf distribution of $Y_i(1)-Y_i(0)$ asymmetric. Table \ref{tab:scenarios}
records condition~(i) in Assumption \ref{ass:HL_shift} only, for comparability across scenarios.}

\begin{table}[htbp]
\centering
\caption{Simulation scenarios.}
\label{tab:scenarios}
\begin{tabular}{lcccc}
\toprule
Scenario & $\varepsilon_i$ & $\tau(X_i)$ & $U_i$
& Violation of condition (i) in Assumption~\ref{ass:HL_shift} \\
\midrule
\textbf{S1} & $\mathcal{N}(0,1)$ & non-sparse & $0$
& none \\
\textbf{S2} & $t_3$ & non-sparse & $0$
& none \\
\textbf{S3} & $\mathcal{N}(0,1)$ & sparse & $0$
& within-leaf effect heterogeneity at boundary \\
\textbf{S4} & $\mathcal{N}(0,1)$ & non-sparse & skewed
& non-degenerate ITE inflates and skews $F_{1,l}$ \\
\bottomrule
\end{tabular}
\begin{notes}
The error term $\varepsilon_i$, the treatment effect function $\tau(X_i)$, and the ITE component $U_i$ determine the potential outcomes in \eqref{eq:PO_dgp}. The leafwise location-shift condition~(i) of Assumption~\ref{ass:HL_shift} holds only in S1 and S2. The regularity
conditions~(ii) and~(iii) hold by construction in every scenario.
\end{notes}
\end{table}

We illustrate the CATE and ATE results in Figures~\ref{fig:CATE_precision_s1s4} and~\ref{fig:ATE_precision}. Appendix~\ref{app:add_sim_res} reports sensitivity to sample size (Tables~\ref{tab:cate_samplesize} and~\ref{tab:ate_samplesize}) and to covariate dimension (Tables~\ref{tab:cate_covariatesize} and~\ref{tab:ate_covariatesize}). Further details of the design are presented in Appendix~\ref{appendix:MC_Sim} while Table \ref{tab:runtime} compares the algorithm runtime of causal forests with different splitting criteria. Confidence intervals around the causal forest CATE estimates use an analog of the bootstrap of little bags (BLB) \citep{sexton_standard_2009, athey_generalized_2019} with scaling. Appendix~\ref{appendix:BLB} presents the implementation and compares BLB with and without scaling in Table~\ref{tab:blb_scaling_ci}.

The MSD results throughout this section use the implemented surrogate criterion \eqref{eq:Q_MSD_code_main}, which approximates the honest criterion \eqref{eq:honest_median} to leading order. Therefore, the simulations evaluate the practical implementation of MSD rather than the exact formal honest criterion. Figure~\ref{fig:CATE_precision_s1s4} shows that this surrogate MSD splitting criterion attains the lowest absolute bias in all four scenarios. In S1 and S2, which are closest to the location-shift setting in Assumption~\ref{ass:HL_shift} and where the MSE splitting criterion is asymptotically optimal for $\tau(x)$, MSD nevertheless lowers root-mean-square error (RMSE) by roughly 13 to 15 percent relative to the MSE splitting criterion in Table~\ref{tab:cate_samplesize}. The MAD and LMS rules perform worse than both MSE and MSD. MAD penalizes differences between the mean-based and Hodges--Lehmann leaf estimates, but does not reward treatment-effect heterogeneity. LMS chooses splits based on outcome fit rather than treatment-effect heterogeneity. As a result, both rules are less directly aligned with variation in $\tau(x)$ than MSD.

Empirical coverage in Figure~\ref{fig:CATE_precision_s1s4} shows the sharpest contrast between MSD and the alternatives. MSD stays closest to the nominal $0.95$ level in the designs where MSE undercovers (S1, S3), while the MAD and LMS rules undercover already in the rather well-behaved scenario S1 around $0.70$. The undercoverage of MAD and LMS confirms that their corresponding BLB variance estimators are poorly calibrated, consistent with the absence of a clean identification result for those criteria. In S1, MSD overcovers at around $0.99$, and in S4 all four methods overcover, reflecting the additional variance contributed by the skewed $U_i$, which inflates absolute interval widths uniformly across methods. The price of MSD's conservative coverage is visible in the bottom row, where MSD intervals are systematically about $10$ to $15$ percent wider than MSE intervals. Across all four scenarios, MSD therefore trades a modest width premium for improved coverage, while simultaneously delivering the best point estimate precision.
\begin{figure}[htbp]
  \centering
    \caption{Precision and confidence interval coverage of CATE estimates from causal forests with different splitting criteria.}
  \includegraphics[width=\textwidth]{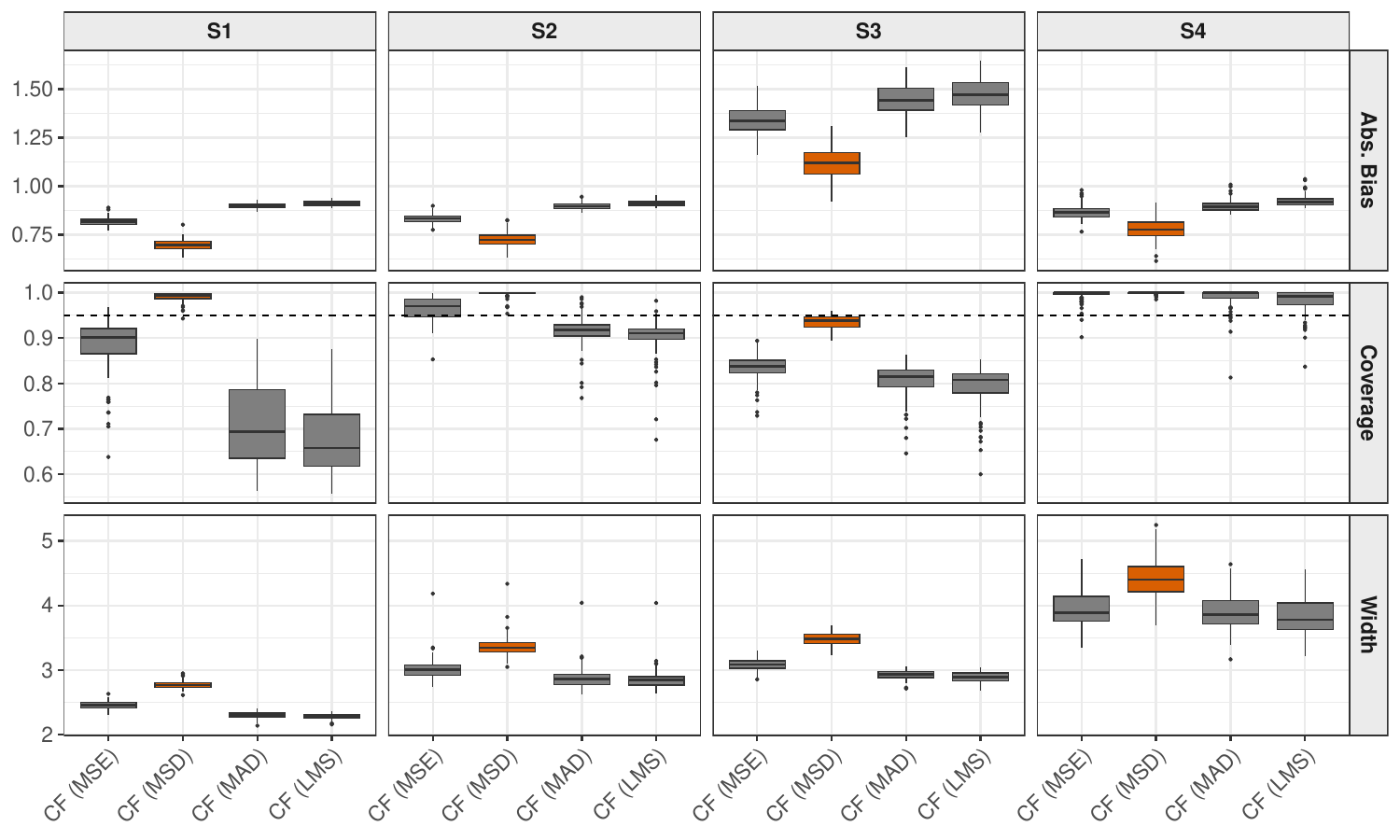}
  \begin{notes}
    Per-replication absolute bias (top row) of CATE estimates as well as empirical coverage (middle row) and width (bottom row) of 95\% confidence intervals for the CATE estimates from causal forests with different splitting criteria across $MC = 100$ Monte Carlo replications ($N = 1000$, $K = 10$). Each panel corresponds to one of the four simulation scenarios defined in Table~\ref{tab:scenarios}. Splitting criteria within the causal forest are: MSE in \eqref{eq:MSE_CT_H}, MSD in \eqref{eq:Q_MSD_code_main} and highlighted in orange, MAD in \eqref{eq:MAD_split}, and LMS in \eqref{eq:LMS_split}. Boxes span the interquartile range (IQR), whiskers extend to $1.5 \times \text{IQR}$ and dots mark outlying replications. Appendix \ref{app:add_sim_res} reports full results in tabular format.
  \end{notes}
  \label{fig:CATE_precision_s1s4}
\end{figure}

Part of MSD's advantage is not directly attributable to robustness. In S1, with Gaussian noise and no outliers to resist, one would expect the Hodges--Lehmann anchor to be marginally less efficient than the mean, since robust location estimators trade efficiency at the Gaussian model for resistance to contamination \citep{huber_robust_2009}. For the Hodges--Lehmann estimator this cost is small, with an asymptotic relative efficiency under normality that is close to that of the mean \citep{hodges_estimates_1963}. The superior performance of MSD in S1 therefore suggests that its conservative split selection, which avoids aggressive splitting in response to small-sample fluctuations, provides a regularization benefit that is distinct from tail robustness. The analysis does not disentangle the effects of regularization and robustness with a separate baseline. The observed advantage in S1 should therefore be interpreted as consistent with regularization rather than as direct evidence for the Hodges--Lehmann anchor. The contribution of robustness is instead indicated by changes in the margin: it is roughly constant across S1 and S2 and widens in S3, precisely the scenario in which asymmetric within-leaf contamination pulls mean-based splits toward spurious heterogeneity. A regularization effect would not concentrate in the contaminated design, so this increment is attributable to the robust anchor rather than to conservative splitting. In S4, the skewed but mean-zero $U_i$ leaves the pointwise CATE equal to $\tau(x)$, so the difference-in-means leaf estimates remain unbiased for $\tau(x)$ under every splitting rule. The skew inflates and reshapes $F_{1,l}$, and the HL splitting anchor may be biased for $\tau(l;\Pi)$ under the failed location shift. However, this affects only which splits are chosen, not the difference-in-means leaf estimates, so the MSD point estimates carry no median-shift bias and MSD is still superior, because its conservative splitting reduces estimation variance.
\begin{figure}[htbp]
  \centering
  \caption{Absolute bias of ATE estimates from causal forests and competing estimators.}
    \label{fig:ATE_precision}
  \includegraphics[width=\textwidth]{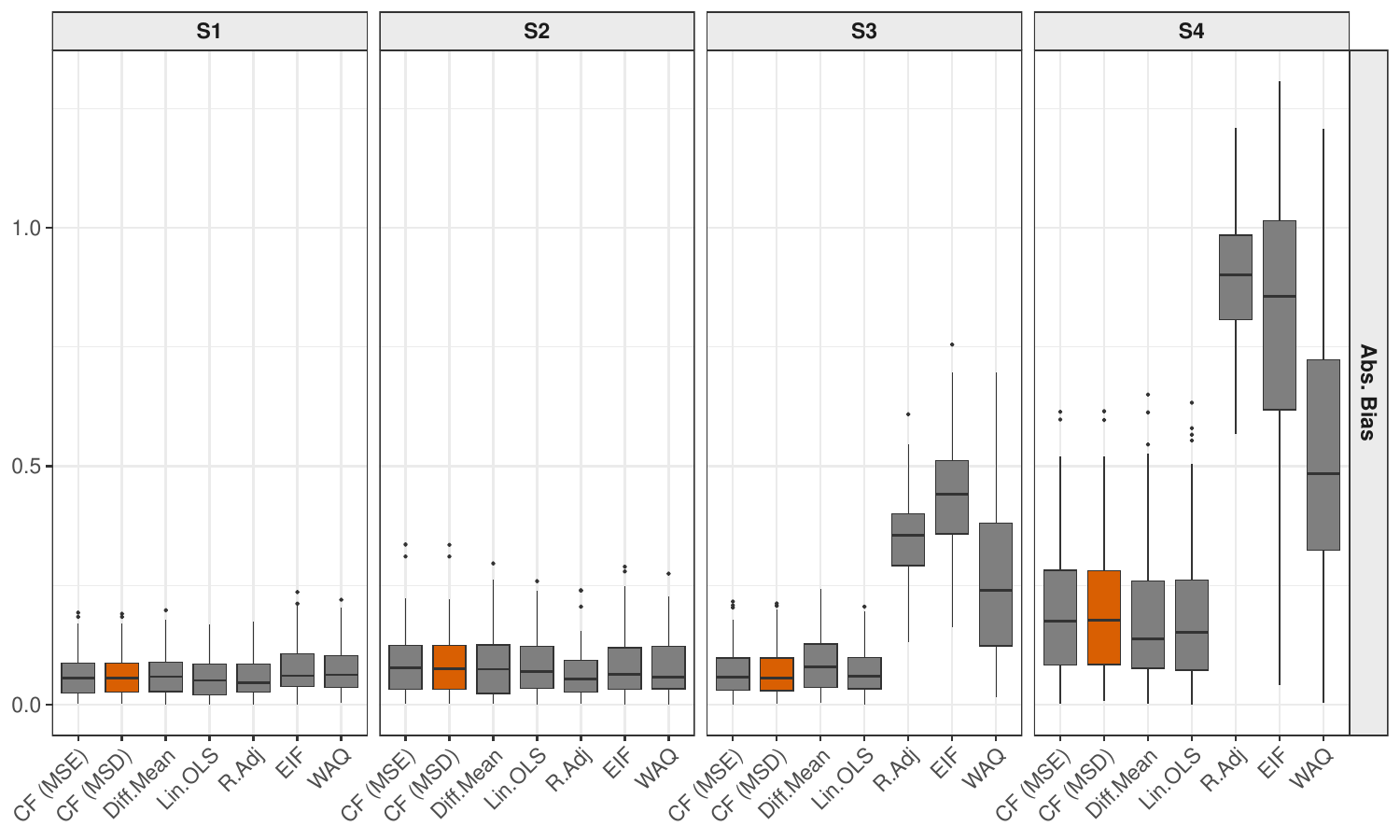}
  \begin{notes}
      Per-replication absolute bias of ATE estimates from causal forests across $MC = 100$ Monte Carlo replications ($N = 1000$, $K = 10$). For CF (MSE) and CF (MSD), we use cross-fitted AIPW \citep{robins_estimation_1994, chernozhukov_doubledebiased_2018} with causal forest CATE estimates as explained in Appendix \ref{appendix:AIPW_ATE_CF}. Further reported ATE estimators are: the difference-in-means estimator (Diff.Mean), OLS with interaction (Lin.OLS) proposed in \cite{lin_agnostic_2013}, the rank-based regression-adjustment estimator (R.Adj) of \cite{ghosh_robustness_2026}, and the efficient-influence-function (EIF) and weighted-average-of-quantiles (WAQ) estimators used in \cite{Athey2023}. We provide further information about these competing estimators in Appendix \ref{appendix:ATE_estimators}. Scenarios S1--S4 are as defined in Table~\ref{tab:scenarios} and the splitting criteria within the causal forests are MSE in \eqref{eq:MSE_CT_H} and MSD (coloured in orange) in \eqref{eq:Q_MSD_code_main}. Boxes span the interquartile range (IQR) and the whiskers extend to $1.5 \times \text{IQR}$ while dots mark outlying replications.
  \end{notes}
\end{figure}

For the ATE, the splitting rule within the causal forest is immaterial for precision and coverage results. CF (MSD) and CF (MSE) give near-identical ATE estimates and confidence interval widths across all four scenarios (Tables~\ref{tab:ate_samplesize} and~\ref{tab:ate_covariatesize} in Appendix \ref{app:add_sim_res}), since the splitting choice washes out once CATE estimates are aggregated by AIPW and is further discussed in Appendix \ref{appendix:AIPW_ATE_CF}. This holds true even for the noisier CATE estimates of CF (MAD) and (LMS), which are not reported here for brevity due to their similar ATE estimates, comparable to CF (MSD) and CF (MSE).
The informative contrast that we present here is with the competing estimators from the simulation study in \citet{ghosh_robustness_2026} and described in Appendix \ref{appendix:ATE_estimators}. The difference-in-means estimator and OLS with interactions \citep{lin_agnostic_2013} stay
centered on the mean ATE in every scenario, holding nominal coverage in S1 to
S3 and over-covering in S4 along with the causal forests, as the heavy $U_i$ inflates all variances. The rank- and quantile-based estimators (Rosenbaum's adjustment (R.Adj) \citep{ghosh_robustness_2026}, efficient-influence-function (EIF) and weighted-average-of-quantiles (WAQ) estimators \citep{Athey2023}) instead lose coverage in S3 and S4. These undercoverage issues are an estimand mismatch, not an efficiency loss. Each of the rank- and quantile-based estimators targets a median, rank, or quantile contrast that equals the mean ATE only under symmetry, and the sparse responders of S3 and the skewed individual effects of S4 break that equality, so the intervals concentrate around the wrong target. Both causal forests stay centered on the mean ATE because MSD confines its robustness to split selection and estimates leaves by the difference in means. Robust partitioning therefore buys outlier resistance without the estimand drift that pure median- or quantile-based estimators incur under skew.

\section{Empirical Applications}
\label{sec:empappl}

We provide two empirical applications that instantiate the skewness contrast of the simulation study. The Progresa vote share \citep{de_la_o_conditional_2013} in the first empirical study in Subsection \ref{sec:progresa} is a strongly right-skewed outcome variable. This case aligns with scenario~S4 of Table \ref{tab:scenarios} in our simulation study and predicts the choice of the splitting rule to reshape the CATE distribution, while the AIPW-aggregated ATE is unaffected. However, rank- and quantile-based estimators drift from the mean ATE due to estimand mismatch. The second application in Subsection \ref{sec:ACTG_175} re-visits a study of antiretroviral treatments in HIV-positive adults \citep{hammer_trial_1996, leqi_median_2022}. The outcome variable, CD4
count, is closer to symmetric and matches the scenarios in cases S1 and S2 of Table \ref{tab:scenarios}, in which all methods are expected to agree regarding ATE estimation results. Appendix \ref{append:emp_appl} provides supplementary material for both empirical applications.

\subsection{Progresa}
\label{sec:progresa}

We examine data from the randomized rollout of Mexico’s conditional cash transfer program, Progresa, which was designed as a social policy intervention and provides an opportunity to study its political consequences, as analyzed in \cite{de_la_o_conditional_2013}. In the original experiment, eligible villages were randomly assigned to begin receiving program benefits either well before (early treatment) or shortly before (delayed control) the 2000 presidential election, thereby generating plausibly exogenous variation in program exposure at the local level. Following the empirical strategy in \cite{de_la_o_conditional_2013}, we aggregate the data to the precinct level (417 observations) and focus on electoral support for the incumbent party in the 2000 election, measured as its vote share among eligible voters as the outcome variable. Figure \ref{fig:outcome_distribution_progresa} in Appendix \ref{append:emp_appl} illustrates the right-skewed nature of the target variable. 

We note that the Progresa outcome is a vote share bounded in $[0,100]$, so the leafwise location-shift condition~(i) of Assumption~\ref{ass:HL_shift} cannot hold exactly: a pure shift would move probability mass beyond the support boundary.
By the leaf-estimation argument of Section~\ref{sec:MSD}, this does not bias the MSD point estimates, which are differences in means on the estimation sample, and degrades only the interpretation of the splitting anchor to a robust proxy for $\tau(l;\Pi)$.
The analysis adjusts for pre-treatment socioeconomic and political characteristics, including measures of poverty, historical population size, prior voter turnout, and past party vote shares, along with village fixed effects, to improve precision and account for baseline differences. This design leverages the randomized timing of program implementation to identify the treatment effect of exposure to Progresa on subsequent electoral behavior.
\cite{ghosh_robustness_2026} reanalyze the study in \cite{de_la_o_conditional_2013} to apply its proposed regression-adjusted estimator based on \cite{Rosenbaum1993}. In our analysis, reported in Table~\ref{tab:progresa_estimates}, the difference-in-means estimate of 3.62 is borderline, with a 95\% interval of $[-0.05, 7.30]$ that only just includes zero, and OLS with interaction \citep{lin_agnostic_2013} is the single estimator returning a significant positive effect at 4.21 with interval $[0.24, 8.19]$.
Every estimator that adjusts for the skewed outcome attenuates the effect to between $1.3$ and $2.2$ and finds it statistically insignificant, including Rosenbaum's regression adjustment, the EIF and WAQ estimators, and both causal forests. Rosenbaum's adjustment gives the narrowest interval of these. The two forests agree closely, at $1.47$ for MSE and $1.48$ for MSD, which matches the simulation finding in Figure \ref{fig:ATE_precision} that the splitting rule has little effect on the aggregated ATE estimate. 
That a statistically significant positive treatment effect appears only under an estimator that is exposed to the skewed tail and vanishes under every robust adjustment matches the broader reassessment of this experiment. \citet{imai_nonpartisan_2020} reanalyze the Progresa rollout of \citet{de_la_o_conditional_2013} under a bias-corrected setup and find no electoral effect on incumbent support, so our robust estimates should be read as further evidence that median-based splitting does not manufacture an outlier-driven treatment effect.
\begin{table}[htbp]
\centering
\caption{ATE estimates of the effect of early Progresa on PRI support rates with corresponding standard errors, 95\% confidence intervals, and interval widths.}
\label{tab:progresa_estimates}
\begin{tabular}{lcccc}
\hline\hline
Estimator & Estimate & Std.\ Error & 95\% CI & CI Width \\
\hline
Diff.Mean          & 3.622 & 1.875 & $[-0.052,\; 7.297]$ & 7.348 \\
Lin.OLS \citep{lin_agnostic_2013}    & 4.214 & 2.027 & $[\phantom{-}0.240,\; 8.187]$ & 7.947 \\
R.Adj \citep{ghosh_robustness_2026}  & 2.185 & 1.338 & $[-0.439,\; 4.808]$ & 5.246 \\
EIF \citep{Athey2023}     & 1.953 & 1.648 & $[-1.276,\; 5.182]$ & 6.459 \\
WAQ \citep{Athey2023}      & 1.306 & 1.632 & $[-1.892,\; 4.504]$ & 6.396 \\
$\text{CF (MSE)}$ \citep{wager_estimation_2018}    & 1.474 & 1.485 & $[-1.436,\; 4.385]$ & 5.822 \\
$\text{CF (MSD)}$         & 1.482 & 1.487 & $[-1.432,\; 4.396]$ & 5.828 \\
\hline\hline
\end{tabular}
\begin{notes}
ATE estimates for the Progresa application in Section~\ref{sec:progresa}. Estimator definitions are given in Appendix~\ref{appendix:ATE_estimators}.
\end{notes}
\end{table}

Figure~\ref{fig:histogram_cate_progresa} shows where the two causal forests differ in terms of CATE estimates. Both place the bulk of precinct-level CATE estimate distribution near the small positive ATE estimate, but the shapes differ. MSE concentrates its estimates on two sharp modes, while MSD spreads them more evenly and carries more mass into both tails. The discreteness of the MSE estimates reflects mean-based splitting settling on a small number of leaf values, whereas MSD's smoother spread follows from its more conservative partitioning. Neither distribution gives evidence of strong structured heterogeneity, and the agreement in location matches the near-identical ATEs. Only the shape and the tails separate the two, which is where the skewed outcome makes the splitting rule matter. 
We further note that the pointwise confidence intervals behind CATE estimates in Figure \ref{fig:histogram_cate_progresa}, reported in Figure~\ref{fig:plot_cate_progresa} of Appendix \ref{append:emp_appl}, are wide and almost all contain the ATE estimate, so the causal forest CATE estimates for the Progresa data provide no evidence of substantial heterogeneity.
\begin{figure}[htbp]
  \centering
  \caption{Distribution of precinct-level CATE estimates for the Progresa data.}  
  \includegraphics[width=\textwidth]{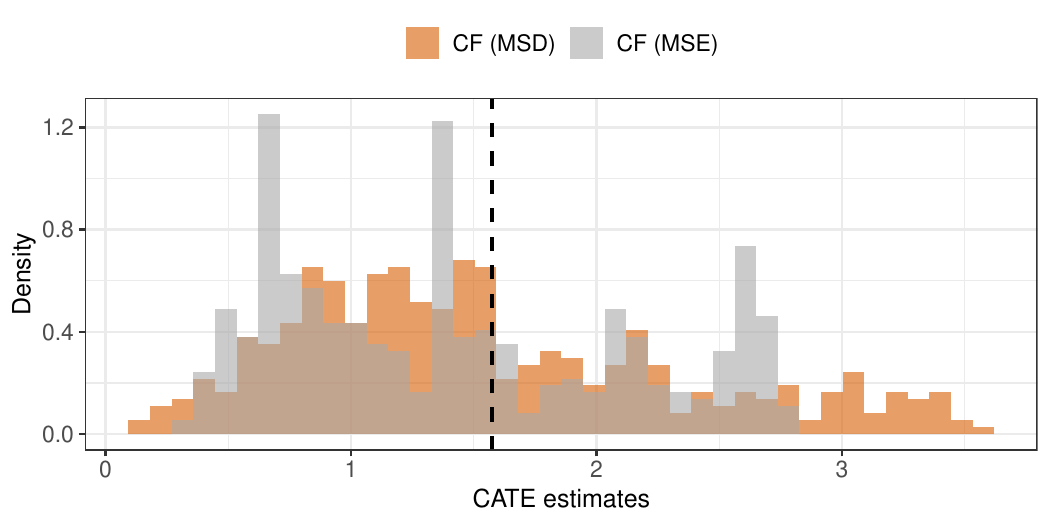}
\label{fig:histogram_cate_progresa}
\begin{notes}
    Causal forests with MSD (orange) and MSE (grey) splitting. The dashed, vertical line in black marks the similar ATE estimate of both causal forests.
\end{notes}
\end{figure}

\subsection{ACTG 175}
\label{sec:ACTG_175}

The second empirical study applies our methods to data from a randomized trial studying antiretroviral treatments in HIV-positive adults \citep{hammer_trial_1996, leqi_median_2022, juraska_speff2trial_2022}. Participants were assigned either to zidovudine monotherapy ($D_i=0$) or to alternative antiretroviral therapies ($D_i=1$), and their CD4 (also known as T helper cells) count was recorded approximately 96 weeks post-randomization. As in the Progresa application, condition~(i) of Assumption~\ref{ass:HL_shift}
is not directly verifiable, though the CD4 count is unbounded above and less obviously in conflict with a location-shift model than the bounded vote share.
We adjust for a rich set of baseline characteristics, including immune markers (baseline CD4 and CD8 counts), demographics (age, weight, race, gender), clinical indicators (Karnofsky score, symptomatic status, hemophilia), behavioral variables (homosexual activity, drug use history), and prior treatment exposure (zidovudine and antiretroviral use). Table \ref{tab:desc_stats_actg175} reports descriptive statistics of the considered variables. The randomized design yields a known, constant propensity score of $p(x) = 0.75$ across all subjects. Dropping individuals with missing 96-week measurements leaves an analysis sample of $N=1{,}342$. Dropout in ACTG 175 is substantial and has been linked to baseline characteristics, so this complete-case analysis rests on the outcome being missing at random given the covariates we adjust for, which we take as a maintained assumption rather than a tested one.

The ACTG 175 results for the ATE estimates in Table \ref{tab:ate_antiretroviral} are more uniform compared to the results for the Progresa data in Subsection \ref{sec:progresa}. Every ATE estimator finds a large and significant increase in CD4 count under the alternative antiretroviral therapies, with point estimates ranging from about $46$ for WAQ to $67$ for the causal forests. The two causal forests again agree almost exactly, at $66.60$ for MSE and $66.62$ for MSD. The outcome here is less skewed than the Progresa vote share, and the robust and mean-based estimators move together rather than splitting apart.
Figure~\ref{fig:histogram_cate_actg175} confirms the rather uniform findings for causal forest ATE estimates also at the CATE level. The MSD and MSE distributions are both unimodal, centered near the ATE, and overlap almost entirely, with MSD only slightly more dispersed toward lower values. Individual-level CATEs spread from roughly $50$ to $82$, but the two splitting rules are nearly indistinguishable. Where Progresa shows the rules diverging in shape, ACTG 175 shows them coinciding. Similar to Subsection \ref{sec:progresa}, we note that the pointwise confidence intervals of the CATE estimates, reported in Figure~\ref{fig:plot_cate_actg175} of Appendix \ref{append:emp_appl}, are wide and almost all contain the ATE estimates. At the resolution of the conservative scaled-BLB intervals, we cannot reject a constant treatment effect. 
We interpret this as a statement about the precision of our procedure on the outcome variable measuring CD4 at 96 weeks, not as a claim that treatment effects are homogeneous. This result is consistent with \citet{leqi_median_2022}, who also study the ACTG 175 data and likewise find only limited evidence of effect heterogeneity on this outcome.

\begin{table}[htbp]
\centering
\caption{ATE estimates of the effect of antiretroviral treatments in HIV-positive adults with corresponding standard errors, 95\% confidence intervals, and interval widths.}
\label{tab:ate_antiretroviral}
\begin{tabular}{lcccc}
\hline\hline
Estimator & Estimate & Std.\ Error & 95\% CI & CI Width \\
\hline
Diff.Mean          & 53.830 & 10.901 & $[32.465,\; 75.195]$ & 42.730 \\
Lin.OLS \citep{lin_agnostic_2013} & 63.485 &  8.798 & $[46.240,\; 80.730]$ & 34.489 \\
R.Adj \citep{ghosh_robustness_2026} & 62.188 &  8.709 & $[45.118,\; 79.258]$ & 34.140 \\
EIF \citep{Athey2023}     & 49.146 & 10.749 & $[28.078,\; 70.214]$ & 42.135 \\
WAQ \citep{Athey2023}     & 46.052 & 10.996 & $[24.500,\; 67.605]$ & 43.105 \\
$\text{CF (MSE)}$ \citep{wager_estimation_2018}  & 66.603 & 10.265 & $[46.484,\; 86.722]$ & 40.238 \\
$\text{CF (MSD)}$         & 66.621 & 10.271 & $[46.489,\; 86.753]$ & 40.264 \\
\hline\hline
\end{tabular}
\begin{notes}
ATE estimates for the ACTG 175 application in Section~\ref{sec:ACTG_175}. Estimator definitions are given in Appendix~\ref{appendix:ATE_estimators}.
\end{notes}
\end{table}

\begin{figure}[htbp]
  \centering
  \caption{Distribution of CATE estimates for the ACTG 175 data.}
  \includegraphics[width=\textwidth]{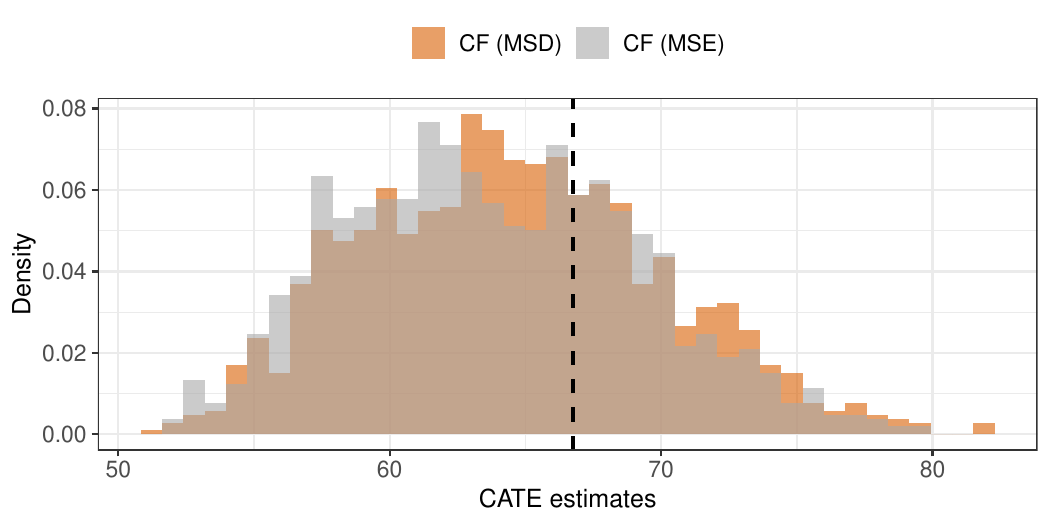}
    \label{fig:histogram_cate_actg175}
    \begin{notes}
        Causal forests with MSD (orange) and MSE (grey) splitting. The dashed, vertical line in black marks the similar ATE estimate of both causal forests.
    \end{notes}
\end{figure}

\section{Conclusion}
\label{sec:conclusion}

Heavy-tailed and skewed outcomes are common in economic and biomedical
applications and standard CATE estimators might be imprecise under these circumstances \citep{li_robust_2021, ghosh_robustness_2026, Athey2023}. 
Instead of relying on the mean-squared-error splitting rule \citep{athey_recursive_2016, wager_estimation_2018}, we have introduced a median-based splitting rule within the causal forest estimator that targets the CATE while resisting heavy-tailed and contaminated outcomes. 
Under the leafwise location-shift model, the Hodges--Lehmann estimator \citep{hodges_estimates_1963} is consistent for the leaf effect and median unbiased, and the resulting Median Squared Deviation (MSD) criterion in Section \ref{sec:MSD} preserves the variance-bias structure of the mean-based rule. An implementable cross-product form of the criterion avoids density estimation during tree growth. Two further rules, MAD and LMS, accompany it as robust baselines without an analogous honest MSE-based derivation.

The simulation study in Section \ref{sec:simulations} compares causal forests under the four splitting criteria across the four scenarios of Table \ref{tab:scenarios}.
For the CATE, MSD attains the lowest RMSE and absolute bias across all four designs. Its coverage is closest to nominal in the designs where the mean-based rule undercovers, most clearly under the sparse extreme responders of scenario S3, while it overcovers conservatively in the more well-behaved designs. 
For the ATE the splitting choice washes out under AIPW, so MSD and MSE coincide and both remain centered on the true ATE. The rank- and quantile-based competitors instead drift to a different estimand once skew breaks the symmetry under which their targets agree with the mean. Robust partitioning buys resistance to outliers in split selection without the estimand drift that pure median- or quantile-based estimators incur under skew. The two applications in Section \ref{sec:empappl}, chosen for their skewed outcomes, reproduce this pattern. MSD and MSE agree on the ATE estimate, and the wide CATE confidence intervals provide no evidence of heterogeneity distinguishable from a constant effect.

Several limitations qualify our presented results. The implemented MSD criterion is the cross-product surrogate. It estimates the leafwise second moment by the squared difference in means and is exact only to leading order, trading a small bias for the avoidance of exact density estimation. Identifying the Hodges--Lehmann anchor with the leaf CATE rests on the within-leaf location-shift condition, which fails under the bimodal and skewed leaf distributions of S3 and S4 in Table \ref{tab:scenarios}. 
Where it fails, the criterion still selects stable splits, but its target degrades to a robust proxy for the leaf effect. Robustness also carries a cost in conservatism, as MSD intervals are systematically wider than the mean-based intervals and overcover in the well-behaved designs.
We further note that our scaled BLB confidence intervals around the causal forest CATE estimates (see Appendix \ref{appendix:BLB}) are rather conservative. An exact calibration against alternative variance estimators remains for future work.
Finally, the Hodges--Lehmann anchor is rebuilt from all within-leaf pairwise differences at each candidate split. Table \ref{tab:runtime} in Appendix \ref{app:add_sim_res} shows that this computational effort makes MSD (and MAD) several times slower than the mean-based MSE, an overhead that a selection algorithm avoiding the pairwise set might remove without altering any result \citep{monahan_algorithm_1984}. Evaluating the density-penalty form and broadening the analysis to observational designs with estimated propensity scores are the natural next steps for further extensions.

\acks{We would like to thank Christoph Hanck for valuable feedback on earlier drafts of this manuscript and we are grateful to seminar participants at the RuhrMetrics Research Seminar and the EuroCIM 2023 for helpful comments and discussions. The authors acknowledge partial financial support from TRR 391 Spatio-temporal Statistics for the Transition of Energy and Transport (520388526) by the Deutsche Forschungsgemeinschaft (DFG, German Research Foundation) and from the Rhine-Ruhr Center for Scientific Data Literacy (DKZ.2R) by the German Federal Ministry of Education and Research (BMBF). During the preparation of this manuscript, we used Anthropic's and OpenAI's large language models (up to and including Claude Opus 4.8 and GPT-5.5) to assist with proofreading and editing. We reviewed and verified all resulting content and are solely responsible for any remaining errors.
}

\bibliography{paper_bibtex_file}

\appendix

\section{Additional Theory}
\subsection{MSE-based Honest Splitting Criterion in Causal Trees}
\label{appendix:Honest_Splitting_CT}

We follow \cite{athey_recursive_2016} and describe the honest splitting criterion in the causal tree by following the setup in Section \ref{sec:PO_framework}.

Let us restate the MSE in \eqref{eq:MSE_treatment_effect} and note that the individual treatment effect $\tau_i = Y_i(1) - Y_i(0)$ is never observed. To handle this, the MSE for treatment effects, evaluated on the test sample with estimates based on a distinct estimation sample, is
\begin{equation}\label{eq:mse_tau}
    \text{MSE}_\tau(\mathcal{S}^{te}, \mathcal{S}^{est}, \Pi) = \frac{1}{N^{te}} \sum_{i \in \mathcal{S}^{te}} \Bigl[\bigl(\tau_i - \hat{\tau}(X_i; \mathcal{S}^{est}, \Pi)\bigr)^2 - \tau_i^2\Bigr].
\end{equation}
The subtraction of $\tau_i^2$ is essential: since $\tau_i$ is never observed, the raw squared error $(\tau_i - \hat{\tau})^2$ cannot be computed. However, $\tau_i^2$ does not depend on $\Pi$ or on $\mathcal{S}^{est}$, so subtracting it preserves the relative ranking of candidate trees and minimizing \eqref{eq:mse_tau} over $\Pi$ yields the same optimum as minimizing the raw squared error without $\tau_i^2$. The remaining terms are estimable. Let $\text{EMSE}_\tau(\Pi)$ denote the expectation of \eqref{eq:mse_tau} over both $\mathcal{S}^{te}$ and $\mathcal{S}^{est}$.

To derive a tractable expression for $\text{EMSE}_\tau(\Pi)$, we first expand the square inside \eqref{eq:mse_tau}. Writing $\hat{\tau}_l \equiv \hat{\tau}(X_i; \mathcal{S}^{est}, \Pi)$ for the leaf-constant estimate assigned to unit $i$:
\begin{equation}\label{eq:expand}
    (\tau_i - \hat{\tau}_l)^2 - \tau_i^2 
    = \tau_i^2 - 2\tau_i\hat{\tau}_l + \hat{\tau}_l^2 - \tau_i^2 
    = -2\tau_i\hat{\tau}_l + \hat{\tau}_l^2.
\end{equation}
Since $\hat{\tau}_l$ is constant within each leaf, the MSE can be grouped by leaves $l \in \Pi$:
\begin{equation}\label{eq:grouped}
    \text{MSE}_\tau(\mathcal{S}^{te}, \mathcal{S}^{est}, \Pi) = \sum_{l \in \Pi} \frac{N_l^{te}}{N^{te}} \left[\hat{\tau}_l^2 - 2\hat{\tau}_l \cdot \frac{1}{N_l^{te}} \sum_{\substack{i \in \mathcal{S}^{te} \\ X_i \in l}} \tau_i\right].
\end{equation}

Now take the expectation over the independent samples $\mathcal{S}^{te}$ and $\mathcal{S}^{est}$. Note that $N_l^{te}/N^{te} \to p_l$, the population probability of landing in leaf $l$ and distinct from the treated share $p$. The
cross-term $\hat{\tau}_l \cdot \frac{1}{N_l^{te}} \sum_{i} \tau_i$ involves one factor from
$\mathcal{S}^{est}$ and one from $\mathcal{S}^{te}$. Since these samples are independent,
their expectations factorize:
\begin{equation}\label{eq:factorise}
    \mathbb{E}_{\mathcal{S}^{est}, \mathcal{S}^{te}}\!\left[\hat{\tau}_l \cdot \frac{1}{N_l^{te}}
    \sum_{\substack{i \in \mathcal{S}^{te} \\ X_i \in l}} \tau_i\right]
    = \mathbb{E}_{\mathcal{S}^{est}}[\hat{\tau}_l] \;\cdot\;
      \mathbb{E}_{\mathcal{S}^{te}}\!\left[\frac{1}{N_l^{te}}
      \sum_{\substack{i \in \mathcal{S}^{te} \\ X_i \in l}} \tau_i\right].
\end{equation}
The two factors on the right-hand side both equal $\tau(l;\Pi)$, but for distinct reasons. The
second factor equals $\tau(l;\Pi)$ by random sampling: test units in leaf $l$ are a random draw
from the subpopulation with $X_i \in l$, so their average treatment effect equals the population
leaf CATE. The first factor equals $\tau(l;\Pi)$ by the honesty construction: because
$\mathcal{S}^{est}$ is independent of the splits that defined the leaves, the estimator
$\hat{\tau}_l$ is unbiased as $\hat{\tau}$ is computed entirely on $\mathcal{S}^{est}$, while the tree $\Pi$ was built entirely on $\mathcal{S}^{tr}$. Because $\mathcal{S}^{est}$ is independent of the splits, honesty guarantees unbiasedness:
\begin{equation}\label{eq:unbiased}
    \mathbb{E}_{\mathcal{S}^{est}}\bigl[\hat{\tau}(x; \mathcal{S}^{est}, \Pi)\bigr] = \tau(x;\Pi).
\end{equation}

Combining these observations,
\eqref{eq:grouped} becomes
\begin{align}\label{eq:emse_intermediate}
    \text{EMSE}_\tau(\Pi) &= \sum_{l \in \Pi} p_l
    \Bigl[\mathbb{E}[\hat{\tau}_l^2] - 2\,\tau(l;\Pi)^2\Bigr] \\
    &= \sum_{l \in \Pi} p_l
       \Bigl[\Var(\hat{\tau}_l) + \tau(l;\Pi)^2 - 2\,\tau(l;\Pi)^2\Bigr] \nonumber \\[6pt]
    &= \sum_{l \in \Pi} p_l
       \Bigl[\Var\bigl(\hat{\tau}_l\bigr) - \tau(l;\Pi)^2\Bigr], 
       \label{eq:emse_final}
\end{align}
where we use the identity $\mathbb{E}[\hat{\tau}_l^2] = \Var(\hat{\tau}_l) +
(\mathbb{E}[\hat{\tau}_l])^2 = \Var(\hat{\tau}_l) + \tau(l;\Pi)^2$. 
This is the central decomposition. A good tree makes $\tau(l;\Pi)^2$ large across leaves (capturing genuine treatment effect heterogeneity) while keeping $\Var(\hat{\tau}_l)$ small (ensuring precise estimation within each leaf). Note that honesty is essential for this result: the unbiasedness property was used to collapse $-2\tau_i\hat{\tau}_l$ in \eqref{eq:expand} into $-2\,\tau(l)^2$ in \eqref{eq:emse_intermediate}. Without honesty, one constructs the tree and estimates its leaves based on the same data and we have $\mathbb{E}[\hat{\tau}_l] \neq \tau(l)$ due to adaptive bias. Therefore, the clean decomposition in Equation \eqref{eq:emse_final} would not hold.

The variance of the leaf-level CATE estimator depends on within-leaf outcome variances. Since
treated and control means are estimated independently within each leaf,
\begin{equation}\label{eq:var_tau}
    \Var(\hat{\tau}_l)
    = \Var\bigl(\hat{\mu}(1,l;\mathcal{S}^{est})\bigr)
    + \Var\bigl(\hat{\mu}(0,l;\mathcal{S}^{est})\bigr)
    = \frac{\sigma^2_{1}(l)}{N_{l,1}^{est}}
    + \frac{\sigma^2_{0}(l)}{N_{l,0}^{est}},
\end{equation}
where $\sigma^2_{1}(l) \equiv \Var(Y_i(1) \mid X_i \in l)$ and
$\sigma^2_{0}(l) \equiv \Var(Y_i(0) \mid X_i \in l)$ are the population
conditional variances of potential outcomes within leaf $l$. Note that $\Var(\cdot)$ appears
here in two roles: on the left-hand side of \eqref{eq:var_tau} it denotes the sampling variance of
the estimator $\hat{\tau}_l$ across repeated draws of $\mathcal{S}^{est}$, while on the right-hand
side it defines the population outcome variances $\sigma^2_{1}(l)$ and
$\sigma^2_{0}(l)$. These population variances are unknown and are estimated by the
within-leaf sample variances $\hat{S}^2_{1}(l;\mathcal{S}^{tr})$ and
$\hat{S}^2_{0}(l;\mathcal{S}^{tr})$, computed on the training sample.

Under random assignment with treatment probability $p$, the expected counts satisfy $N_{l,1}^{est} \approx p\,N_l^{est}$ and $N_{l,0}^{est} \approx (1-p)\,N_l^{est}$, so that
\begin{equation}\label{eq:var_tau_simplified}
    \Var(\hat{\tau}_l) \approx \frac{1}{N_l^{est}} \left[\frac{\sigma^2_{1}(l)}{p} + \frac{\sigma^2_{0}(l)}{1-p}\right].
\end{equation}

Substituting \eqref{eq:var_tau_simplified} into \eqref{eq:emse_final} gives the full criterion:
\begin{equation}\label{eq:emse_full}
    \text{EMSE}_\tau(\Pi) = \sum_{l \in \Pi} p_l \left[\frac{1}{N_l^{est}} \left(\frac{\sigma^2_{1}(l)}{p} + \frac{\sigma^2_{0}(l)}{1-p}\right) - \tau(l;\Pi)^2\right].
\end{equation}
Minimizing $\text{EMSE}_\tau(\Pi)$ over candidate trees is equivalent to maximizing
\begin{equation}\label{eq:max_criterion}
    \sum_{l \in \Pi} p_l\,\tau(l;\Pi)^2 - \sum_{l \in \Pi} p_l \cdot \frac{1}{N_l^{est}} \left(\frac{\sigma^2_{1}(l)}{p} + \frac{\sigma^2_{0}(l)}{1-p}\right).
\end{equation}
The first sum rewards heterogeneity (large squared CATEs across leaves) while the second sum penalizes imprecision (high variance from small or noisy leaves).

When building the tree, candidate splits must be evaluated using
$\mathcal{S}^{tr}$ alone, since $\mathcal{S}^{est}$ is reserved for estimation
after the tree is fixed. The population leaf probability $p_l$ is estimated by
$N_l^{tr}/N^{tr}$, and the conditional outcome variances $\sigma_d^2(l)$ by
$\hat S_d^2(l;\mathcal{S}^{tr})$. For the squared leaf CATE $\tau(l;\Pi)^2$
the natural plug-in $\hat\tau(l;\mathcal{S}^{tr})^2$ is upward biased,
because
\begin{align}\label{eq:plugin_bias_mean}
\begin{split}
    \mathbb{E}_{\mathcal{S}^{tr}}\!\left[\hat\tau(l;\mathcal{S}^{tr})^2 \,\middle|\, \Pi\right]
    &\;=\; \tau(l;\Pi)^2 \;+\; \Var_{\mathcal{S}^{tr}}\!\bigl(\hat\tau(l;\mathcal{S}^{tr})\bigr) \\
    &\;\approx\; \tau(l;\Pi)^2 \;+\; \frac{1}{N_l^{tr}}
        \!\left(\frac{\sigma_1^2(l)}{p} + \frac{\sigma_0^2(l)}{1-p}\right).
        \end{split}
\end{align}
Subtracting the estimated bias $\hat V_l/N_l^{tr}$ yields the unbiased plug-in
$\hat\tau(l;\mathcal{S}^{tr})^2 - \hat V_l/N_l^{tr}$ for $\tau(l;\Pi)^2$.
Combined with the estimation-sample variance penalty from \eqref{eq:var_tau_simplified} and based on \eqref{eq:max_criterion}, the
honest splitting criterion is
\begin{equation}\label{eq:honest_criterion}
  \hat Q^{\,\text{MSE, H}}_\tau(\Pi;\mathcal{S}^{tr})
  = \sum_{l \in \Pi} \frac{N_l^{tr}}{N^{tr}}
    \left[\hat\tau(l;\mathcal{S}^{tr})^2
    - \left(\frac{1}{N_l^{tr}} + \frac{1}{N_l^{est}}\right)
      \!\left(\frac{\hat S_1^2(l;\mathcal{S}^{tr})}{p}
      + \frac{\hat S_0^2(l;\mathcal{S}^{tr})}{1-p}\right)\right],
\end{equation}
to be maximized over $\Pi$.

\subsection{Foundations of the Hodges--Lehmann estimator}
\label{appendix:HL_estimator}

This section reviews the basic two-sample definition of the HL estimator \citep{hodges_estimates_1963} and its relation to the Wilcoxon rank-sum statistic \citep{hollander_nonparametric_2014}. The insights presented in this section are of a general nature, independent of the tree or leaf terminology, which is included in the main analysis.  
Let
$Y^1 = \{ Y_j \,|\, j \in \mathcal{S}_1 \} \text{ and } Y^0 = \{ Y_m \,|\, m \in \mathcal{S}_0 \}$  be the observed outcomes. The first step is to establish a unique median for all continuous distributions.
For completeness, we define the median of a distribution function $F$ as follows. 
\begin{definition}\label{median_distribution}
Let $\theta_1 = \inf \{\theta:F(\theta)=\tfrac{1}{2} \}$ and $\theta_2 = \sup \{\theta:F(\theta)=\tfrac{1}{2} \}$. Then, the median $\theta$ of a distribution $F$ is defined as,\footnote{While the definition ensures the theoretical uniqueness of a median,  we will use a computationally faster definition in our simulation studies in Section \ref{sec:simulations} by
$\theta \coloneqq \operatorname{min} \{ \theta_1, \theta_2 \}.$ Thereby, we resort to the Quickselect algorithm of \cite{hoare_algorithm_1961} as described in \cite{numerical_press_1992} to approximate the median in reasonable computing time.}
\begin{align} \label{eq:unique_median}
\theta \coloneqq \frac{(\theta_1 +  \theta_2)}{2}. 
\end{align}  
\end{definition}
In the classical two-sample setting, the HL estimator is associated with a location
shift model. Let $F_1$ and $F_0$ denote the outcome distributions of the treatment and control
group. The aim is to determine whether the two groups differ only by an unknown shift parameter
$\Delta$, so that
\[
F_1(z)=F_0(z-\Delta)
\qquad \text{for all } z\in\mathbb R.
\]
Given a test statistic
\[
\mathcal T=t(Y^0,Y^1)=t(Y^0_1,\dots,Y^0_{N_0};Y^1_1,\dots,Y^1_{N_1}),
\]
the null hypothesis is
\[
\mathbb H_0:\Delta=0
\]
against an appropriate one-sided or two-sided alternative.

\begin{definition}\label{def:HL_general}
Let
\[
\Delta^*=\sup\{\Delta:t(Y^0,Y^1-\Delta)>\xi\},
\quad \text{and} \quad
\Delta_*=\inf\{\Delta:t(Y^0,Y^1-\Delta)<\xi\},
\]
where the statistic $t(Y^0,Y^1)$ satisfies:
\begin{itemize}
    \item[$i)$] $t(Y^0,Y^1+a)$ is non-decreasing in $a$;
    \item[$ii)$] under $\Delta=0$, the distribution of $t(Y^0,Y^1)$ is symmetric about a fixed point $\xi$.
\end{itemize}
Then the Hodges--Lehmann estimator associated with $\mathcal{T}$ is defined as
\[
\hat\Delta_{\mathrm{HL}}=\frac{\Delta^*+\Delta_*}{2}.
\]
\end{definition}

However, the most common definition of the HL-estimator is based on the Wilcoxon rank-sum statistic \citep{hollander_nonparametric_2014}, because it can be computed and interpreted as the median of the Cartesian product of $\mathcal{S}_{1} \times \mathcal{S}_{0}$.  

\begin{definition}\label{defn: Mann-Whitney}
The Wilcoxon rank-sum statistic in the Mann--Whitney form is
\[
\mathcal W=\sum_{k=1}^{N_0}\sum_{j=1}^{N_1}\varphi(Y_m^0,Y_j^1),
\]
where
\[
\varphi(Y_m^0,Y_j^1)=
\begin{cases}
1, & \text{if } Y_m^0<Y_j^1,\\
0, & \text{otherwise}.
\end{cases}
\]
Then the Hodges--Lehmann estimator associated with the Wilcoxon rank-sum statistic for the shift parameter $\Delta$ is defined as
\[
\hat\Delta_{\mathrm{HL}}
=
\operatorname{med}\{Y_j^1-Y_m^0 : m\in\mathcal S_0,\ j\in\mathcal S_1\}.
\]
\end{definition}

The following lemma provides bounds for the distribution of 
\( \hat\Delta_{\mathrm{HL}}
\), which are necessary for our main discussion. 

\begin{lemma}\label{lem:lemma_bounds}
Let $\mathcal W=t(Y^0,Y^1)$ be the test statistic in Definition~\ref{defn: Mann-Whitney}. For any
$a\in\mathbb R$ and symmetry point $\xi$, $\hat\Delta_{\mathrm{HL}}$ satisfies
\[
Pr\bigl(t(Y^0,Y^1-a)<\xi\bigr)
\le
Pr\bigl(\hat\Delta_{\mathrm{HL}}<a\bigr)
\le
Pr\bigl(t(Y^0,Y^1-a)\le \xi\bigr).
\]
\end{lemma}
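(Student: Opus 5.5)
The plan is to turn both inequalities into pathwise set inclusions between events and then take probabilities. The central object is the shifted Mann--Whitney count
\[
W(a):=t(Y^0,Y^1-a)=\#\bigl\{(m,j)\in\mathcal S_0\times\mathcal S_1:\; Y^1_j-Y^0_m>a\bigr\}.
\]
This is a step function of $a$ counting the pairwise differences $D_{jm}=Y^1_j-Y^0_m$ that exceed $a$. It is non-increasing in $a$, which is property (i) of Definition~\ref{def:HL_general} read in the shift variable. It is also right-continuous, with jumps only at the values of the $D_{jm}$.

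First, I will rewrite the estimator of Definition~\ref{defn: Mann-Whitney} through the general construction of Definition~\ref{def:HL_general}, with $\Delta^*=\sup\{\Delta:W(\Delta)>\xi\}$ and $\Delta_*=\inf\{\Delta:W(\Delta)<\xi\}$. I will take $\xi=N_0N_1/2$, the symmetry point of $\mathcal W$ under the null. I will check that $\Delta^*$ and $\Delta_*$ are the two middle order statistics of the $N_0N_1$ differences, or the same order statistic when $N_0N_1$ is odd. Their midpoint is then exactly the median $\hat\Delta_{\mathrm{HL}}$ in the sense of Definition~\ref{median_distribution}. I will also record the ordering fact $\Delta^*\le\hat\Delta_{\mathrm{HL}}\le\Delta_*$. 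It follows from monotonicity: if $W(\Delta)>\xi$ and $W(\Delta')<\xi$, then necessarily $\Delta<\Delta'$.

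Upper bound. I will show $\{\hat\Delta_{\mathrm{HL}}<a\}\subseteq\{W(a)\le\xi\}$ by contraposition. Suppose $W(a)>\xi$. Then $a$ belongs to the set defining $\Delta^*$, so $a\le\Delta^*\le\hat\Delta_{\mathrm{HL}}$. This contradicts $\hat\Delta_{\mathrm{HL}}<a$. Taking probabilities gives the right-hand inequality.

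Lower bound. I will show $\{W(a)<\xi\}\subseteq\{\hat\Delta_{\mathrm{HL}}<a\}$. If $W(a)<\xi$, then $\Delta_*\le a$, and hence $\hat\Delta_{\mathrm{HL}}\le a$. The main obstacle is upgrading this to a strict inequality. Because $W$ is right-continuous but may jump at $a$, I need $W(a')<\xi$ for some $a'<a$. This holds whenever no difference $D_{jm}$ equals $a$, since then $W$ is constant on a neighbourhood of $a$. In that case $\Delta_*<a$. If moreover $\Delta^*<\Delta_*$, or $\Delta^*=\Delta_*<a$, I get $\hat\Delta_{\mathrm{HL}}<a$.

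The exceptional event $\{D_{jm}=a \text{ for some } (j,m)\}$ is a finite union of events of the form $\{Y^1_j-Y^0_m=a\}$. Under the absolute continuity in Assumption~\ref{ass:HL_shift}, and with $Y^1_j$ independent of $Y^0_m$, each such difference has a continuous distribution, so this event has probability zero. The inclusion therefore holds almost surely, and taking probabilities gives the left-hand inequality. If the statement is meant for a general, possibly discrete, $F$, I would instead try to handle the boundary case directly. I would do this by redefining $W$ with the tie convention $\varphi=0$ at equality, and checking that $\{W(a)<\xi\}$ already forces at least $\lceil N_0N_1/2\rceil$ differences to be strictly below $a$. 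I expect this tie bookkeeping to be the only delicate point.
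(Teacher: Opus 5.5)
Your proof is correct and is essentially the paper's argument. Both place $\hat\Delta_{\mathrm{HL}}$ between $\Delta^*$ and $\Delta_*$, use the monotonicity of $t(Y^0,Y^1-a)$ in $a$ to turn $\{t(Y^0,Y^1-a)<\xi\}$ and $\{t(Y^0,Y^1-a)\le\xi\}$ into statements about $\Delta_*$ and $\Delta^*$, and use continuity only to discard a boundary event; your version is slightly sharper, since it proves the upper bound pathwise and actually checks that $(\Delta^*+\Delta_*)/2=\operatorname{med}\{Y^1_j-Y^0_m\}$ for $\xi=N_0N_1/2$.

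Drop the discrete-case fallback, though. With the convention $\varphi=0$ at equality that you propose, take $N_0=N_1=1$ and $Pr(Y^1-Y^0=a)>0$: on $\{Y^1-Y^0=a\}$ you get $t(Y^0,Y^1-a)=0<\xi=\tfrac12$, yet $\hat\Delta_{\mathrm{HL}}=a$ and no difference lies strictly below $a$. The left-hand inequality then fails by exactly $Pr(Y^1-Y^0=a)$, so continuity is genuinely required for the statistic as stated.
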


\begin{proof}
See Appendix~\ref{proof:proof_lemma_4}.
\end{proof}

So far, we have introduced the Hodges--Lehmann estimator in a general two-sample setting without making any additional assumptions about $F_{1}(\cdot)$ and $F_{0}(\cdot)$. The previous definitions do not, by themselves, identify which parameter the HL estimator targets in a causal sense.
Three related but distinct objects are relevant in this paper: (i) the location shift $\Delta$, (ii) the CATE, $\tau(x)$, and (iii)  the median of individual treatment effects, $\tau_{\operatorname{med}}$.  The HL estimator is most naturally interpreted as an estimator of the shift parameter $\Delta$, or, equivalently, as the median of all pairwise treated-minus-control differences.

We now record simple sufficient conditions under which the location shift parameter coincides with $\tau_{\operatorname{med}}(x;\Pi)$ and $\tau(l;\Pi)$. 
\begin{lemma}\label{lemma:Delta_equals_tau_leaf}
For a fixed partition $\Pi$ and a target point $x$, define the leafwise treatment effect by
\[
\tau(l;\Pi):=\mathbb E\bigl[Y_i(1)-Y_i(0)\mid X_i\in l(x;\Pi)\bigr].
\]
Suppose that within this leaf the treatment effect is constant, so that
\[
Y_i(1)=Y_i(0)+\tau(l;\Pi)
\qquad \text{almost surely for all } X_i\in l(x;\Pi).
\]
Then
\[
\Delta(x;\Pi)=\tau(l;\Pi)=\tau_{\operatorname{med}}(x;\Pi).
\]
\end{lemma}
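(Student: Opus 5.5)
The plan is to verify the three equalities separately. Throughout we work with the leafwise conditional law of $(Y_i(0),Y_i(1))$ given $X_i\in l(x;\Pi)$. Write $c:=\tau(l;\Pi)$ and suppose $Y_i(1)=Y_i(0)+c$ almost surely on the leaf. A priori $c$ is just some constant. Taking conditional expectations of this identity gives $\mathbb E[Y_i(1)-Y_i(0)\mid X_i\in l]=c$, so the notation is consistent. This needs only that the leafwise mean of $Y_i(1)-Y_i(0)$ exists, which is implicit in the definition of $\tau(l;\Pi)$.

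\emph{Step 1: the shift parameter.} Compute the leafwise distribution of $Y_i(1)$ directly:
\[
F_{1,l}(z)=Pr\bigl(Y_i(0)+c\le z\mid X_i\in l\bigr)=F_{0,l}(z-c)\quad\text{for all } z\in\mathbb R .
\]
Hence the location-shift relation of Assumption~\ref{ass:HL_shift}(i) holds with shift $c$. To conclude $\Delta(x;\Pi)=c$, we must show that the shift is uniquely determined. Suppose $F_{0,l}(z-\Delta)=F_{0,l}(z-\Delta')$ for all $z$ with $\Delta\neq\Delta'$. Then $F_{0,l}$ is periodic with period $|\Delta-\Delta'|$. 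A periodic function cannot tend to $0$ at $-\infty$ and to $1$ at $+\infty$, so no distribution function is periodic with nonzero period. This contradiction shows the shift is unique, and therefore $\Delta(x;\Pi)=\tau(l;\Pi)$.

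\emph{Step 2: the leafwise median effect.} On the leaf, $Y_i(1)-Y_i(0)=c$ almost surely. The conditional distribution of the individual effect is therefore the point mass at $c$, with distribution function $\mathds 1\{z\ge c\}$. Under Definition~\ref{median_distribution}, taken literally, the set $\{\theta:F(\theta)=\tfrac12\}$ is empty for this distribution. We therefore read the definition through its generalized form, $\theta_1=\inf\{\theta:F(\theta)\ge\tfrac12\}$ and $\theta_2=\sup\{\theta:F(\theta)\le\tfrac12\}$. Both equal $c$, so $\tau_{\operatorname{med}}(x;\Pi)=c$. Combining this with Step 1 gives the chain $\Delta(x;\Pi)=\tau(l;\Pi)=\tau_{\operatorname{med}}(x;\Pi)$.

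\emph{Main obstacle.} The two hard points are technical rather than conceptual. The first is the uniqueness of $\Delta$, handled in Step 1 by the periodicity argument. The second is that the median of a degenerate distribution falls outside the literal wording of Definition~\ref{median_distribution}. I would state the generalized-inverse reading explicitly as a remark, noting that it agrees with the paper's definition whenever that definition is nonempty. Absolute continuity of $F_{0,l}$ is not needed for any of these equalities; it matters only for the asymptotic results that use this lemma.
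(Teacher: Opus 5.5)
Your proposal is correct and follows the paper's proof. Like the paper, you derive $F_{1,l}(z)=F_{0,l}(z-\tau(l;\Pi))$ to identify the shift, then observe that the individual effect is a point mass at $\tau(l;\Pi)$. Your two additions fill points the paper passes over silently: the uniqueness of the shift, via non-periodicity of distribution functions, and the fact that Definition~\ref{median_distribution} read literally yields an empty set for a point mass, so a generalized-inverse reading is needed.
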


\begin{proof}
See Appendix~\ref{proof:Delta_equals_tau_leaf}.
\end{proof}

Note that \(\tau(l;\Pi)\) and \(\tau_{\operatorname{med}}(x;\Pi)\) are leaf-level
targets induced by the tree approximation, but this does not by itself imply that the HL shift parameter \(\Delta(x;\Pi)\) equals either target. Lemma~\ref{lemma:Delta_equals_tau_leaf} establishes the equality under a constant within-leaf treatment effect. Proposition~\ref{prop:Delta_equals_tau_equals_tauMed}
gives an alternative sufficient condition based on a leafwise location-shift model and symmetry of the individual treatment-effect distribution.

\begin{proposition}\label{prop:Delta_equals_tau_equals_tauMed}
For a fixed partition $\Pi$ and a target point $x$, let $l=l(x;\Pi)$ be the leaf containing
$x$. Assume that:
\begin{itemize}
    \item[$i)$] the leafwise potential-outcome distributions satisfy a location-shift model,
    \[
    F_{1,l}(z)=F_{0,l}(z-\Delta(x;\Pi))
    \qquad \text{for all } z\in\mathbb R;
    \]
    \item[$ii)$] the conditional distribution of the individual treatment effect
    \[
    Y_i(1)-Y_i(0)\mid X_i\in l
    \]
    is symmetric.
\end{itemize}
Then
\[
\Delta(x;\Pi)=\tau(l;\Pi)=\tau_{\operatorname{med}}(x;\Pi). 
\]
\end{proposition}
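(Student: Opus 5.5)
The plan is to prove the two equalities separately. First I show $\Delta(x;\Pi)=\tau(l;\Pi)$ using only condition (i) and linearity of expectation. Then I show $\tau(l;\Pi)=\tau_{\operatorname{med}}(x;\Pi)$ using only condition (ii), by identifying the mean and the median of a symmetric distribution with its centre of symmetry. Throughout I take as given that the conditional first moments $\mathbb E[|Y_i(d)|\mid X_i\in l]$ are finite. This is implicit in the statement, since $\tau(l;\Pi)$ is defined as a conditional expectation.

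For the first equality, condition (i), $F_{1,l}(z)=F_{0,l}(z-\Delta(x;\Pi))$ for all $z$, says that conditionally on $X_i\in l$ the variable $Y_i(1)$ has the same distribution as $Y_i(0)+\Delta(x;\Pi)$. Equal distributions have equal expectations, so
\[
\mathbb E[Y_i(1)\mid X_i\in l]=\mathbb E[Y_i(0)\mid X_i\in l]+\Delta(x;\Pi).
\]
By the same distributional identity, finiteness of the first moment of $Y_i(0)$ transfers to $Y_i(1)$. Linearity of expectation then gives
\[
\tau(l;\Pi)=\mathbb E[Y_i(1)\mid X_i\in l]-\mathbb E[Y_i(0)\mid X_i\in l]=\Delta(x;\Pi).
\]
This step needs only the marginal distributions, not the coupling of $(Y_i(1),Y_i(0))$. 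That is consistent with the discussion around \eqref{eq:median_of_differences}: the mean of the differences is identified from the marginals, whereas the median is not.

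For the second equality, let $G_l$ be the conditional distribution function of $\tau_i=Y_i(1)-Y_i(0)$ given $X_i\in l$. By (ii) it is symmetric about some centre $c$, meaning $\tau_i-c$ and $c-\tau_i$ have the same law. The mean $\mathbb E[\tau_i\mid X_i\in l]=\tau(l;\Pi)$ is finite, and symmetry gives $\mathbb E[\tau_i-c]=-\mathbb E[\tau_i-c]$, so $c=\tau(l;\Pi)$.

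It remains to show that the median of $G_l$, in the sense of Definition~\ref{median_distribution}, equals $c$. This is the step I expect to need the most care, because the median can be non-unique when $G_l$ is flat near $c$. The plan is as follows. Symmetry gives $G_l(c+t)=1-G_l((c-t)^-)$ for all $t$, so the set $\{\theta: G_l(\theta)=\tfrac12\}$ is mapped into itself, up to boundary points, by the reflection $\theta\mapsto 2c-\theta$. Hence $\theta_1$ and $\theta_2$ are symmetric about $c$, and the midpoint in \eqref{eq:unique_median} equals $c$. If the level set is empty, I handle the case where $G_l$ jumps across $\tfrac12$ at $c$ by the same reflection argument applied to the generalized inverse. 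If one prefers to avoid this, continuity of $G_l$ can be inherited from the absolute continuity assumed in Assumption~\ref{ass:HL_shift}, and then $G_l(c)=\tfrac12$ follows directly from symmetry. Either way,
\[
\tau_{\operatorname{med}}(x;\Pi)=c=\tau(l;\Pi).
\]
Combining this with the first equality gives $\Delta(x;\Pi)=\tau(l;\Pi)=\tau_{\operatorname{med}}(x;\Pi)$.
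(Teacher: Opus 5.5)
Your proof is correct and takes the same route as the paper's: condition (i) makes $Y_i(1)$ equal in law to $Y_i(0)+\Delta(x;\Pi)$, which gives $\tau(l;\Pi)=\Delta(x;\Pi)$, and condition (ii) with a finite mean identifies the centre of symmetry with both the mean and the median; your reflection argument for $\theta_1,\theta_2$ and for the jump case is more careful than the paper's one-line appeal to this fact. One caution: drop the shortcut that $G_l$ inherits continuity from the absolute continuity in Assumption~\ref{ass:HL_shift}, because continuous marginals say nothing about the law of $Y_i(1)-Y_i(0)$ (in the constant-effect case of Lemma~\ref{lemma:Delta_equals_tau_leaf}, $G_l$ is a point mass with $G_l(c)=1$), so the generalized-inverse version of your reflection argument is the one you actually need.
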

\begin{proof}
    See Appendix \ref{proof:Delta_equals_tau_equals_tauMed}
\end{proof}

Overall, we provide sufficient conditions under which \(\Delta(x;\Pi)\) coincides with $\tau(l;\Pi)$ and $\tau_{\operatorname{med}}(x;\Pi)$. These are leafwise statements and their connection to the pointwise CATE or pointwise CMTE relies on the usual tree approximation. As the partition refines around \(x\), the leafwise targets approximate their pointwise counterparts under suitable smoothness conditions. When these conditions are violated, the HL estimator remains an estimator of \(\Delta(x;\Pi)\). Its use in the splitting criteria should then be understood as relying on a shift-based proxy rather than on an exact identification result.
A detailed analysis of the discrepancies that may arise under weaker conditions lies beyond the scope of this paper and is left for future work.

\subsection{Details for the motivating example}
\label{app:motivation}

Let us consider the setup in Section \ref{sec:motivation} with $N^{tr}_{1,l} = N^{tr}_{0,l} = 10$ and hence $N^{tr}_l = 20$. We further fix an estimation-sample leaf of the same size, $N^{est}_l = 20$, and a treatment share $p = 1/2$. Under \eqref{eq:sharp-null} we have $\tau = 2$, and the baseline satisfies $\mathbb{E}[\varepsilon_i] = 0$ and $\Var(\varepsilon) = 3$.
Index the treated observations in $l$ by $j = 1,\dots,N^{tr}_{1,l}$, so that $Y_j = 3 + \tau + \varepsilon_j$ under \eqref{eq:sharp-null}. In the following, we compare the mean-based splitting criterion in \eqref{eq:MSE_CT_H} under an uncontaminated and a contaminated leaf. The uncontaminated leaf contains these treated outcomes $Y_j$. The contaminated leaf, marked with a dagger, is identical except that
$\varepsilon_1$ is replaced by the constant $c=10$:
\begin{equation}\label{eq:app_config}
  Y^{\dagger}_1 = 3 + \tau + c,
  \qquad
  Y^{\dagger}_j = Y_j \quad (j = 2,\dots,N^{tr}_{1,l}),
\end{equation}
and the outcomes of the control group of size $N^{tr}_{0,l}$ are left unchanged in both the uncontaminated and contaminated leaves. 
Write $uc \coloneqq \varepsilon_1$ for the uncontaminated draw that $c$ displaces in the contaminated case. All expectations are taken over $uc$ and the remaining draws, and use no property of the baseline beyond $\mathbb{E}[\varepsilon_i]=0$ and $\mathbb{E}[\varepsilon_i^2]=\Var(\varepsilon)$.

Let us re-write the mean-based splitting criterion in \eqref{eq:MSE_CT_H} as
\begin{align} \label{MSE_CT_H_rewrite}
\hat Q^{\,\text{MSE, H}}_\tau(\mathcal{S}^{tr},\Pi) = \sum_{l\in\Pi} (N^{tr}_l/N^{tr})\,
\hat q(l),  
\end{align}
with
\begin{equation}\label{eq:app_bracket}
  \hat q(l) \coloneqq \hat\tau(l;\mathcal{S}^{tr})^2
  - \left(\frac{1}{N^{tr}_l} + \frac{1}{N^{est}_l}\right)
    \left(\frac{\hat S^2_1(l;\mathcal{S}^{tr})}{p}
        + \frac{\hat S^2_0(l;\mathcal{S}^{tr})}{1-p}\right)
\end{equation}
The weight $N^{tr}_l/N^{tr}$ in \eqref{MSE_CT_H_rewrite} is common to both the uncontaminated and contaminated cases compared below and therefore cancels throughout. We are interested in the expected difference between the contaminated and uncontaminated expressions in \eqref{eq:app_bracket}, by focusing on the difference in the expected change of the reward, $\mathbb{E}\left[\Delta R\right]$, and the change of the variance penalty term, $\mathbb{E}\left[\Delta\widehat{\text{pen}}\right]$, using 
\begin{align}
\begin{split}
  \Delta\hat q(l)
  &\coloneqq \hat q^{\dagger}(l) - \hat q(l)
    \notag\\
  &= \Bigl(\hat\tau^{\dagger}(l;\mathcal{S}^{tr})^2 - \widehat{\text{pen}}^{\dagger}\Bigr)
   - \Bigl(\hat\tau(l;\mathcal{S}^{tr})^2 - \widehat{\text{pen}}\Bigr)
    \notag\\
  &= \Bigl(\hat\tau^{\dagger}(l;\mathcal{S}^{tr})^2
         - \hat\tau(l;\mathcal{S}^{tr})^2\Bigr)
   - \Bigl(\widehat{\text{pen}}^{\dagger} - \widehat{\text{pen}}\Bigr)
    \notag\\
  &= \Delta R - \Delta\widehat{\text{pen}}.
\end{split}
\end{align}
Taking expectations gives
\begin{equation}\label{eq:app_step5a_expected}
    \mathbb{E}\left[ \Delta\hat q(l) \right] =  \mathbb{E}\left[ \Delta R \right] -  \mathbb{E}\left[ \Delta\widehat{\text{pen}} \right].
\end{equation}

Let us first inspect the reward term and re-write it solely in terms of uncontaminated quantities by
\begin{align}\label{eq:app_step2a}
  \mathbb{E}[\Delta R]
  &\coloneqq \mathbb{E}\bigl[ \hat\tau^{\dagger}(l;\mathcal{S}^{tr})^2 - \hat\tau(l;\mathcal{S}^{tr})^2\bigr]
    \notag\\
  &= \mathbb{E}\bigl[ \bigl(\hat\tau(l;\mathcal{S}^{tr}) + \Delta\hat\tau\bigr)^2
     - \hat\tau(l;\mathcal{S}^{tr})^2\bigr]
    \notag\\
  &=  \mathbb{E}\bigl[ \hat\tau(l;\mathcal{S}^{tr})^2
     + 2\,\hat\tau(l;\mathcal{S}^{tr})\,\Delta\hat\tau
     + (\Delta\hat\tau)^2
     - \hat\tau(l;\mathcal{S}^{tr})^2\bigr]
    \notag\\
  &= 2\,\mathbb{E}\bigl[\hat\tau(l;\mathcal{S}^{tr})\,\Delta\hat\tau\bigr] + \mathbb{E}\bigl[(\Delta\hat\tau)^2 \bigr],
\end{align}
For both expectations in \eqref{eq:app_step2a}, we use 
\begin{equation}\label{eq:app_step1}
  \Delta\hat\tau \coloneqq \hat\tau^{\dagger}(l;\mathcal{S}^{tr}) - \hat\tau(l;\mathcal{S}^{tr})  = \bigl(\bar Y^{\dagger}_{1,l} - \bar Y_{0,l}\bigr)
     - \bigl(\bar Y_{1,l} - \bar Y_{0,l}\bigr)
  = \frac{Y^{\dagger}_1 - Y_1}{N^{tr}_{1,l}}
  = \frac{c-uc}{N^{tr}_{1,l}}, 
\end{equation}
where $\bar Y_{1,l}$ and $\bar Y_{0,l}$ are averages of the treated and control outcomes in leaf $l$.
For the first term in \eqref{eq:app_step2a}, 
\begin{align}\label{eq:app_step2b}
  2\mathbb{E}\bigl[\hat\tau(l;\mathcal{S}^{tr})\,\Delta\hat\tau\bigr]
  &= 2\mathbb{E}\!\left[\bigl(\tau + \bar\varepsilon_{1,l} - \bar\varepsilon_{0,l}\bigr)
     \cdot \frac{c-uc}{N^{tr}_{1,l}}\right]
    \notag\\
  &= \frac{2}{N^{tr}_{1,l}}
     \Bigl\{\tau\,\mathbb{E}[c-uc]
     + \mathbb{E}\bigl[\bar\varepsilon_{1,l}(c-uc)\bigr]
     - \mathbb{E}\bigl[\bar\varepsilon_{0,l}(c-uc)\bigr]\Bigr\} \notag\\
  &= \frac{2}{N^{tr}_{1,l}}\left(\tau c - \frac{\Var(\varepsilon)}{N^{tr}_{1,l}}\right) = \frac{2}{10}\bigl(20 - 0.3\bigr) = 3.94.
\end{align}
For the second term in \eqref{eq:app_step2a}, we have
\begin{align}\label{eq:app_step2d}
  \mathbb{E}\bigl[(\Delta\hat\tau)^2\bigr]
  &= \frac{1}{(N^{tr}_{1,l})^2}\,\mathbb{E}\bigl[(c-uc)^2\bigr]
    \notag\\
  &= \frac{1}{(N^{tr}_{1,l})^2}
     \Bigl(c^2 - 2c\,\underbrace{\mathbb{E}[uc]}_{=\,0} + \mathbb{E}[uc^2]\Bigr)
    \notag\\
  &= \frac{c^2 + \Var(\varepsilon)}{(N^{tr}_{1,l})^2}
   = \frac{100 + 3}{100}
   = 1.03.
\end{align}
Inserting \eqref{eq:app_step2b} and \eqref{eq:app_step2d} into \eqref{eq:app_step2a} gives 
\begin{equation}\label{eq:app_step2}
  \mathbb{E}\bigl[\Delta R\bigr] = 2\mathbb{E}\bigl[\hat\tau(l;\mathcal{S}^{tr})\,\Delta\hat\tau\bigr] + \mathbb{E}\bigl[(\Delta\hat\tau)^2\bigr]
  = 3.94 + 1.03
  = 4.97.
\end{equation}

Let us now re-arrange the expected change in the penalty term, $\mathbb{E}\left[\Delta\widehat{\text{pen}}\right]$, in \eqref{eq:app_step5a_expected} by
\begin{align}\label{eq:app_step4a}
 \mathbb{E}\left[ \Delta\widehat{\text{pen}} \right] 
  &\coloneqq \mathbb{E}\left[\widehat{\text{pen}}^{\dagger}  - \widehat{\text{pen}}\right] 
    \notag\\
  &= \mathbb{E}\left[ \left(\frac{1}{N^{tr}_l} + \frac{1}{N^{est}_l}\right)
     \left(\frac{\hat S^{2\dagger}_1 - \hat S^{2}_1}{p}
         + \frac{\hat S^{2\dagger}_0 - \hat S^{2}_0}{1-p}\right) \right] 
    \notag\\
  &= \mathbb{E}\left[  \left(\frac{1}{N^{tr}_l} + \frac{1}{N^{est}_l}\right)
     \left(\frac{\Delta\hat S^2_1}{p}
         + \frac{\overbrace{\Delta\hat S^2_0}^{=\,0}}{1-p}\right) \right] 
    \notag\\
  &= \left(\frac{1}{N^{tr}_l} + \frac{1}{N^{est}_l}\right)
     \frac{\mathbb{E}\left[\Delta\hat S^2_1 \right]}{p} = \left(\frac{1}{20} + \frac{1}{20}\right)\frac{9.7}{0.5}
  = 1.94, 
\end{align}
where we use that the contamination affects the treated group only, so $\Delta\hat S^2_0 \coloneqq \hat S^{2\dagger}_0(l;\mathcal{S}^{tr})
- \hat S^2_0(l;\mathcal{S}^{tr}) = 0$. Further, we use in \eqref{eq:app_step4a} that 
\begin{align}\label{eq:app_step3b}
  \mathbb{E}\left[\Delta\hat S^2_1\right]
  &= \mathbb{E}\bigl[ \hat S^{2\dagger}_1(l;\mathcal{S}^{tr}) - \hat S^2_1(l;\mathcal{S}^{tr})\bigr] \notag\\
  &= \mathbb{E}\left[ \frac{(c-{\bar \varepsilon_{1, l, (-1)}})^2 - (uc-{\bar \varepsilon_{1, l, (-1)}})^2}{N^{tr}_{1,l}} \right] \notag\\
  &= \frac{\bigl(c^2 + \mathbb{E}\left[{\bar \varepsilon_{1, l, (-1)}}^2\right]\bigr)
          - \bigl(\Var(\varepsilon) + \mathbb{E}\left[{\bar \varepsilon_{1, l, (-1)}}^2\right]\bigr)}{N^{tr}_{1,l}}
    \notag\\
  &= \frac{c^2 - \Var(\varepsilon)}{N^{tr}_{1,l}}
   = \frac{100 - 3}{10}
   = 9.7,
\end{align}
where ${\bar \varepsilon_{1, l, (-1)}}$ is the mean of the $N^{tr}_{1,l}-1$ treated errors $\varepsilon_2,\dots,\varepsilon_{N^{tr}_{1,l}}$ with $\mathbb{E}\left[{\bar \varepsilon_{1, l, (-1)}}\right]=0$.

Combining the results for $\mathbb{E}\bigl[\Delta R\bigr]$ in \eqref{eq:app_step2} and $\mathbb{E}\left[ \Delta\widehat{\text{pen}} \right]$ in \eqref{eq:app_step5a_expected} gives
\begin{equation*}
    \mathbb{E}\bigl[\Delta\hat q(l)\bigr]
  = \mathbb{E}\bigl[\Delta R\bigr] - \mathbb{E}\bigl[\Delta\widehat{\text{pen}}\bigr]
  = 4.97 - 1.94
  = 3.03.
\end{equation*}
Conditional on the occurrence of the extreme draw, the reward is raised by roughly $5$ and the penalty by roughly $2$, so the variance correction of \eqref{eq:MSE_CT_H} absorbs less than half of the distortion, and the leaf still contributes roughly $3$ more to the criterion than it should when correctly adapting to the noise contamination.

\subsection{Median-based Honest Splitting Criterion}
\label{appendix:Honest_Splitting_MSD}

The derivation is based on  Appendix~\ref{appendix:Honest_Splitting_CT}, replacing
the difference-in-means by the HL estimator. The samples
$\mathcal{S}^{tr},\mathcal{S}^{est},\mathcal{S}^{te}$, the treatment
probability $p$, and the population quantities $\tau(x;\Pi),\,p_l$ are
unchanged. Throughout, the $\mathbb{E}$-operator without subscript denotes joint expectation
over $(\mathcal{S}^{est},\mathcal{S}^{te})$ and we add a subscript only where the
distinction matters.

Recall that for a fixed tree $\Pi$ and estimation sample $\mathcal{S}^{est}$, the leaf-level
treatment effect is estimated by the HL statistic from \eqref{eq:HL-leafwise}:
\begin{equation}\label{eq:HL_est}
\hat\tau_{\mathrm{HL},l}\equiv\hat\tau_{\mathrm{HL}}(X_i;\mathcal{S}^{est},\Pi)
    = \operatorname{med}\!\bigl\{Y_j^{1}-Y_m^{0}: j\in\mathcal{S}^{est}_1,\;
      m\in\mathcal{S}^{est}_0,\; X_j,X_m\in l(x;\Pi)\bigr\}, 
\end{equation}
where we abbreviate $\hat\tau_{\mathrm{HL},l}\equiv\hat\tau_{\mathrm{HL}}(X_i;\mathcal{S}^{est},\Pi)$
for $i\in\mathcal{S}^{te}$ with $X_i\in l$. By
Theorems~\ref{thm:theorem_consistency} and~\ref{thm:theorem_median_unbiased},
$\hat\tau_{\mathrm{HL},l}$ is consistent for $\Delta(x;\Pi)$, median-unbiased,
and asymptotically normal (Appendix~\ref{proof:theorem_consistency}). Furthermore, when the sufficient conditions of Lemma~\ref{lemma:Delta_equals_tau_leaf} or Proposition~\ref{prop:Delta_equals_tau_equals_tauMed} hold, the shift parameter coincides with the leafwise treatment effect. Exact mean-unbiasedness fails in finite samples. However, under the location-shift model of Assumption~\ref{ass:HL_shift}(i) each within-leaf pairwise treated$-$control difference equals $\tau(l;\Pi)$ plus a noise term that is symmetric about zero, since the two arms are drawn i.i.d. from a common within-leaf baseline. This is the same symmetry that yields the median-unbiasedness of Theorem~\ref{thm:theorem_median_unbiased}. Under the median convention of Definition~\ref{median_distribution} it makes $\hat\tau_{\mathrm{HL},l}$ mean-unbiased when the two arms are equal in size, and leaves only a higher-order bias otherwise, so that
\begin{equation}\label{eq:HL_approx_unbiased}
    \mathbb{E}_{\mathcal{S}^{est}}[\hat\tau_{\mathrm{HL},l}]
    =\tau(l;\Pi)+O(1/N_l^{est})\qquad\text{as }N_l^{est}\to\infty.
\end{equation}

The MSE is defined exactly as in \eqref{eq:mse_tau} with
$\hat\tau_{\mathrm{HL}, l}$ replacing $\hat\tau_l$:
\begin{align}\label{eq:mse_tau_med_appendix}
\begin{split}
    \mathrm{MSE}_{\tau,\mathrm{med}}(\mathcal{S}^{te},\mathcal{S}^{est},\Pi)
    &=\frac{1}{N^{te}}\sum_{i\in\mathcal{S}^{te}}
      \bigl[(\tau_i-\hat\tau_{\mathrm{HL},l})^2-\tau_i^2\bigr]\\
    &=\sum_{l\in\Pi}\frac{N_l^{te}}{N^{te}}
      \!\left[\hat\tau_{\mathrm{HL},l}^{\,2}
      -2\,\hat\tau_{\mathrm{HL},l}\cdot\frac{1}{N_l^{te}}
        \sum_{\substack{i\in\mathcal{S}^{te}\\X_i\in l}}\!\tau_i\right].
\end{split}
\end{align}
The algebra mirrors \eqref{eq:expand}--\eqref{eq:grouped}. Apply $\mathbb{E}_{(\mathcal{S}^{te},\mathcal{S}^{est})}[\cdot]$ to
\eqref{eq:mse_tau_med_appendix}, and by linearity, the expectation acts separately on the two
summands inside each leaf. Within each summand, identify which factors depend on
either $\mathcal{S}^{te}$ or $\mathcal{S}^{est}$. The HL estimator
$\hat\tau_{\mathrm{HL},l}$ depends only on $\mathcal{S}^{est}$ while the leaf size
$N_l^{te}$ and the leaf-restricted sum
$\sum_{i\in\mathcal{S}^{te},X_i\in l}\tau_i$ depend only on $\mathcal{S}^{te}$.
By the independence of $\mathcal{S}^{te}$ and $\mathcal{S}^{est}$, the joint
expectation of each summand factorizes into an $\mathcal{S}^{te}$-part and an
$\mathcal{S}^{est}$-part:
\begin{align}\label{eq:factorise_med}
\begin{split}
    \mathbb{E}\!\left[\frac{N_l^{te}}{N^{te}}\,\hat\tau_{\mathrm{HL},l}^{\,2}\right]
    &\;=\;\underbrace{\mathbb{E}_{\mathcal{S}^{te}}\!\left[\frac{N_l^{te}}{N^{te}}\right]}_{\textstyle \to\,p_l}
      \;\cdot\;\mathbb{E}_{\mathcal{S}^{est}}\!\bigl[\hat\tau_{\mathrm{HL},l}^{\,2}\bigr],\\[2pt]
    \mathbb{E}\!\left[\frac{N_l^{te}}{N^{te}}\,\hat\tau_{\mathrm{HL},l}\cdot
      \frac{1}{N_l^{te}}\!\!\sum_{\substack{i\in\mathcal{S}^{te}\\X_i\in l}}\!\!\tau_i\right]
    &\;=\;\mathbb{E}_{\mathcal{S}^{est}}[\hat\tau_{\mathrm{HL},l}]
      \;\cdot\;\underbrace{\mathbb{E}_{\mathcal{S}^{te}}\!\left[\frac{1}{N^{te}}
        \!\!\sum_{\substack{i\in\mathcal{S}^{te}\\X_i\in l}}\!\!\tau_i\right]}_{\textstyle =\,p_l\,\tau(l;\Pi)}.
\end{split}
\end{align}
The $\mathcal{S}^{te}$-part in the first line is the empirical leaf
probability with limit $p_l$. The $\mathcal{S}^{te}$-part in the second line
uses the simplification
$\tfrac{N_l^{te}}{N^{te}}\cdot\tfrac{1}{N_l^{te}}=\tfrac{1}{N^{te}}$ and is the
empirical leaf-restricted sample mean of the true $\tau_i$ on
$\mathcal{S}^{te}$, which by random sampling equals $p_l\,\tau(l;\Pi)$. The
$\mathcal{S}^{est}$-part in each line carries a moment of $\hat\tau_{\mathrm{HL},l}$.

Summing \eqref{eq:factorise_med} over $l\in\Pi$ and pulling out the common
$p_l$ gives the EMSE as the expectation of \eqref{eq:mse_tau_med_appendix}:
\begin{equation}\label{eq:emse_med}
    \mathrm{EMSE}_{\tau,\mathrm{med}}(\Pi)
    \;=\;\sum_{l\in\Pi}p_l\!\left[\,\mathbb{E}[\hat\tau_{\mathrm{HL},l}^{\,2}]
      \,-\,2\,\mathbb{E}[\hat\tau_{\mathrm{HL},l}]\,\tau(l;\Pi)\,\right].
\end{equation}
This is the key population criterion. To turn it into a splitting rule computed on \(\mathcal S^{tr}\), we need training-sample approximations for two quantities:
\[\mathbb{E}[\hat\tau_{\mathrm{HL},l}]
\qquad\text{and}\qquad
\mathbb{E}[\hat\tau_{\mathrm{HL},l}^{\,2}].
\]
Both expectations are $\mathcal{S}^{est}$-only:
$\hat\tau_{\mathrm{HL},l}$ and $\hat\tau_{\mathrm{HL},l}^{\,2}$ are
$\mathcal{S}^{est}$-measurable, and the $\mathcal{S}^{te}$-side has been
absorbed into the $p_l$ and $p_l\,\tau(l;\Pi)$ in \eqref{eq:factorise_med}. We
therefore drop the $\mathcal{S}^{est}$ subscript on $\mathbb{E}[\cdot]$ from
here on.
Maximizing $-\mathrm{EMSE}_{\tau,\mathrm{med}}(\Pi)$ on $\mathcal{S}^{tr}$
requires training-sample plug-ins for the two estimation-sample moments of $\hat\tau_{\mathrm{HL},l}$ that appear in \eqref{eq:emse_med}: the first moment $\mathbb{E}[\hat\tau_{\mathrm{HL},l}]$ in the cross-term, and the second moment $\mathbb{E}[\hat\tau_{\mathrm{HL},l}^{\,2}]$ in the leading term. These two moments behave very differently.

The cross-term equals $\mathbb{E}[\hat\tau_{\mathrm{HL},l}]\cdot\tau(l;\Pi)$,
a product of two population objects, each consistently estimable on $\mathcal{S}^{tr}$. The natural plug-in for the first factor $\mathbb{E}[\hat\tau_{\mathrm{HL},l}]$ is the training-sample HL estimator
$\hat\tau_{\mathrm{HL}}(l;\mathcal{S}^{tr})$, which by
\eqref{eq:HL_approx_unbiased} (applied with $\mathcal{S}^{tr}$ in place of
$\mathcal{S}^{est}$) is consistent for $\tau(l;\Pi)$ and matches the structural
form of the population statistic. For the second factor $\tau(l;\Pi)$ two
canonical estimators are available on $\mathcal{S}^{tr}$:
$\hat\tau_{\mathrm{HL}}(l;\mathcal{S}^{tr})$, which is consistent by
Theorem~\ref{thm:theorem_consistency}; or the difference-in-means
$\hat\tau(l;\mathcal{S}^{tr})$, which is exactly unbiased under unconfoundedness. Both
are valid plug-ins, and the resulting criterion is asymptotically the same to
leading order.

The second-moment $\mathbb{E}[\hat\tau_{\mathrm{HL},l}^{\,2}]$ is qualitatively harder because no estimator of $\hat\tau_{\mathrm{HL},l}^{\,2}$
on $\mathcal{S}^{tr}$ is mean-unbiased for $\mathbb{E}[\hat\tau_{\mathrm{HL},l}^{\,2}]$
without further structure. We discuss two strategies in the following: explicit density-based variance estimation and a cross-product plug-in strategy. Note that reformulations concerning the EMSE are carried out under the leafwise location-shift model of Assumption~\ref{ass:HL_shift}(i), so that $\Delta(x;\Pi)=\tau(l;\Pi)$ and $\hat\tau_{\mathrm{HL},l}$ targets the leaf CATE.

\subsubsection{Explicit density-based penalty}

The variance--mean identity, combined with
\eqref{eq:HL_approx_unbiased},
\begin{align}\label{eq:second_moment_id}
\begin{split}
    \mathbb{E}\!\left[\hat\tau_{\mathrm{HL},l}^{\,2}\right]
    &=\bigl(\mathbb{E}[\hat\tau_{\mathrm{HL},l}]\bigr)^2
      +\Var(\hat\tau_{\mathrm{HL},l})\\
    &=\tau(l;\Pi)^2+\Var(\hat\tau_{\mathrm{HL},l})+O\left(1/N_l^{est}\right).
\end{split}
\end{align}
Thus, the second moment equals the squared leafwise CATE plus the sampling variance of the HL estimator, up to a higher-order term of order $O(1/N_l^{est})$ that collects the bias contributions of \eqref{eq:HL_approx_unbiased}. 
Substituting \eqref{eq:second_moment_id} into \eqref{eq:emse_med} gives
\begin{align}\label{eq:emse_med_var}
    \mathrm{EMSE}_{\tau,\mathrm{med}}(\Pi)
    &=\sum_{l\in\Pi}p_l\!\left[\Var(\hat\tau_{\mathrm{HL},l})
       -\tau(l;\Pi)^2\right]+O\left(1/N_l^{est}\right).
\end{align}
This is the structural analogue of the mean-based \eqref{eq:emse_final}, now with $\Var(\hat\tau_{\mathrm{HL},l})$ instead of $\Var(\hat\tau_{l})$.
The latter has a closed form from the classical theory of the Hodges--Lehmann
estimator and the Wilcoxon rank-sum statistic
\citep{hodges_estimates_1963, lehmann_nonparametrics_1975}. Under the leafwise
location-shift model of Assumption~\ref{ass:HL_shift}, the treated and control
outcomes share a common within-leaf density up to the shift, so
$\int f_{1,l}^2 = \int f_{0,l}^2 =: \int f_l^2$, and the consistency argument of
Appendix~\ref{proof:theorem_consistency} yields, under random treatment assignment,
\begin{align}\label{eq:var_HL}
    \Var(\hat\tau_{\mathrm{HL},l})
    &\approx\frac{1}{N_l^{est}}
      \underbrace{\frac{1}{12\,p\,(1-p)\,\bigl(\int f_l^2(u)\,du\bigr)^2}}_{\equiv\,V_l},
\end{align}
where the control group proportion $\lambda$ of Assumption~\ref{ass:HL_shift}(iii) equals $\lambda=1-p$, so $\lambda(1-\lambda)=p(1-p)$. The single density functional $\int f_l^2$, rather than the pointwise densities $f_d(m_d(l))$ at the two arm medians, is the signature of the rank-based estimator: the Hodges--Lehmann estimator is the median of pairwise differences, not the difference of medians, and its efficiency is governed by $\int f_l^2$,
the same functional that determines the asymptotic relative efficiency of the
Wilcoxon test \citep{hodges_estimates_1963}.

To build the sample criterion on $\mathcal{S}^{tr}$ we replace
$p_l\!\to\!N_l^{tr}/N^{tr}$, estimate $V_l$ by $\widehat V_l$ from a
nonparametric estimate of $\int f_l^2$, and use the plug-in
$\hat\tau_{\mathrm{HL}}(l;\mathcal{S}^{tr})^2$ for $\tau(l;\Pi)^2$. The plug-in
is upward-biased,
\begin{equation}\label{eq:plugin_bias}
    \mathbb{E}\!\left[\hat\tau_{\mathrm{HL}}(l;\mathcal{S}^{tr})^2\right]
    =\tau(l;\Pi)^2+\Var_{\mathcal{S}^{tr}}(\hat\tau_{\mathrm{HL},l}) + O\left(1/N_l^{tr}\right),
\end{equation}
and subtracting both the training-sample bias and the estimation-sample
variance from \eqref{eq:var_HL} yields the median-based honest criterion
\begin{align}\label{eq:Q_MSD_H}
    \hat Q^{\,\mathrm{MSD,H}}_\tau(\Pi;\mathcal{S}^{tr})
    &=\sum_{l\in\Pi}\frac{N_l^{tr}}{N^{tr}}
      \!\left[\hat\tau_{\mathrm{HL}}(l;\mathcal{S}^{tr})^2
      -\!\left(\frac{1}{N_l^{tr}}+\frac{1}{N_l^{est}}\right)\!\widehat V_l\right],
\end{align}
to be maximized. This criterion is the closest analogue of the honest mean-based splitting rule. The first term rewards HL-based treatment-effect heterogeneity. The second term penalizes leaves in which the HL estimate would be noisy, accounting for both
training-sample plug-in bias and estimation-sample variance.
The difficulty is computational. Estimating \(\widehat V_l\) requires estimating \(\int f_l^2(u)\,du\) inside every candidate leaf and at every candidate split. In small leaves this density estimate can be noisy, and in a tree-growing loop it is expensive to recompute repeatedly. For this reason, the implemented rule
uses a density-free approximation in the next subsection. 

\subsubsection{Cross-product plug-in}

We bypass the density estimation by substituting plug-ins directly into
\eqref{eq:emse_med} rather than first invoking
\eqref{eq:second_moment_id}. The three substitutions are
\begin{align}\label{eq:plugins_code}
\begin{split}
    \mathbb{E}[\hat\tau_{\mathrm{HL},l}]&\;\longmapsto\;\hat\tau_{\mathrm{HL}}(l;\mathcal{S}^{tr}),\\
    \tau(l;\Pi)&\;\longmapsto\;\hat\tau(l;\mathcal{S}^{tr}),\\
    \mathbb{E}[\hat\tau_{\mathrm{HL},l}^{\,2}]
    &\;=\;\tau(l;\Pi)^2+O(1/N_l^{est})\;\longmapsto\;\hat\tau(l;\mathcal{S}^{tr})^2.
\end{split}
\end{align}
The first two lines are direct first-moment plug-ins:
\eqref{eq:HL_approx_unbiased} motivates
$\hat\tau_{\mathrm{HL}}(l;\mathcal{S}^{tr})$ as a consistent estimate of the
HL center, and the difference-in-means $\hat\tau(l;\mathcal{S}^{tr})$ is the canonical unbiased estimator of $\tau(l;\Pi)$ under unconfoundedness. 
The third line uses the leading-order reduction \eqref{eq:second_moment_id} to replace $\mathbb{E}[\hat\tau_{\mathrm{HL},l}^{\,2}]$ by $\tau(l;\Pi)^2$, discarding the estimation-sample variance $\Var(\hat\tau_{\mathrm{HL},l})=O(1/N_l^{est})$, and then substitutes the difference-in-means estimator for $\tau(l;\Pi)$.
Two same-order effects are thereby incurred: the discarded estimation-sample variance, of order $O(1/N_l^{est})$, and the upward bias $O(1/N_l^{tr})$ of the training-sample plug-in $\hat\tau(l;\mathcal{S}^{tr})^2$ for $\tau(l;\Pi)^2$.

Assembling \eqref{eq:plugins_code} in \eqref{eq:emse_med} and negating to
the maximization objective gives the criterion as implemented:
\begin{align}\label{eq:Q_MSD_code}
\begin{split}
    \hat Q^{\,\mathrm{MSD}}_\tau\left(\Pi;\mathcal{S}^{tr}\right)
    &=\sum_{l\in\Pi}\frac{N_l^{tr}}{N^{tr}}
       \!\left[2\,\hat\tau_{\mathrm{HL}}\left(l;\mathcal{S}^{tr}\right)\,
        \hat\tau\left(l;\mathcal{S}^{tr}\right)-\hat\tau\left(l;\mathcal{S}^{tr}\right)^2\right]\\
    &=\sum_{l\in\Pi}\frac{N_l^{tr}}{N^{tr}}
       \!\left[\hat\tau_{\mathrm{HL}}\left(l;\mathcal{S}^{tr}\right)^2\,-\Bigl(
       \hat\tau_{\mathrm{HL}}\left(l;\mathcal{S}^{tr}\right)-\hat\tau\left(l;\mathcal{S}^{tr}\right)\Bigr)^2\right]
\end{split}
\end{align}
The bracketed expression in the second line of \eqref{eq:Q_MSD_code} exposes the structure: the criterion rewards effect heterogeneity by the squared HL estimator and
penalizes the squared difference between the HL and difference-in-means estimators. The disagreement term plays the role of the density-based penalty $\widehat V_l$ in \eqref{eq:Q_MSD_H}, replacing nonparametric variance estimation with a self-consistency check between two leaf-level estimators of $\tau(l;\Pi)$.

In summary, \eqref{eq:Q_MSD_H} and \eqref{eq:Q_MSD_code} target the same population object in \eqref{eq:emse_med}:
\begin{itemize}
    \item Splitting criterion \eqref{eq:Q_MSD_H} carries an explicit density-based variance penalty that absorbs both the training-sample plug-in bias and the estimation-sample variance, matching the per-leaf EMSE up to the Hodges--Lehmann bias of \eqref{eq:HL_approx_unbiased}. It is operationally costly because $\widehat V_l$ requires kernel density estimation at the leaf level in every candidate split.
    \item Splitting criterion \eqref{eq:Q_MSD_code} replaces the estimation-sample second moment of $\hat\tau_{\mathrm{HL},l}$ by the squared training-sample difference-in-means, sidestepping density estimation entirely. Its bracket estimates $\tau(l;\Pi)^2$ up to a training-sample residual of order $O(1/N_l^{tr})$ and additionally discards the estimation-sample variance $\Var(\hat\tau_{\mathrm{HL},l})=O(1/N_l^{est})$.
\end{itemize}

\subsection{Baseline robust splitting rules: MAD and LMS}
\label{appendix:MAD_LMS}
The previous section showed that the MSD criterion is a
structural analogue of the honest mean-based MSE rule of \citet{athey_recursive_2016} and that it is derived from an EMSE decomposition with a heterogeneity reward and a variance penalty.
In this section, we will show that the MAD and LMS rules do not follow the same decomposition. They are nevertheless useful comparisons because they correspond to two different robustness ideas. LMS robustifies fit-based splitting, whereas MAD isolates the discrepancy between mean-based and Hodges--Lehmann leaf estimates. Both splitting rules align with the four-rule taxonomy of \citet{athey_recursive_2016}.

We first consider LMS. LMS is the median-of-squared-residuals analog of the fit-based rule in
\citet{athey_recursive_2016}. The adaptive fit-criterion \eqref{eq:MSE_fit_A} chooses splits by within-leaf fit of $Y_i$ given $(D_i,X_i)$, where the leaf-level outcome model in
\citet{ZeileisModelBasedRP} is an intercept plus treatment indicator (in two-arm settings, equivalent to the treatment-stratified leaf mean $\hat\mu_d$ of \citealp{athey_recursive_2016}). The LMS criterion replaces the within-leaf mean of squared residuals by the within-leaf median of squared residuals, and then averages these leafwise median losses using the leaf samples \(N_l^{tr}/N^{tr}\). Recall that the
criterion is
\begin{equation}
\widehat Q_\mu^{\mathrm{LMS}}
(\mathcal S^{tr},\Pi)
:=
\sum_{l\in\Pi}
\frac{N_l^{tr}}{N^{tr}}
\underset{i\in\mathcal S_l^{tr}}{\operatorname{med}}
\left[
\left(
Y_i-
\hat\mu(D_i,X_i;\mathcal S^{tr},\Pi)
\right)^2
\right].
\end{equation}
Candidate splits are chosen by minimizing $\widehat Q_\mu^{\mathrm{LMS}}$. Since the criterion uses medians rather than means, large residuals receive less influence than under the corresponding mean-squared fit criterion. This leads to structural consequences. 
A fit-based criterion rewards covariates that improve prediction of the outcome \(Y_i\), regardless of whether those covariates generate treatment-effect heterogeneity. Thus, a covariate that shifts baseline outcomes but leaves \(\tau(X_i)\) unchanged can still be selected by LMS. For this reason, LMS is a
robust outcome-fit criterion rather than a criterion directly targeted at CATE heterogeneity.
Also, LMS does not have a direct honest analogue in the sense of \citet{athey_recursive_2016}. The honest MSE and fit-based criteria rely on per-observation squared-error decompositions. Even in the leafwise form in \eqref{eq:LMS_split}, the median of squared residuals does not decompose this way: $\mathrm{med}_i[(Y_i -
\hat\mu_i)^2]$ is not a sum of per-observation risks, so there is no variance-penalty correction analogous to
\eqref{eq:honest_median}. The rule can still be used inside an
honest forest, because split selection and leaf estimation remain separated, but the criterion itself is not an honest EMSE criterion.
Furthermore, two implementation details in \eqref{eq:LMS_split} depart from \eqref{eq:MSE_fit_A}. 
First, the leaf model $\hat\mu(D_i,X_i;\mathcal{S}^{tr},\Pi)$ from \citet{ZeileisModelBasedRP} is an
intercept-plus-indicator regression rather than the unconstrained
treatment-stratified mean; in two-arm settings the two coincide. Second, the $-Y_i^2$
offset that appears in \eqref{eq:MSE_fit_A} is dropped in \eqref{eq:LMS_split}, because the median is not linear. Subtracting \(Y_i^2\) inside the median would not produce a partition-invariant offset and could therefore change the selected
split.

The MAD rule has a different role. Unlike LMS, it is directly related to the implemented MSD criterion. The implemented MSD rule in \eqref{eq:Q_MSD_code_main} can be written as
\begin{equation}
    \hat\tau_{\mathrm{HL}}(l;\mathcal{S}^{tr})^2
    -\bigl(\hat\tau_{\mathrm{HL}}(l;\mathcal{S}^{tr})-\hat\tau(l;\mathcal{S}^{tr})\bigr)^2. 
\end{equation}
The first term is a heterogeneity reward.  The second term penalizes the $L^2$
discrepancy between the two leaf-level estimators. MAD keeps only the corresponding \(L^1\) discrepancy. The motivation for this rule comes from the infeasible criterion \(\mathrm{MAD}_\tau\) in \eqref{eq:infeasible_MAD}, which is the natural \(L^1\) analogue of the infeasible MSE in \eqref{eq:MSE_treatment_effect}. Instead of squared deviations from the CATE, it considers absolute deviations from the CMTE \(\tau_{\operatorname{med}}(X_i)\).
However, the feasible \eqref{eq:MAD_split} is not an
estimator of \eqref{eq:infeasible_MAD}. Beyond the shared missing-counterfactual problem, $\tau_{\operatorname{med}}(X_i)$ has an identification problem \citep{Addanki2024LimitsOA} that the CATE $\tau(X_i)$ does not. In addition, the two objects inside the absolute value in \eqref{eq:MAD_split} are both leaf-level CATE estimators, not residuals of $\tau_i$ around a conditional median.

This feasible construction has two consequences.
First, MAD is also a heuristic rule rather than an honest EMSE criterion. The criterion in \eqref{eq:MAD_split} has no target-plus-penalty structure analogous to \eqref{eq:honest_median}. Second, under the location-shift conditions of Appendix~\ref{appendix:HL_estimator}, $\hat\tau(l;\mathcal{S}^{tr})$ and $\hat\tau_{\mathrm{HL}}(l;\mathcal{S}^{tr})$ both target the same leafwise CATE, so agreement between them is not itself evidence of treatment-effect heterogeneity. When the location-shift conditions fail, discrepancy between the two estimators may instead reflect skewness, heavy tails, or other distributional features within the leaf. In that case, MAD tends to partition according to estimator discrepancy or distributional shape rather than directly according to variation in \(\tau(X_i)\).
Note that none of the four rules in \citet[\S 3.1]{athey_recursive_2016} compares two
leaf-level CATE estimators against each other. 

\subsection{AIPW for ATE Estimation with Causal Forests}
\label{appendix:AIPW_ATE_CF}

We follow \cite{chernozhukov_doubledebiased_2018} and consider the standard AIPW estimator for the ATE,
\begin{align}
\tau_{\text{AIPW}} &= \frac{1}{N}\sum_{i=1}^N \left(\underbrace{\mu_1(X_i) - \mu_0(X_i)}_{= \tau(X_i)} + \frac{D_i(Y_i - \mu_1(X_i))}{p(X_i)} - \frac{(1-D_i)(Y_i - \mu_0(X_i))}{1-p(X_i)} \right) \\
&= \frac{1}{N}\sum_{i=1}^N \left( \tau(X_i) + \underbrace{\frac{D_i - p(X_i)}{p(X_i)(1-p(X_i))}}_{\text{debiasing weight}} \cdot  \underbrace{(Y_i - \mu_{D_i}(X_i))}_{\text{outcome residual}} \right),
\end{align}
which can be re-written in terms of CATE estimates that are corrected by debiasing-weighted outcome residuals. The initial CATE estimates of causal forests might be noisy due to slow convergence rates and simple averaging is inefficient. The debiasing weights provide a correction for estimation error based on actual treatment status and the propensity score while the outcome residual measures the prediction error in the outcome model. Combined with cross-fitting, $\hat{\tau}_{\text{AIPW}}$ remains asymptotically normal and efficient although the underlying $\hat\tau(x)$ estimates might be imprecise \citep{chernozhukov_doubledebiased_2018, athey_policy_2021}.
We estimate the cross-fitted AIPW estimator \citep{robins_estimation_1994, chernozhukov_doubledebiased_2018} by
\begin{equation}
\label{ATE_AIPW}
\begin{split}
\hat{\tau}_{\text{AIPW}} = \frac{1}{N} \sum_{i=1}^{N} &\Bigg(
\hat{\tau}^{(-k(i))}(X_i) + \frac{D_i - \hat{p}^{(-k(i))}(X_i)}{\hat{p}^{(-k(i))}(X_i) (1 - \hat{p}^{(-k(i))}(X_i))} 
\cdot ( Y_i - \hat{\mu}_{D_i}^{(-k(i))}(X_i) ) \Bigg),
\end{split}
\end{equation}
where the data is partitioned into disjoint folds indexed by $k$ and $$\hat{\tau}^{(-k)}(X_i) = \hat{\mu}_1^{(-k)}(X_i) - \hat{\mu}_0^{(-k)}(X_i),$$ by construction. We construct $\hat{\mu}^{(-k)}_{d}(x)$ by training on the complement of fold $k$, and define $k(i)$ as the fold assignment function that maps observation $i$ to its fold. 
That formulation in \eqref{ATE_AIPW} allows us to plug-in estimates from a causal forest into a doubly robust average treatment effect estimator which results in a semiparametrically efficient average treatment effect estimate and precise variance estimates \citep{chernozhukov_doubledebiased_2018, athey_policy_2021}. 

\subsection{Confidence Intervals via Bootstrap of Little Bags}
\label{appendix:BLB} 

Suppose $\hat{\tau}(x)$ is an honest forest-based estimator constructed from $B$ causal trees to target the CATE in Equation \eqref{eq:CATE}. Our goal is to quantify the uncertainty of $\hat{\tau}(x)$ and construct pointwise confidence intervals. We follow the Bootstrap of Little Bags (BLB) construction of \citet{athey_generalized_2019} and adapt it to the tree-level treatment-effect predictions produced by the original causal forest algorithm \citep{athey_recursive_2016, wager_estimation_2018}. The theoretical results of \citet{athey_generalized_2019} establish asymptotic validity for their corresponding generalized random forest estimator, whereas our implementation applies the same resampling principle to completed causal forest predictions based on the framework in \citet{athey_recursive_2016} and \cite{wager_estimation_2018}.

For the generalized random forest estimator of \citet{athey_generalized_2019}, the asymptotic variance in their equation~(15) satisfies 
\begin{align}\label{eq:Var_BLB}
\Var(\hat\tau(x)) = \kappa(x)^{-2} H_N(x),
\end{align}
where $H_N(x)$ is the variance of the score forest and $\kappa(x)$ is a curvature term that reflects local identification strength. 
Because our BLB is applied to completed treatment effect predictions, our $\widehat{H}_N^{\text{BLB}}(x)$ in \eqref{eq:Var_BLB_estim} already estimates the sampling variance of $\hat\tau(x)$ on the treatment effect scale, and the identity in \eqref{eq:Var_BLB} with $\kappa(x)^{-2}$ does not apply to it. We nonetheless retain the multiplier $\kappa(x)^{-2}$, with $\kappa(x)=\mathbb{E}[(D_i-p(X_i))^2\mid X_i=x]$, as a deliberately conservative scaling that widens the intervals and improves finite-sample coverage in the designs of Section~\ref{sec:simulations}. 
We estimate $\widehat{\kappa}(x)$ via an honest regression forest of $(D_i-\hat{p}(X_i))^2$ on covariates. Under a known constant propensity $p(X_i)$, $\kappa(x)$ is constant and the factor reduces to a fixed scalar. Table \ref{tab:blb_scaling_ci} compares estimated confidence intervals with and without this scaling factor and reports undercoverage when omitting $\widehat{\kappa}(x)^{-2}$ in \eqref{eq:Var_BLB_estim}.

Following \citet{sexton_standard_2009} and Section 4.1 of \citet{athey_generalized_2019}, we approximate the infeasible half-sampling variance estimator $H_N^{\text{HS}}(x)$ of equation (18) in \citet{athey_generalized_2019} via a BLB. Fix an integer $\ell \ge 2$ for the size of each bag (a group of trees) and define the number of groups $G = \lfloor B / \ell \rfloor$.
For each group $g = 1, \dots, G$, draw a half-sample
\begin{align*}
    \mathcal{S}_{\text{half},g} \subset  \{1,\dots,N\}, \qquad |\mathcal{S}_{\text{half},g}| = \lfloor N/2 \rfloor,
\end{align*}
uniformly without replacement. Conditional on $\mathcal{S}_{\text{half},g}$, draw $\ell$ subsamples $I_{g,b} \subseteq \mathcal{S}_{\text{half},g}$ of size
\begin{align*}
    |I_{g,b}| = \lfloor |\mathcal{S}_{\text{half},g}| \cdot {\rho} \rfloor,
    \qquad \rho \in (0,1),
\end{align*}
and grow $\ell$ honest causal trees using only observations in $I_{g,b}$. Let $\hat{\tau}_{g,b}(x)$ denote the CATE prediction of tree $b$ in group $g$ at covariate value $x$. Trees within a group share the same half-sample and are thus dependent, while trees from different groups are approximately independent. Define the group-level mean predictions as an average over trees $b \in \{1, ..., \ell \}$ in group $g$ by
\begin{align}
    \bar{\tau}_g(x) = \frac{1}{\ell} \sum_{b=1}^{\ell} \hat{\tau}_{g,b}(x),
\qquad g = 1,\dots,G,
\end{align}
and the overall mean over all groups $G$ by
\begin{align}
    \bar{\tau}(x) = \frac{1}{G} \sum_{g=1}^{G} \bar{\tau}_g(x).
\end{align}
We estimate the forest variance $H_N(x)$ using the ANOVA decomposition based on within- and between-group variation of tree predictions by
\begin{align*}
\widehat{V}_{\text{between}}(x)
&=
\frac{1}{G-1}
\sum_{g=1}^{G}
\left( \bar{\tau}_g(x) - \bar{\tau}(x) \right)^2, \\
\widehat{V}_{\text{within}}(x)
&=
\frac{1}{G}
\sum_{g=1}^{G}
\left[
\frac{1}{\ell}
\sum_{b=1}^{\ell}
\left( \hat{\tau}_{g,b}(x) - \bar{\tau}_g(x) \right)^2
\right].
\end{align*}

The between-group term centers the bag means on the estimated grand mean $\bar\tau(x)$, so we divide by $G-1$ rather than $G$ to remove the degree of freedom lost in centering, which renders $\widehat V_{\text{between}}$ unbiased for the between-group variance and $\widehat H_N^{\text{BLB,raw}}$ in \eqref{eq:H_BLB_raw} unbiased for the half-sampling variance.
Because $\widehat V_{\text{within}}$ divides each bag's squared deviations by $\ell$ rather than $\ell-1$, it understates the tree-to-tree variance within a bag by the factor $(\ell-1)/\ell$, and the coefficient $1/(\ell-1)$ corrects for this so that the quantity subtracted from $\widehat V_{\text{between}}$ equals the residual within-bag noise carried by a bag mean. This matches the decomposition of \citet{athey_generalized_2019}.
The classical BLB estimator reads
\begin{align} \label{eq:H_BLB_raw}
\widehat{H}_N^{\text{BLB,raw}}(x)
=
\widehat{V}_{\text{between}}(x)
-
\frac{1}{\ell - 1}\widehat{V}_{\text{within}}(x),
\end{align}
which may be negative in finite samples. To ensure non-negativity, we adopt a Bayesian ANOVA adjustment corresponding to an improper uniform prior on $[0,\infty)$ for $H_N(x)$ \citep{gelman2013bayesian, athey_generalized_2019} using
\begin{align}\label{eq:H_BLB_N}
\widehat H_N^{\text{BLB}} = \Big(\widehat V_{\text{between}} - \frac{1}{\ell-1}\,\widehat V_{\text{within}}\Big)_+,
\end{align}
where $\left(.\right)_+ = \max\{.,0\}$. As the number of trees $B \to \infty$, this estimator converges to the infeasible half-sampling variance. Combining the above components, we get
\begin{align}\label{eq:Var_BLB_estim}
    \widehat{\Var}_N(\hat\tau(x)) = \widehat{\kappa}(x)^{-2} \, \widehat{H}_N^{\text{BLB}}(x), \qquad \widehat{\mathrm{SE}}(\hat\tau(x)) = \sqrt{\widehat{\Var}_N(\hat\tau(x))}.
\end{align}

The asymptotic validity results of \citet{athey_generalized_2019} apply to the generalized random forest estimator under the conditions of Theorem 5 and the consistency result for the half-sampling estimator in Theorem 6 of \citet{athey_generalized_2019}. Since our implementation applies the BLB construction directly to the tree-level treatment effect predictions produced by the original causal forest algorithm \citep{athey_recursive_2016, wager_estimation_2018}, the resulting variance estimates should be viewed as an adaptation of the BLB procedure rather than as a direct application of the generalized random forest inference framework. Accordingly, the asymptotic guarantees of \cite{athey_generalized_2019} are not established for our implementation. The finite-sample performance of \eqref{eq:Var_BLB_estim} is evaluated empirically in Appendix~\ref{app:add_sim_res}. We further note that the empirical conclusion of Section~\ref{sec:empappl}, the absence of treatment effect heterogeneity distinguishable from a constant effect in Figures \ref{fig:plot_cate_progresa} and \ref{fig:plot_cate_actg175}, is conservative with respect to the scaling factor. The factor $\widehat{\kappa}(x)^{-2}\ge 1$ only widens the intervals, which can only make heterogeneity harder to distinguish compared to an unscaled version of \eqref{eq:Var_BLB_estim}.

\begin{table}[htbp]
\centering
\caption{Effect of the BLB scaling on confidence interval performance.}
\label{tab:blb_scaling_ci}
\scriptsize
\begin{tabular}{clcccccccccc}
\toprule
& & \multicolumn{5}{c}{BLB without scaling} & \multicolumn{5}{c}{BLB with scaling by  $\widehat{\kappa}(x)^{-2}$} \\
\cmidrule(lr){3-7} \cmidrule(lr){8-12}
Scenario & method & Bias & Cov. & Width & Miss$_{\text{lo}}$ & Miss$_{\text{up}}$ & Bias & Cov. & Width & Miss$_{\text{lo}}$ & Miss$_{\text{up}}$ \\
\midrule
\multirow{4}{*}{S1} & CF (LMS) & -0.002 & 0.190 & 1.15 & 0.420 & 0.390 & -0.002 & 0.658 & 2.26 & 0.116 & 0.226\\
& CF (MAD) & -0.003 & 0.194 & 1.15 & 0.416 & 0.390 & -0.002 & 0.674 & 2.26 & 0.114 & 0.212\\
& CF (MSD) & 0.004 & 0.419 & 1.27 & 0.234 & 0.348 & 0.002 & 0.954 & 2.49 & 0.031 & 0.015\\
& CF (MSE) & -0.002 & 0.265 & 1.19 & 0.356 & 0.379 & -0.001 & 0.823 & 2.34 & 0.089 & 0.088\\
\addlinespace
\multirow{4}{*}{S2} & CF (LMS) & 0.010 & 0.305 & 1.43 & 0.326 & 0.369 & 0.008 & 0.893 & 2.81 & 0.086 & 0.022\\
& CF (MAD) & 0.010 & 0.323 & 1.43 & 0.311 & 0.366 & 0.010 & 0.903 & 2.83 & 0.079 & 0.019\\
& CF (MSD) & 0.016 & 0.554 & 1.55 & 0.182 & 0.264 & 0.014 & 0.980 & 3.07 & 0.013 & 0.006\\
& CF (MSE) & 0.009 & 0.384 & 1.46 & 0.261 & 0.355 & 0.010 & 0.947 & 2.89 & 0.045 & 0.008\\
\addlinespace
\multirow{4}{*}{S3} & CF (LMS) & 0.001 & 0.325 & 1.44 & 0.496 & 0.178 & 0.001 & 0.798 & 2.84 & 0.154 & 0.047\\
& CF (MAD) & 0.001 & 0.353 & 1.44 & 0.496 & 0.152 & 0.002 & 0.805 & 2.83 & 0.147 & 0.047\\
& CF (MSD) & 0.009 & 0.516 & 1.54 & 0.431 & 0.053 & 0.006 & 0.882 & 3.03 & 0.067 & 0.051\\
& CF (MSE) & 0.002 & 0.413 & 1.46 & 0.486 & 0.101 & 0.003 & 0.823 & 2.89 & 0.130 & 0.047\\
\addlinespace
\multirow{4}{*}{S4} & CF (LMS) & 0.010 & 0.555 & 1.93 & 0.175 & 0.270 & 0.012 & 0.983 & 3.80 & 0.013 & 0.003\\
& CF (MAD) & 0.007 & 0.569 & 1.94 & 0.166 & 0.265 & 0.013 & 0.992 & 3.81 & 0.007 & 0.001\\
& CF (MSD) & 0.014 & 0.709 & 2.04 & 0.119 & 0.172 & 0.014 & 0.988 & 4.01 & 0.005 & 0.007\\
& CF (MSE) & 0.012 & 0.601 & 1.94 & 0.155 & 0.244 & 0.011 & 0.996 & 3.83 & 0.003 & 0.001\\
\bottomrule
\end{tabular}
\begin{notes}
 Scenarios according to Table \ref{tab:scenarios} with $K=10$ and $N=1000$ for $MC=40$ simulation runs. For each scenario and splitting method, results are shown without and with the scaling factor in \eqref{eq:Var_BLB_estim}. The average bias of CATE estimates across simulations for all methods and scenarios is close to zero and differs between the two blocks (without and with scaling) only by Monte Carlo error. Coverage, width and the lower and upper miss rates summarize the bootstrapped 95\% confidence intervals of CATE estimates as averages across simulation runs. The lower miss rate (Miss$_{\text{lo}}$) measures how often the true CATE lies below the lower bound of the estimated confidence interval and the upper miss rate (Miss$_{\text{up}}$) how often the true CATE lies above the upper bound of the estimated confidence interval.
\end{notes}
\end{table}

\section{Monte Carlo Simulation}
\label{appendix:MC_Sim}

All simulations are conducted in R using RStudio as the
development environment \citep{r_2025, rstudio_2025}. The code for reproducibility is provided at
\url{https://github.com/LennMass/medianCausalForest}.

\subsection{Implementation Details}
\label{appendix:implementation}

In this subsection, we further specify the forest-construction choices behind the results in
Sections~\ref{sec:simulations} and~\ref{sec:empappl}, so that the implemented
estimator can be matched to the criteria of Section~\ref{sec:MedianSplitting}
and to the conditions of \citet{wager_estimation_2018}. We implement the MSD, MAD, and LMS splitting
criteria by extending the \texttt{causalTree} framework
\citep{athey_recursive_2016, susanathey_susanatheycausaltree_2026} within the \texttt{htetree} package
\citep{brand_uncovering_2021, xu_htetree_2026}. Specifically, we modify the internal split evaluation routines
to accommodate the robust loss functions described in
Section~\ref{sec:MedianSplitting}. 
For variance estimation of the CATE, we
implement a BLB procedure following the approach
used in the \cite{athey_generalized_2019} that is briefly summarized in Appendix \ref{appendix:BLB}. The corresponding confidence intervals are constructed from these variance estimates. For average treatment effect estimation, we implemented an AIPW-style estimator described in Appendix \ref{appendix:AIPW_ATE_CF}.

\subsubsection{Split search and median convention}\label{app:split-search}

At each node, candidate splits are evaluated by a single left-to-right sweep over the sorted values of each candidate covariate, following the CART procedure of \citet{cart84}. For the median-based rules the leaf statistic is the Hodges--Lehmann estimate of Equation~\eqref{eq:HL-leafwise}, recomputed on both
child leaves at each candidate cut point. The estimate is formed by enumerating all $N_{1,l}N_{0,l}$ within-leaf pairwise differences and selecting their median. The implementation rebuilds this set at each evaluation
rather than updating it incrementally, at a per-split cost of $O(N_{1,l}N_{0,l})$. This cost is the practical reason the density penalty of
Equation~\eqref{eq:Q_MSD_H} is replaced by the surrogate of Equation~\eqref{eq:Q_MSD_code}, which avoids any additional density estimation
inside the sweep.

The split objective evaluated in the code is the cross-product surrogate of Equation~\eqref{eq:Q_MSD_code}, summed over child leaves. The explicit Hodges--Lehmann variance penalty of the honest criterion Equation~\eqref{eq:Q_MSD_H} is not computed. Consistent with Section~\ref{sec:MSD}, the simulations therefore evaluate the leading-order
surrogate, not the density-penalty criterion. The penalty weight $\alpha$ multiplies the entire surviving objective and so does not affect the selected split; it is retained only for compatibility with the honest form.

The within-leaf median is computed by the Quickselect algorithm of \citet{hoare_algorithm_1961}, as described in \citet{numerical_press_1992}, which returns a single central order statistic. 
For an even number of pairwise differences this is the selection median $\theta=\min\{\theta_1,\theta_2\}$ of Definition~\ref{median_distribution} rather than the averaged median $\theta=(\theta_1+\theta_2)/2$ used in the theory. 
The two coincide unless the two central order statistics straddle the midpoint, and their difference is of order $O_p(1/N_l)$ within a leaf under the continuous within-leaf densities of Assumption~\ref{ass:HL_shift}. 
The median-unbiasedness result of Theorem~\ref{thm:theorem_median_unbiased} is thus exact for the averaged median and holds for the implemented estimator up to this vanishing discrepancy.

\subsubsection{Correspondence to tree-growing conditions}

Table~\ref{tab:wa_conditions} maps each condition of Theorems~3.1 and~4.1 of \citet{wager_estimation_2018} to the mechanism that enforces it. The conditions are structural properties of the tree-growing protocol, shared by all four rules, since the rule enters only at split selection. Thereby, we leave honesty, regularity, symmetry, randomization, and the leaf estimator unchanged. These conditions govern the asymptotic regime in which the minimum leaf size shrinks in measure as $N$ grows. The fixed minimum leaf size in our runs departs from that regime and is the source of the finite-sample coverage sensitivity reported at $N=2000$ in Appendix~\ref{app:add_sim_res}, which is a matter of tuning rather than of protocol validity.

Covariates are treated as continuous, so exact ties occur with probability zero and are broken by observation order, keeping the split rule a symmetric function of the data in the sense of \citet{wager_estimation_2018}. Each candidate split
must leave at least two treated and at least two control observations in each child, enforced separately per arm. Moreover, each leaf node needs at least five observations in total to be considered for splitting. This guarantees that $\hat\mu_1$, $\hat\mu_0$, and hence the Hodges--Lehmann estimate are defined in every leaf, and together with the edge constraint on leaf counts supplies the $\alpha$-regularity condition of \citet{wager_estimation_2018}.

At each node, a random subset of the covariates is drawn as split candidates, supplying the random-split condition of \citet{wager_estimation_2018} that forces
leaves to shrink in every coordinate. Each tree is grown on an independent subsample drawn without replacement and uses the double-sample
construction of \citet{athey_recursive_2016} and \cite{wager_estimation_2018}. The
split-selection sample $\mathcal{S}^{tr}$ and the leaf-estimation sample $\mathcal{S}^{est}$ are disjoint, and leaf-level effects are estimated by the difference in means on $\mathcal{S}^{est}$ regardless of the splitting rule. This is the property that makes the forest honest and that underpins the leaf-level unbiasedness used throughout Section~\ref{sec:MedianSplitting}.

\begin{table}[htbp]
\centering
\caption{Forest conditions \citep{wager_estimation_2018} and the
mechanism enforcing each.}
\label{tab:wa_conditions}
\begin{tabular}{ll}
\toprule
Condition & Enforced by \\
\midrule
Honesty & Disjoint $\mathcal{S}^{tr},\mathcal{S}^{est}$ and leaves by difference in means \\
$\alpha$-regularity & Per-arm minimum node size in each child \\
Random-split & Random covariate subset at each node \\
Symmetry & Order-based tie-breaking \\
Subsampling rate & Subsampling without replacement \\
\bottomrule
\end{tabular}
\end{table}

\subsubsection{ATE estimators}
\label{appendix:ATE_estimators}

We compute the competing ATE estimators following
\citet{ghosh_robustness_2026} and using the accompanying replication code in \url{https://github.com/ghoshadi/RRE}. 
The difference-in-means (Diff.Mean) estimator is the difference between treatment and control group sample means. 
The OLS-with-interaction (Lin.OLS) estimator follows \citet{lin_agnostic_2013}: we regress the outcome on the treatment indicator, the mean-centered numeric covariates, and their interactions with treatment, enter factor covariates as main effects, and report the treatment coefficient. 
The regression-adjusted (R.Adj) estimator of \citet{ghosh_robustness_2026} is based on \citet{Rosenbaum1993} and solves for the shift that balances the Wilcoxon rank-sum of the covariate-adjusted residuals across treatment and control groups, obtained by root-finding. The efficient-influence-function (EIF) and weighted-average-of-quantiles (WAQ) estimators target the quantile treatment effect of \citet{Athey2023}. 
The two causal forest, CF (MSE) and CF (MSD) estimators aggregate the forest CATE estimates by the AIPW construction of Appendix~\ref{appendix:AIPW_ATE_CF}.

\subsubsection{Performance Criteria for Simulation Study}
\label{appendix:mc_performance_criteria}

A sample consists of $N \in \{1000, 2000\}$ observations in total. For each sample, we use $4$-fold cross-fitting to estimate $\tau(X_i)$ for each $i$ out-of-sample. That is, we use $75 \%$ of the sample (three out of four folds) for model training. For each tree in a causal forest, half of the data is used for tree building, and the other half is used for leaf prediction to ensure honesty. Then, we estimate $\tau(X_i)$ for the $25 \%$ of test observations in the $k$th fold and repeat this process for each fold to get $\hat\tau(X_i)$ for each $i \in \{1, \ldots, N\}$.
We generate $MC$ independent samples of observations such that $mc \in \{1, ...., MC\}$. 
Precision of the CATE predictions is evaluated in terms of RMSE and absolute bias ($\abs{\text{Bias}}$) taken as an average over the distinct samples by
\begin{align}
\label{avg_RMSE}
\operatorname{\overline{RMSE}} &= \frac{1}{MC} \sum_{mc=1}^{MC} \left[ \sqrt{\frac{1}{N}\sum_{i = 1}^{N} \left( \tau(X_{i}^{mc})-\hat\tau(X_{i}^{mc}) \right)^2} \right], 
\end{align}
\begin{align}
\label{avg_bias}
\operatorname{\overline{|Bias|}} &= \frac{1}{MC} \sum_{mc=1}^{MC} \left[ \frac{1}{N}\sum_{i = 1}^{N} \left| \tau(X_{i}^{mc})-\hat\tau(X_{i}^{mc})\right| \right].
\end{align}

Moreover, we estimate the variance of the treatment effects using the BLB described in Appendix~\ref{appendix:BLB} to evaluate coverage properties of the considered methods across simulation settings. We assess the confidence interval coverage for the 95\% confidence interval and the width of the 95\% confidence interval by averaging across datasets by 
\begin{align}
    \label{coverage_def}
    \text{Coverage}_{CI_{95}\left( \hat\tau(X_{i}^{mc}) \right)} &= \frac{1}{MC} \sum_{mc=1}^{MC} \left[ \frac{1}{N}\sum_{i = 1}^{N}  \mathds{1}\left( \tau\left(X_{i}^{mc}\right) \in {CI_{95}\left( \hat\tau(X_{i}^{mc}) \right)} \right) \right]
\end{align}
\begin{align}
    \label{width_def}
    \text{Width}_{CI_{95}\left( \hat\tau(X_{i}^{mc}) \right)} &= \frac{1}{MC} \sum_{mc=1}^{MC} \left[ \frac{1}{N}\sum_{i = 1}^{N}   \operatorname*{length}\left( {CI_{95}\left( \hat\tau(X_{i}^{mc}) \right)} \right) \right],    
\end{align}
where ${CI_{95}\left( \hat\tau(X_{i}^{mc}) \right)}$ is the corresponding estimated 95\% confidence interval for the estimated CATE, $\hat\tau(X_{i}^{mc})$, in the specific $mc$-dataset.

\subsection{Additional Simulation Results}
\label{app:add_sim_res}

Tables~\ref{tab:cate_samplesize}--\ref{tab:ate_covariatesize} report
sensitivity checks that vary the sample size ($N \in \{1000, 2000\}$) and
the number of covariates ($K \in \{5, 20\}$) for both CATE and ATE
estimation, following the main data-generating process in Section \ref{sec:simulations} and Table \ref{tab:scenarios}. Taken together, these robustness checks confirm that the relative
performance differences documented in the main text are stable across the
sample sizes and covariate dimensions considered here.

Table~\ref{tab:cate_samplesize} shows results for CATE estimation while increasing the sample size from $N = 1000$ to $N = 2000$.
Doubling the sample size does not improve accuracy by much across scenarios. We note here that the $N=2000$ columns use 40 replications against 100 replications at $N=1000$, so their coverage figures carry larger Monte Carlo error. RMSE and absolute bias stay flat across the two sample sizes while the confidence intervals narrow, so coverage falls for all four criteria in S1 to S3. In S4 the large variance contributed by the skewed $U_i$ keeps all methods overcovering, so the narrowing is not yet enough to pull coverage down to nominal.
The suspected cause is the fixed minimum leaf node size parameter within the causal forest. Leaves do not refine as $N$ grows, point estimates stop improving, and the variance estimates keep shrinking, resulting in overconfident intervals. 
Because this affects every rule the same way, the ordering is preserved. CF (MSD) keeps the lowest RMSE and absolute bias in each scenario and loses the least coverage, while MSE undercovers in S1 and S3 at $N=2000$, and MAD and LMS undercover in S1 to S3, while in S4 all rules overcover. Recovering nominal coverage at larger $N$ would require the minimum leaf size to grow with the sample, which we do not impose here. We therefore read the $N=2000$ column in Table~\ref{tab:cate_samplesize} as a stability check on the relative ordering and a deliberate illustration of leaf-size sensitivity, not as evidence on large-sample accuracy.
This degradation is a statement about the asymptotic regime, not about protocol validity. The conditions of Theorems~3.1 and~4.1 of \citet{wager_estimation_2018}, discussed for our forest in Appendix~\ref{appendix:implementation}, are structural properties that hold at every sample size. Their asymptotic conclusion, however, requires the minimum leaf size to shrink in measure as $N$ grows, so that leaves contain diverging counts while covering vanishing regions. Holding the minimum leaf size fixed, as we do, keeps the protocol honest and regular but places the $N=2000$ run outside that regime, which is the suspected reason why coverage degrades there. The forest remains a valid honest forest, but the rate condition is not met.

For ATE estimation in Table~\ref{tab:ate_samplesize},
CF (MSE) and CF (MSD) produce similar point estimates and interval
widths, confirming that the splitting rule has little influence once
leaf-level estimates are averaged to the population level. Among the non-forest estimators, the difference-in-means and linear OLS estimators perform well in scenarios~S1 and S2, where all regularity conditions hold.
In scenarios~S3 and S4, however, the regression-adjusted estimator (R.Adj), the efficient influence function estimator (EIF), and the weighted-average-of-quantiles estimator (WAQ) exhibit severe bias and coverage collapse, because each targets a rank- or quantile-based contrast that departs from the mean ATE once the responders of S3 and the skewed effects of S4 break the symmetry under which the targets coincide. This is estimand mismatch rather than model misspecification.

Tables~\ref{tab:cate_covariatesize} and \ref{tab:ate_covariatesize} increase the covariate dimension from $K = 5$ to $K = 20$ by adding 15 noise variables, while keeping the sample size fixed at $N=1000$. For CATE estimation in Table~\ref{tab:cate_covariatesize}, all
forest-based methods show only a modest increase in RMSE when moving to the higher-dimensional setting, indicating that the subsampling and split-selection mechanisms of the causal forest provide a degree of built-in regularization against irrelevant covariates. CF (MSD) is the most affected in absolute terms, yet it continues to achieve the lowest RMSE and the best coverage across all scenarios. 
The ranking among methods is otherwise preserved. For ATE estimation (Table~\ref{tab:ate_covariatesize}), the same pattern as in the sample-size check emerges: CF (MSE) and CF (MSD) are again nearly indistinguishable, the simple estimators remain reliable in the well-specified scenarios, and the semiparametric estimators break down in scenarios~S3 and S4.

Further, we compare the runtime of the four splitting criteria in the causal forest in Table \ref{tab:runtime}.
The mean-based MSE criterion and LMS each complete a forest fit in under a minute, whereas MSD and MAD, both anchored on the Hodges--Lehmann estimator, take roughly four and six times as long. Because LMS is as robust as the two slow criteria yet stays close to MSE in cost, the overhead traces to the Hodges--Lehmann computation and not to robust splitting as such.  The gap follows directly from the split search of
Appendix~\ref{app:split-search} where we describe that the anchor is rebuilt from all within-leaf pairwise differences at each candidate cut point. In contrast, the mean-based criterion updates running sums in constant time and LMS forms no pairs. Further, we note that our implementation of LMS, MAD, and MSD is not tuned for speed. Computing the same estimates through modified algorithms that do not materialize the pairwise set might lower the computational cost of MSD and MAD without changing any result \citep{monahan_algorithm_1984}.

\begin{table}[htbp]
\tiny
\centering
\caption{Sample size comparison with $N \in \{1000, 2000\}$ for CATE estimation with causal forests (CF) using different splitting criteria.}
\label{tab:cate_samplesize}
\begin{tabular}{clcccccccc}
\toprule
& & \multicolumn{4}{c}{$N = 1000$} & \multicolumn{4}{c}{$N = 2000$} \\
\cmidrule(lr){3-6} \cmidrule(lr){7-10}
Sc. & method & RMSE & $\abs{\text{Bias}}$ & Cov. & Width & RMSE & $\abs{\text{Bias}}$ & Cov. & Width \\
\midrule
\multirow{4}{*}{S1}& CF (MSE) & 0.89 (0.002) & 0.82 (0.002) & 0.88 (0.006) & 2.46 (0.006) &  0.89 (0.003) & 0.82 (0.003) & 0.71 (0.012) & 2.13 (0.008)\\
                   & CF (LMS) & 1.00 (0.001) & 0.91 (0.001) & 0.68 (0.008) & 2.28 (0.004) &  0.99 (0.001) & 0.91 (0.001) & 0.55 (0.003) & 1.95 (0.007) \\
                   & CF (MAD) & 0.98 (0.001) & 0.90 (0.001) & 0.71 (0.009) & 2.30 (0.005) & 0.98 (0.001) & 0.90 (0.001) & 0.55 (0.003) & 1.97 (0.007) \\
                   & CF (MSD) & 0.76 (0.003) & 0.70 (0.003) & 0.99 (0.001) & 2.77 (0.006) & 0.75 (0.004) & 0.70 (0.004) & 0.95 (0.003) & 2.47 (0.007)\\
\addlinespace
\multirow{4}{*}{S2} & CF (MSE) & 0.91 (0.003) & 0.83 (0.002) & 0.96 (0.003) & 3.02 (0.017) & 0.91 (0.003) & 0.83 (0.003) & 0.92 (0.005) & 2.67 (0.023)\\
                    & CF (LMS) & 1.00 (0.002) & 0.91 (0.002) & 0.90 (0.004) & 2.86 (0.016)& 1.00 (0.001) & 0.91 (0.001) & 0.78 (0.015) & 2.47 (0.025)\\
                    & CF (MAD) & 0.98 (0.002) & 0.90 (0.002) & 0.92 (0.003) & 2.88 (0.017)& 0.98 (0.001) & 0.90 (0.002) & 0.82 (0.013) & 2.52 (0.025)\\
                    & CF (MSD) & 0.79 (0.004) & 0.73 (0.004) & 1.00 (0.001) & 3.37 (0.016) & 0.78 (0.005) & 0.72 (0.004) & 0.98 (0.001) & 3.02 (0.025)\\
\addlinespace
\multirow{4}{*}{S3} & CF (MSE) & 2.48 (0.013) & 1.34 (0.007) & 0.84 (0.003) & 3.08 (0.009)& 2.44 (0.015) & 1.32 (0.008) & 0.78 (0.006) & 2.66 (0.011)\\
                    & CF (LMS) & 2.62 (0.015) & 1.47 (0.008) & 0.79 (0.005) & 2.89 (0.008) & 2.57 (0.017) & 1.45 (0.009) & 0.64 (0.012) & 2.41 (0.010)\\
                    & CF (MAD) & 2.59 (0.015) & 1.44 (0.008) & 0.81 (0.004) & 2.92 (0.007)& 2.54 (0.016) & 1.42 (0.009) & 0.68 (0.011) & 2.47 (0.010)\\
                    & CF (MSD) & 2.33 (0.012) & 1.12 (0.008) & 0.93 (0.002) & 3.48 (0.010)& 2.30 (0.014) & 1.10 (0.008) & 0.89 (0.004) & 3.03 (0.016)\\
\addlinespace
\multirow{4}{*}{S4} & CF (MSE) & 0.97 (0.005) & 0.87 (0.004) & 0.99 (0.001) & 3.95 (0.027)& 0.94 (0.004) & 0.86 (0.003) & 0.99 (0.003) & 3.44 (0.034)\\
                    & CF (LMS) & 1.03 (0.004) & 0.92 (0.003) & 0.98 (0.003) & 3.83 (0.028)& 1.01 (0.003) & 0.91 (0.002) & 0.94 (0.005) & 3.29 (0.032)\\
                    & CF (MAD) & 1.00 (0.004) & 0.90 (0.003) & 0.99 (0.002) & 3.91 (0.029)& 0.98 (0.004) & 0.89 (0.002) & 0.97 (0.005) & 3.38 (0.031)\\
                    & CF (MSD) & 0.88 (0.007) & 0.78 (0.005) & 1.00 (0.001) & 4.42 (0.030)& 0.84 (0.007) & 0.76 (0.006) & 0.99 (0.001) & 3.94 (0.035)\\
\bottomrule
\end{tabular}
\begin{notes}
Scenarios according to Table \ref{tab:scenarios}. The left side ($N=1000$) corresponds to the values in Figure \ref{fig:CATE_precision_s1s4}. Note that we used 100 $MC$ runs for the main simulation setup ($N=1000$) and 40 $MC$ runs for the larger sample size ($N=2000$).
\end{notes}
\end{table}

\begin{table}[htbp]
\tiny
\centering
\caption{Sample size comparison with $N \in \{1000, 2000\}$ for ATE estimation.}
\label{tab:ate_samplesize}
\vspace{0.8em}
\begin{tabular}{clcccccccc}
\toprule
& & \multicolumn{4}{c}{$N = 1000$} & \multicolumn{4}{c}{$N = 2000$} \\
\cmidrule(lr){3-6} \cmidrule(lr){7-10}
Sc. & method & RMSE & $\abs{\text{Bias}}$ & Cov. & Width & RMSE & $\abs{\text{Bias}}$ & Cov. & Width \\
\midrule
\multirow{7}{*}{S1} & CF (MSE) & 0.08 (0.005) & 0.06 (0.005) & 0.96 (0.020) & 0.31 (0.001) & 0.05 (0.006) & 0.04 (0.005) & 0.95 (0.035) & 0.21 (0.001)\\
                    & CF (MSD) & 0.08 (0.005) & 0.06 (0.005) & 0.96 (0.020) & 0.30 (0.001)  & 0.05 (0.006) & 0.04 (0.005) & 0.95 (0.035) & 0.21 (0.001)\\
                    & Diff.Mean & 0.07 (0.005) & 0.06 (0.004) & 0.97 (0.017) & 0.30 (0.001) & 0.05 (0.007) & 0.04 (0.005) & 0.95 (0.035) & 0.21 (0.001)\\
                    & Lin.OLS & 0.07 (0.004) & 0.06 (0.004) & 0.94 (0.024) & 0.26 (0.001) & 0.05 (0.007) & 0.04 (0.005) & 0.95 (0.035) & 0.19 (0.000)\\
                    & R.Adj & 0.07 (0.004) & 0.06 (0.004) & 0.97 (0.017) & 0.30 (0.002) & 0.05 (0.007) & 0.04 (0.005) & 0.95 (0.035) & 0.21 (0.002)\\
                    & EIF & 0.09 (0.006) & 0.07 (0.005) & 0.88 (0.033) & 0.25 (0.001) & 0.07 (0.010) & 0.05 (0.007) & 0.85 (0.057) & 0.18 (0.001)\\
                    & WAQ & 0.09 (0.005) & 0.07 (0.005) & 0.86 (0.035) & 0.25 (0.001) & 0.07 (0.009) & 0.05 (0.007) & 0.90 (0.048) & 0.18 (0.001)\\
                    
\addlinespace
\multirow{7}{*}{S2} & CF (MSE) & 0.11 (0.008) & 0.08 (0.007) & 0.98 (0.014) & 0.48 (0.006)& 0.08 (0.008) & 0.06 (0.007) & 1.00 (0.000) & 0.34 (0.006)\\
& CF (MSD) & 0.11 (0.008) & 0.08 (0.007) & 0.98 (0.014) & 0.48 (0.006)& 0.08 (0.008) & 0.06 (0.007) & 1.00 (0.000) & 0.34 (0.006)\\
& Diff.Mean & 0.11 (0.008) & 0.08 (0.007) & 0.95 (0.022) & 0.46 (0.006)& 0.08 (0.008) & 0.06 (0.008) & 0.98 (0.025) & 0.33 (0.007)\\
& Lin.OLS & 0.10 (0.007) & 0.08 (0.006) & 0.96 (0.020) & 0.43 (0.007)& 0.08 (0.008) & 0.06 (0.008) & 0.92 (0.042) & 0.32 (0.007)\\
& R.Adj & 0.08 (0.006) & 0.06 (0.005) & 0.97 (0.017) & 0.36 (0.003)& 0.06 (0.007) & 0.05 (0.006) & 0.98 (0.025) & 0.25 (0.002)\\
& EIF & 0.11 (0.008) & 0.08 (0.007) & 0.85 (0.036) & 0.30 (0.002)& 0.06 (0.008) & 0.05 (0.007) & 0.88 (0.053) & 0.21 (0.001)\\
& WAQ & 0.10 (0.007) & 0.08 (0.006) & 0.86 (0.035) & 0.30 (0.002)& 0.06 (0.008) & 0.05 (0.007) & 0.90 (0.048) & 0.21 (0.001)\\
                    
\addlinespace
\multirow{7}{*}{S3} & CF (MSE) & 0.09 (0.006) & 0.07 (0.005) & 1.00 (0.000) & 0.49 (0.002) & 0.06 (0.009) & 0.05 (0.007) & 0.98 (0.025) & 0.33 (0.002)\\
& CF (MSD) & 0.08 (0.006) & 0.07 (0.005) & 1.00 (0.000) & 0.49 (0.002) & 0.06 (0.009) & 0.05 (0.007) & 0.98 (0.025) & 0.33 (0.002)\\
& Diff.Mean & 0.10 (0.006) & 0.09 (0.006) & 1.00 (0.000) & 0.53 (0.003) & 0.07 (0.008) & 0.06 (0.007) & 0.98 (0.025) & 0.36 (0.003)\\
& Lin.OLS & 0.09 (0.005) & 0.07 (0.005) & 1.00 (0.000) & 0.42 (0.002)& 0.06 (0.006) & 0.05 (0.006) & 0.98 (0.025) & 0.29 (0.002)\\
& R.Adj & 0.36 (0.009) & 0.35 (0.009) & 0.05 (0.022) & 0.36 (0.004)& 0.33 (0.010) & 0.32 (0.010) & 0.00 (0.000) & 0.25 (0.003)\\
& EIF & 0.45 (0.012) & 0.44 (0.012) & 0.00 (0.000) & 0.25 (0.001)& 0.40 (0.013) & 0.39 (0.013) & 0.00 (0.000) & 0.18 (0.001)\\
& WAQ & 0.30 (0.017) & 0.26 (0.016) & 0.27 (0.045) & 0.25 (0.001)& 0.26 (0.025) & 0.21 (0.025) & 0.35 (0.076) & 0.18 (0.001)\\
\addlinespace
\multirow{7}{*}{S4} & CF (MSE) & 0.25 (0.016) & 0.20 (0.015) & 0.93 (0.026) & 0.87 (0.014) & 0.17 (0.018) & 0.14 (0.017) & 0.90 (0.048) & 0.61 (0.012)\\
& CF (MSD) & 0.25 (0.016) & 0.20 (0.015) & 0.93 (0.026) & 0.87 (0.014) & 0.17 (0.018) & 0.14 (0.017) & 0.90 (0.048) & 0.61 (0.012)\\
& Diff.Mean & 0.24 (0.017) & 0.18 (0.015) & 0.93 (0.026) & 0.81 (0.012) & 0.15 (0.014) & 0.13 (0.014) & 0.90 (0.048) & 0.57 (0.011)\\
& Lin.OLS & 0.24 (0.017) & 0.19 (0.015) & 0.90 (0.030) & 0.79 (0.012)& 0.15 (0.015) & 0.12 (0.015) & 0.88 (0.053) & 0.56 (0.011)\\
& R.Adj & 0.90 (0.012) & 0.89 (0.013) & 0.00 (0.000) & 0.52 (0.005)& 0.91 (0.015) & 0.90 (0.015) & 0.00 (0.000) & 0.36 (0.005)\\
& EIF & 0.88 (0.025) & 0.84 (0.027) & 0.01 (0.010) & 0.25 (0.001)& 0.83 (0.030) & 0.81 (0.030) & 0.00 (0.000) & 0.18 (0.001)\\
& WAQ & 0.58 (0.026) & 0.51 (0.028) & 0.11 (0.031) & 0.25 (0.001)& 0.50 (0.034) & 0.45 (0.035) & 0.08 (0.042) & 0.18 (0.001)\\
\bottomrule
\end{tabular}
\begin{notes}
Scenarios according to Table \ref{tab:scenarios}. The left side ($N=1000$) corresponds to the values in Figure \ref{fig:CATE_precision_s1s4}. We used 100 $MC$ runs for the main simulation setup ($N=1000$) and 40 $MC$ runs for the larger sample size ($N=2000$). Moreover, we do not report ATE estimates based on CF (MAD) and CF (LMS) for brevity and as they show similar results as CF (MSE) and CF (MSD) due to the discussion in Appendix \ref{appendix:AIPW_ATE_CF}. 
\end{notes}
\end{table}

\begin{table}[htbp]
\tiny
\centering
\caption{Covariate size comparison with $K \in \{5, 20\}$ for CATE estimation with causal forests (CF) using different splitting criteria.}
\label{tab:cate_covariatesize}

\begin{tabular}{clcccccccc}
\toprule
& & \multicolumn{4}{c}{$K = 5$} & \multicolumn{4}{c}{$K = 20$} \\
\cmidrule(lr){3-6} \cmidrule(lr){7-10} 
Scenario & method & RMSE & $\abs{\text{Bias}}$ & Cov. & Width & RMSE & $\abs{\text{Bias}}$ & Cov. & Width \\
\midrule
\multirow{4}{*}{S1}& CF (MSE) & 0.90 (0.004) & 0.82 (0.004) & 0.88 (0.009) & 2.46 (0.009) & 0.94 (0.002) & 0.86 (0.003) & 0.81 (0.012) & 2.38 (0.007)\\

 & CF (LMS) & 1.00 (0.002) & 0.91 (0.002) & 0.68 (0.013) & 2.30 (0.007) & 1.00 (0.002) & 0.91 (0.002) & 0.70 (0.015) & 2.29 (0.008)\\

 & CF (MAD) & 0.98 (0.002) & 0.89 (0.002) & 0.71 (0.015) & 2.31 (0.007) & 0.99 (0.002) & 0.90 (0.002) & 0.72 (0.016) & 2.30 (0.006)\\
& CF (MSD) & 0.75 (0.004) & 0.69 (0.004) & 0.99 (0.002) & 2.77 (0.010) & 0.87 (0.003) & 0.80 (0.003) & 0.93 (0.003) & 2.63 (0.007)\\
\addlinespace
\multirow{4}{*}{S2} & CF (MSE) & 0.90 (0.003) & 0.83 (0.003) & 0.97 (0.004) & 3.01 (0.021) & 0.96 (0.003) & 0.87 (0.003) & 0.94 (0.005) & 2.98 (0.026)\\

 & CF (LMS) & 1.00 (0.002) & 0.91 (0.003) & 0.90 (0.007) & 2.85 (0.023) & 1.01 (0.002) & 0.91 (0.002) & 0.90 (0.008) & 2.87 (0.024)\\

 & CF (MAD) & 0.98 (0.002) & 0.89 (0.003) & 0.92 (0.005) & 2.87 (0.020) & 0.99 (0.002) & 0.90 (0.002) & 0.91 (0.007) & 2.92 (0.027)\\

& CF (MSD) & 0.77 (0.006) & 0.71 (0.006) & 1.00 (0.001) & 3.37 (0.023) & 0.90 (0.004) & 0.82 (0.003) & 0.98 (0.003) & 3.27 (0.026)\\

\addlinespace
\multirow{4}{*}{S3} & CF (MSE) & 2.48 (0.024) & 1.33 (0.014) & 0.84 (0.005) & 3.08 (0.016) & 2.50 (0.023) & 1.38 (0.012) & 0.83 (0.004) & 3.00 (0.015)\\

 & CF (LMS) & 2.62 (0.026) & 1.47 (0.014) & 0.79 (0.007) & 2.86 (0.016) & 2.57 (0.024) & 1.45 (0.013) & 0.80 (0.006) & 2.89 (0.013)\\

 & CF (MAD) & 2.57 (0.025) & 1.43 (0.014) & 0.81 (0.006) & 2.90 (0.016) & 2.56 (0.024) & 1.43 (0.013) & 0.81 (0.004) & 2.93 (0.013)\\

& CF (MSD) & 2.31 (0.022) & 1.14 (0.014) & 0.93 (0.003) & 3.42 (0.018) & 2.44 (0.022) & 1.27 (0.011) & 0.87 (0.003) & 3.29 (0.014)\\

\addlinespace
\multirow{4}{*}{S4} & CF (MSE) & 0.92 (0.005) & 0.83 (0.005) & 1.00 (0.000) & 3.98 (0.032) & 0.99 (0.005) & 0.89 (0.004) & 0.99 (0.002) & 3.94 (0.039)\\

 & CF (LMS) & 1.02 (0.003) & 0.92 (0.003) & 0.99 (0.003) & 3.80 (0.035) & 1.02 (0.005) & 0.92 (0.003) & 0.98 (0.003) & 3.84 (0.040)\\

 & CF (MAD) & 0.98 (0.004) & 0.88 (0.004) & 1.00 (0.002) & 3.86 (0.033) & 1.01 (0.005) & 0.91 (0.003) & 0.99 (0.003) & 3.89 (0.039)\\

& CF (MSD) & 0.82 (0.010) & 0.73 (0.009) & 1.00 (0.000) & 4.44 (0.037) & 0.94 (0.007) & 0.85 (0.005) & 1.00 (0.001) & 4.30 (0.042)\\

\bottomrule
\end{tabular}
\begin{notes}
Scenarios according to Table \ref{tab:scenarios}. We used $MC=40$ runs for each scenario and covariate size. 
\end{notes}
\end{table}

\begin{table}[htbp]
\tiny
\centering
\caption{Covariate size comparison with $K \in \{5, 20\}$ for ATE estimation.}
\vspace{0.8em}
\label{tab:ate_covariatesize}
\begin{tabular}{clcccccccc}
\toprule
& & \multicolumn{4}{c}{$K = 5$} & \multicolumn{4}{c}{$K = 20$} \\
\cmidrule(lr){3-6} \cmidrule(lr){7-10}
Scenario & method & RMSE & $\abs{\text{Bias}}$ & Cov. & Width & RMSE & $\abs{\text{Bias}}$ & Cov. & Width \\
\midrule
\multirow{7}{*}{S1} & CF (MSE) & 0.08 (0.008) & 0.07 (0.007) & 0.98 (0.025) & 0.32 (0.002) & 0.07 (0.007) & 0.06 (0.007) & 0.95 (0.035) & 0.30 (0.001)\\
& CF (MSD) & 0.08 (0.008) & 0.07 (0.007) & 0.98 (0.025) & 0.32 (0.002)& 0.07 (0.007) & 0.06 (0.007) & 0.95 (0.035) & 0.30 (0.001)\\
& Diff.Mean & 0.07 (0.006) & 0.06 (0.006) & 0.98 (0.025) & 0.30 (0.001)& 0.07 (0.008) & 0.06 (0.007) & 0.92 (0.042) & 0.30 (0.001)\\
& Lin.OLS & 0.07 (0.006) & 0.06 (0.006) & 0.92 (0.042) & 0.26 (0.001)& 0.07 (0.006) & 0.06 (0.006) & 0.98 (0.025) & 0.26 (0.001)\\
& R.Adj & 0.08 (0.007) & 0.06 (0.007) & 0.92 (0.042) & 0.29 (0.003)& 0.07 (0.006) & 0.06 (0.007) & 0.98 (0.025) & 0.30 (0.004)\\
& EIF & 0.08 (0.008) & 0.07 (0.008) & 0.90 (0.048) & 0.25 (0.002)& 0.09 (0.009) & 0.08 (0.008) & 0.82 (0.061) & 0.25 (0.002)\\
& WAQ & 0.07 (0.007) & 0.06 (0.007) & 0.90 (0.048) & 0.25 (0.002)& 0.09 (0.008) & 0.07 (0.008) & 0.85 (0.057) & 0.25 (0.002)\\
\addlinespace
\multirow{7}{*}{S2} & CF (MSE) & 0.13 (0.013) & 0.10 (0.012) & 0.90 (0.048) & 0.47 (0.009) & 0.13 (0.011) & 0.10 (0.011) & 1.00 (0.000) & 0.51 (0.008)\\
& CF (MSD) & 0.13 (0.013) & 0.10 (0.012) & 0.90 (0.048) & 0.47 (0.009) & 0.12 (0.010) & 0.10 (0.011) & 1.00 (0.000) & 0.51 (0.008)\\
& Diff.Mean & 0.12 (0.013) & 0.10 (0.012) & 0.90 (0.048) & 0.46 (0.008) & 0.12 (0.010) & 0.10 (0.010) & 0.98 (0.025) & 0.46 (0.006)\\
& Lin.OLS & 0.13 (0.013) & 0.10 (0.012) & 0.88 (0.053) & 0.44 (0.008) & 0.13 (0.012) & 0.10 (0.012) & 0.95 (0.035) & 0.43 (0.007)\\
& R.Adj & 0.09 (0.008) & 0.07 (0.007) & 0.98 (0.025) & 0.38 (0.004) & 0.09 (0.012) & 0.07 (0.009) & 0.92 (0.042) & 0.35 (0.005)\\
& EIF & 0.10 (0.011) & 0.08 (0.010) & 0.90 (0.048) & 0.30 (0.002) & 0.08 (0.008) & 0.07 (0.008) & 0.95 (0.035) & 0.30 (0.003)\\
& WAQ & 0.10 (0.009) & 0.08 (0.009) & 0.92 (0.042) & 0.30 (0.002) & 0.08 (0.007) & 0.07 (0.007) & 0.95 (0.035) & 0.30 (0.003)\\
\addlinespace
\multirow{7}{*}{S3} & CF (MSE) & 0.08 (0.008) & 0.07 (0.007) & 1.00 (0.000) & 0.48 (0.004)& 0.11 (0.010) & 0.09 (0.009) & 1.00 (0.000) & 0.53 (0.006)\\
& CF (MSD) & 0.08 (0.008) & 0.07 (0.007) & 1.00 (0.000) & 0.48 (0.004)& 0.11 (0.010) & 0.09 (0.009) & 1.00 (0.000) & 0.52 (0.006)\\
& Diff.Mean & 0.09 (0.010) & 0.07 (0.008) & 1.00 (0.000) & 0.51 (0.005)& 0.10 (0.011) & 0.08 (0.009) & 0.98 (0.025) & 0.52 (0.006)\\
& Lin.OLS & 0.09 (0.009) & 0.08 (0.008) & 0.98 (0.025) & 0.40 (0.004)& 0.08 (0.008) & 0.06 (0.007) & 1.00 (0.000) & 0.42 (0.003)\\
& R.Adj & 0.35 (0.017) & 0.33 (0.017) & 0.08 (0.042) & 0.36 (0.006)& 0.35 (0.012) & 0.34 (0.013) & 0.02 (0.025) & 0.36 (0.005)\\
& EIF & 0.43 (0.018) & 0.42 (0.019) & 0.00 (0.000) & 0.25 (0.001)& 0.47 (0.018) & 0.45 (0.018) & 0.00 (0.000) & 0.25 (0.002)\\
& WAQ & 0.31 (0.029) & 0.26 (0.028) & 0.33 (0.075) & 0.25 (0.001)& 0.35 (0.034) & 0.29 (0.031) & 0.20 (0.064) & 0.25 (0.002)\\
\addlinespace
\multirow{7}{*}{S4} & CF (MSE) & 0.21 (0.021) & 0.16 (0.020) & 0.90 (0.048) & 0.86 (0.019) & 0.18 (0.019) & 0.14 (0.018) & 1.00 (0.000) & 0.91 (0.021)\\
& CF (MSD) & 0.21 (0.021) & 0.16 (0.020) & 0.90 (0.048) & 0.86 (0.019) & 0.18 (0.019) & 0.14 (0.018) & 1.00 (0.000) & 0.91 (0.021)\\
& Diff.Mean & 0.20 (0.020) & 0.15 (0.020) & 0.92 (0.042) & 0.80 (0.016) & 0.15 (0.014) & 0.12 (0.014) & 0.98 (0.025) & 0.80 (0.013)\\
& Lin.OLS & 0.20 (0.020) & 0.16 (0.019) & 0.92 (0.042) & 0.80 (0.018) & 0.15 (0.014) & 0.13 (0.014) & 0.98 (0.025) & 0.78 (0.014)\\
& R.Adj & 0.87 (0.020) & 0.86 (0.020) & 0.00 (0.000) & 0.57 (0.009) & 0.92 (0.016) & 0.91 (0.017) & 0.00 (0.000) & 0.50 (0.005)\\
& EIF & 0.88 (0.037) & 0.85 (0.038) & 0.00 (0.000) & 0.25 (0.002) & 0.88 (0.037) & 0.85 (0.038) & 0.00 (0.000) & 0.25 (0.001)\\
& WAQ & 0.58 (0.042) & 0.52 (0.042) & 0.05 (0.035) & 0.25 (0.002) & 0.57 (0.038) & 0.51 (0.040) & 0.08 (0.042) & 0.25 (0.001)\\
\bottomrule
\end{tabular}
\begin{notes}
Scenarios according to Table \ref{tab:scenarios}. Note that we used $MC=40$ runs for each scenario and covariate size. Moreover, we do not report ATE estimates based on CF (MAD) and CF (LMS) for brevity and as they show similar results as CF (MSE) and CF (MSD) due to the discussion in Appendix \ref{appendix:AIPW_ATE_CF}.
\end{notes}
\end{table}

\begin{table}[htbp]
\centering
\caption{Computational cost of the four splitting criteria under the S1 design.}
\vspace{0.8em}
\label{tab:runtime}
\begin{tabular}{l c r r}
\toprule
Method & HL anchor & Median time (in seconds) & Median time relative to MSE \\
\midrule
CF (MSE) & no  &  49.3 & 1.00 \\
CF (LMS) & no  &  56.7 & 1.15 \\
CF (MSD) & yes & 203.5 & 4.13 \\
CF (MAD) & yes & 296.9 & 6.03 \\
\bottomrule
\end{tabular}
\begin{notes}
Entries are the median wall-clock time for a cross-validated causal forest fit over 10 replications, pooled across $N \in \{1000, 2000\}$ and the default and high-dimensional covariate settings $K \in \{10, 20\}$. Doubling either $N$ or $K$ changed any criterion's time by at most a factor of $1.09$. The insensitivity to $N$ reflects the design rather than a property of the criteria, as we fix each tree's subsample size at 200. HL marks the two criteria anchored on the Hodges-Lehmann estimator. 
\end{notes}
\end{table}

\clearpage
\newpage

\section{Additional Application Results}
\label{append:emp_appl}

The Progresa sample in Section \ref{sec:progresa} covers 417 electoral precincts, of which 138 are assigned to the late (control) group and 279 to the early (treatment) group \citep{de_la_o_conditional_2013}. We present descriptive statistics to the dataset in Figure \ref{fig:outcome_distribution_progresa} and Table \ref{tab:desc_stats_progresa}.
The outcome, the PRI vote share in 2000, is right-skewed in the pooled sample (skewness $1.48$), and its dispersion is markedly larger in the treatment group (standard deviation $19.73$ against $14.82$). 
The covariates that count votes and population are themselves right-skewed, with means well above their medians. 
Total population is the most extreme case, with a median close to $1{,}200$ in both groups but a treatment group standard deviation above $8{,}000$, which points to a small number of very large precincts. The average poverty level, in contrast, is balanced across arms and has low variance (means $4.59$ and $4.58$). 
The skew and heavy tails in both the outcome and several covariates are exactly the features that a median-based splitting criterion is meant to accommodate. The same data have also been analyzed for the rank-based estimation procedure of \cite{ghosh_robustness_2026}.

The ACTG 175 sample considered in Section \ref{sec:ACTG_175} is larger than the Progresa sample and contains $1{,}342$ patients, $321$ on zidovudine alone (control group) and $1{,}021$ on alternative antiretroviral therapies (treatment group) \citep{hammer_trial_1996}. Figure \ref{fig:outcome_distribution_actg175} illustrates the distribution of the outcome variable, while Table \ref{tab:desc_stats_actg175} shows descriptive statistics of the sample. 
The outcome, the CD4 count at 96 weeks, is right-skewed but less so than in Progresa (skewness $0.52$), with the mean above the median in both groups and a higher central level under alternative antiretroviral therapies (median of $330$ against $283$). 
The continuous and binary pre-treatment covariates are well balanced across arms, as expected under randomization, with means, medians, and proportions almost identical between the two groups. 
The clearest departure from symmetry among the covariates is prior antiretroviral use in days, whose mean is more than twice its median and whose standard deviation exceeds its mean in both groups, so it carries a long right tail. Baseline CD8 count shows a milder right skew. The application thus pairs a larger sample with a more moderate skew and complements the more strongly skewed Progresa data.

Figures~\ref{fig:plot_cate_progresa} and~\ref{fig:plot_cate_actg175} report the pointwise confidence intervals behind the CATE distributions of Section~\ref{sec:empappl}. 
In both applications the point estimates vary around the ATE, but the intervals are wide and almost every one contains the ATE. 
Neither application therefore supports treatment effect heterogeneity that is statistically distinguishable from a constant effect. The MSD intervals run somewhat wider than the MSE intervals, most clearly for Progresa, which reproduces the conservative width premium seen in the simulations. The agreement between the two splitting rules at the level of individual estimates and intervals matches their agreement on the ATE.
%
%
\begin{figure}[htbp]
  \centering
    \caption{Distribution of the outcome variable \texttt{pri2000s}, the PRI vote share in 2000, for the Progresa application in Section \ref{sec:progresa}.}
  \includegraphics[width=\textwidth]{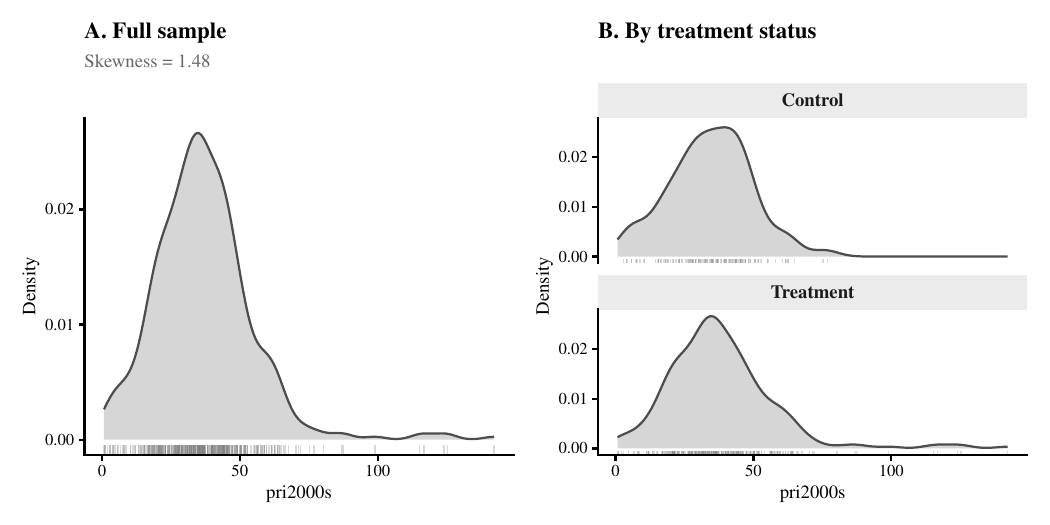}
  \label{fig:outcome_distribution_progresa}
  \begin{notes}
  Panel A (left) shows the full-sample density with a rug plot. Panel B (right) shows the density separately for the treatment and control groups. Both panels indicate pronounced right skewness.
  \end{notes}
\end{figure}
\begin{table}[htbp]
\centering
\caption{\label{tab:desc_stats_progresa}Descriptive statistics by treatment status for the Progresa application in Section \ref{sec:progresa}.}
\centering
\begin{tabular}[htbp]{lrrrrrr}
\toprule
\multicolumn{1}{c}{ } & \multicolumn{3}{c}{Control (late)} & \multicolumn{3}{c}{Treatment (early)} \\
\cmidrule(l{3pt}r{3pt}){2-4} \cmidrule(l{3pt}r{3pt}){5-7}
Variable & Mean & Median & SD & Mean & Median & SD\\
\midrule
\addlinespace[0.3em]
\multicolumn{7}{l}{\textbf{Outcome}}\\
\hspace{1em}PRI vote share, 2000 (\%) & 34.49 & 35.07 & 14.82 & 38.11 & 35.76 & 19.73\\
\addlinespace[0.3em]
\multicolumn{7}{l}{\textbf{Pre-treatment covariates}}\\
\hspace{1em}Avg. poverty level & 4.59 & 4.75 & 0.48 & 4.58 & 4.73 & 0.49\\
\hspace{1em}Total population (1994) & 1900.91 & 1306.00 & 2809.93 & 2104.89 & 1112.00 & 8270.35\\
\hspace{1em}Voter turnout (1994) & 411.97 & 388.00 & 225.50 & 374.56 & 356.00 & 218.49\\
\hspace{1em}PRI votes (1994) & 256.76 & 230.00 & 148.65 & 236.70 & 208.00 & 148.02\\
\hspace{1em}PAN votes (1994) & 38.33 & 19.50 & 48.92 & 32.06 & 15.00 & 40.70\\
\hspace{1em}PRD votes (1994) & 67.36 & 40.50 & 77.66 & 60.72 & 30.00 & 78.08\\
\hspace{1em}No. of villages & 6.28 & 5.00 & 4.01 & 5.90 & 5.00 & 3.72\\
\bottomrule
\end{tabular}
\begin{notes}
The table reports the mean, median, and standard deviation of the outcome and pre-treatment covariates by treatment status. Sample sizes are $N = 417$ (total), $N_0 = 138$ (control), $N_1 = 279$ (treatment).
\end{notes}
\end{table}
\begin{figure}[htbp]
  \centering
    \caption{Precinct-level CATE estimates for the Progresa data with pointwise 95\% confidence intervals.}
  \includegraphics[width=\textwidth]{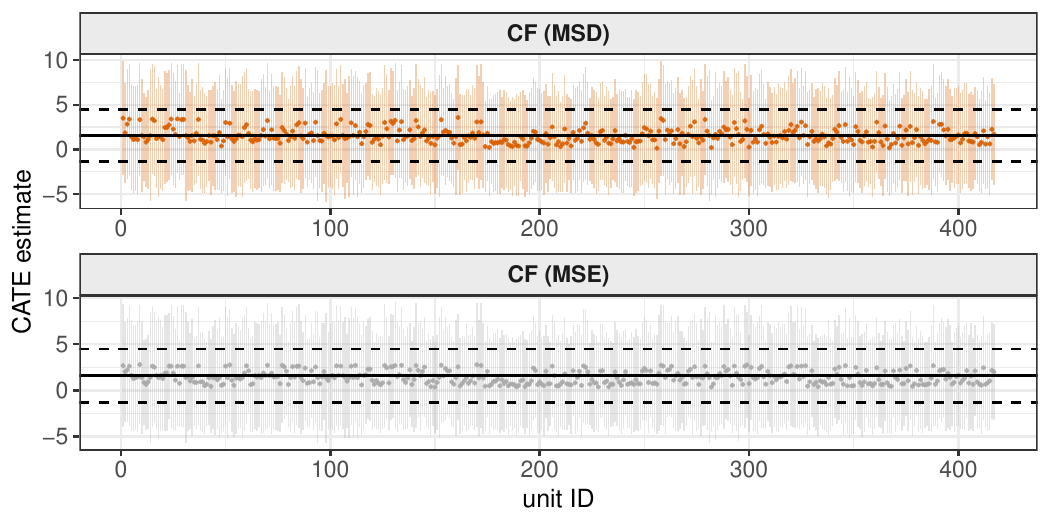}
  \label{fig:plot_cate_progresa}
  \begin{notes}
      Causal forests with MSD (orange, top panel) and MSE (grey, bottom panel) splitting. Precincts are ordered by identifier. The solid horizontal line is the ATE estimate and the dashed lines are its 95\% confidence interval. The pointwise intervals are wide and almost all contain the ATE.
  \end{notes}
\end{figure}
%
%
\begin{figure}[htbp]
  \centering
    \caption{Distribution of the outcome variable \texttt{cd496}, the CD4 count at 96 weeks, for the ACTG 175 application in Section \ref{sec:ACTG_175}.}
  \includegraphics[width=\textwidth]{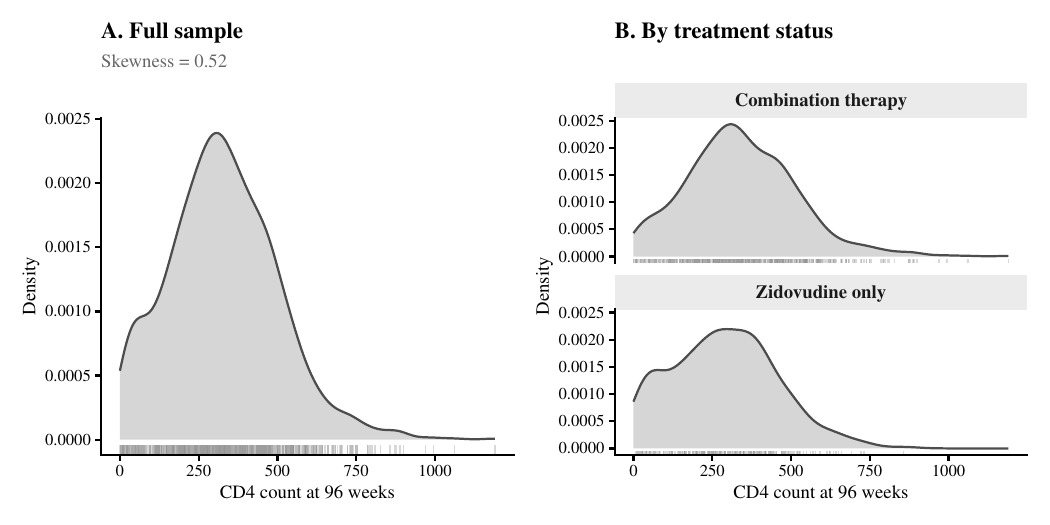}
  \label{fig:outcome_distribution_actg175}
  \begin{notes}
      Panel~A (left) shows the full-sample density with a rug plot. Panel~B (right) shows the density separately for the treatment and control groups. Both panels indicate pronounced right skewness.
  \end{notes}
\end{figure}
\begin{table}[htbp]
\centering
\caption{Descriptive statistics by treatment status for the ACTG 175 application in Section \ref{sec:ACTG_175}.}
\label{tab:desc_stats_actg175}
\begin{tabular}[htbp]{lrrrrrr}
\toprule
\multicolumn{1}{c}{ } & \multicolumn{3}{c}{Control (Zidovudine only)} & \multicolumn{3}{c}{Treatment (Alternative therapies)} \\
\cmidrule(l{3pt}r{3pt}){2-4} \cmidrule(l{3pt}r{3pt}){5-7}
Variable & Mean & Median & SD & Mean & Median & SD\\
\midrule
\addlinespace[0.3em]
\multicolumn{7}{l}{\textbf{Outcome}}\\
\hspace{1em}CD4 count at 96 weeks & 287.62 & 283.0 & 166.38 & 341.45 & 330.0 & 175.29\\
\addlinespace[0.3em]
\multicolumn{7}{l}{\textbf{Pre-treatment covariates}}\\
\hspace{1em}Baseline CD4 count & 365.12 & 354.0 & 115.51 & 351.80 & 340.0 & 117.22\\
\hspace{1em}Baseline CD8 count & 1009.38 & 890.0 & 503.59 & 991.58 & 909.0 & 474.53\\
\hspace{1em}Age (years) & 35.09 & 34.0 & 8.69 & 35.42 & 34.0 & 8.65\\
\hspace{1em}Weight (kg) & 75.14 & 73.6 & 11.93 & 74.67 & 74.1 & 12.77\\
\hspace{1em}Karnofsky score & 95.79 & 100.0 & 5.82 & 95.90 & 100.0 & 5.47\\
\hspace{1em}Prior antiretroviral use (days) & 392.37 & 166.0 & 475.29 & 385.80 & 161.0 & 461.80\\
\hspace{1em}Non-white & 0.26 & 0.0 & 0.44 & 0.26 & 0.0 & 0.44\\
\hspace{1em}Male & 0.82 & 1.0 & 0.39 & 0.84 & 1.0 & 0.36\\
\hspace{1em}Hemophilia & 0.07 & 0.0 & 0.26 & 0.08 & 0.0 & 0.27\\
\hspace{1em}Homosexual activity & 0.65 & 1.0 & 0.48 & 0.70 & 1.0 & 0.46\\
\hspace{1em}History of drug use & 0.09 & 0.0 & 0.29 & 0.12 & 0.0 & 0.32\\
\hspace{1em}Symptomatic & 0.16 & 0.0 & 0.36 & 0.18 & 0.0 & 0.39\\
\hspace{1em}Prior zidovudine use ($>$30 days) & 0.54 & 1.0 & 0.50 & 0.55 & 1.0 & 0.50\\
\bottomrule
\end{tabular}
\begin{notes} The table reports descriptive statistics (mean, median, and standard deviation) for the outcome and pre-treatment covariates by treatment status. Sample sizes are $N = 1342$ (total), $N_0 = 321$ (zidovudine only), and $N_1 = 1021$ (alternative antiretroviral therapies).
\end{notes}
\end{table}
\begin{figure}[htbp]
  \centering
    \caption{CATE estimates for the ACTG 175 data.}
  \includegraphics[width=\textwidth]{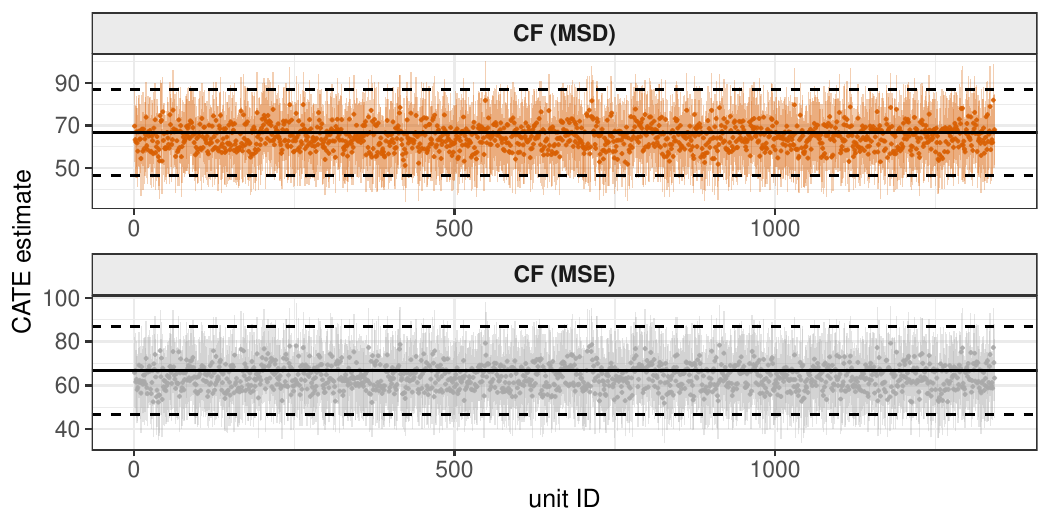} 
  \begin{notes}
  Individual-level CATE estimates for the ACTG 175 data with pointwise 95\% confidence intervals. Causal forests with MSD (orange, top panel) and MSE (grey, bottom panel) splitting criteria. Patients are ordered by identifier. The solid horizontal line is the ATE estimate and the dashed lines are its 95\% confidence interval. The pointwise intervals are wide and all contain the ATE.
  \end{notes}
  \label{fig:plot_cate_actg175}
\end{figure}

\clearpage

\section{Proofs}
\label{appendix:proofs}

This section presents the proofs of the results proposed in the main paper in chronological order.  We begin with Lemma \ref{lem:lemma_bounds}, which provides bounds for the distribution of $\hat{\tau}_{\text{HL}}(X_i)$. We then provide the proof of Theorem \ref{thm:theorem_consistency}, which extends the asymptotic consistency result of \citet{hoyland_HL} to the potential outcome setting. The subsequent proof of Theorem \ref{thm:theorem_median_unbiased} establish the median-unbiasedness under symmetry of the leafwise Wilcoxon rank-sum statistic.

\begin{proof} \label{proof:proof_lemma_4}
\noindent\textbf{of Lemma \ref{lem:lemma_bounds}.}
This proof adapts \citet{hodges_estimates_1963} to the leafwise conditional framework. Recall from
Definition~\ref{def:HL_general} that
\[
\Delta^* = \sup\{ \Delta : t(Y^0,Y^1 - \Delta) > \xi \},
\qquad
\Delta_{*} = \inf\{ \Delta : t(Y^0,Y^1 - \Delta) < \xi \},
\]
and that, by condition~(i) of Definition~\ref{def:HL_general}, $t(Y^0,Y^1+a)$ is non-decreasing in
$a$, equivalently $t(Y^0,Y^1-\Delta)$ is non-increasing in $\Delta$. Conditioning on the fixed leaf
$l(x;\Pi)$, the treated and control outcomes are i.i.d.\ draws from the continuous leafwise
conditionals $F_{1,l}$ and $F_{0,l}$, so the joint distribution of $(Y^0,Y^1)$ is continuous and
$\Delta^*$ and $\Delta_*$ have continuous distributions. In particular the boundary events have
probability zero, so strict and non-strict inequalities in $\Delta^*,\Delta_*$ may be interchanged
freely below.

By monotonicity, the definitions of $\Delta^*$ and $\Delta_*$ give, for any $a\in\mathbb R$, the
event equivalences
\[
\{\Delta_* < a\} = \{\, t(Y^0,Y^1-a) < \xi \,\},
\qquad
\{\Delta^* \ge a\} = \{\, t(Y^0,Y^1-a) > \xi \,\}.
\]
Taking complements in the second equivalence yields
$\{\Delta^* < a\} = \{\, t(Y^0,Y^1-a) \le \xi \,\}$, and hence
\[
Pr\bigl(\Delta_{*} < a\bigr) = Pr\bigl(t(Y^0,Y^1 - a) < \xi\bigr),
\qquad
Pr\bigl(\Delta^* < a\bigr) = Pr\bigl(t(Y^0,Y^1 - a) \le \xi\bigr).
\]
The strict inequality in the first identity and the non-strict inequality in the second are exactly
the two sides produced by the complement step. Finally, by Definition~\ref{def:HL_general},
\[
\hat\tau_{\mathrm{HL}}(x;\Pi)=\frac{\Delta^*+\Delta_*}{2}.
\]
Since \(t(Y^0,Y^1-\Delta)\) is non-increasing in \(\Delta\), we have
\(\Delta^*\leq \Delta_*\), and hence
\[
\Delta^*
\leq
\hat\tau_{\mathrm{HL}}(x;\Pi)
\leq
\Delta_*.
\]
Therefore,
\[
\{\Delta_*<a\}
\subseteq
\{\hat\tau_{\mathrm{HL}}(x;\Pi)<a\}
\subseteq
\{\Delta^*<a\}.
\]
Combining this with the probability identities above gives
\[
Pr\bigl(t(Y^0,Y^1-a)<\xi\bigr)
\leq
Pr\bigl(\hat\tau_{\mathrm{HL}}(x;\Pi)<a\bigr)
\leq
Pr\bigl(t(Y^0,Y^1-a)\leq\xi\bigr),
\]
which is the claimed bound.
\end{proof}

\begin{proof}
\label{proof:theorem_consistency}
\noindent\textbf{of Theorem \ref{thm:theorem_consistency}.}
This proof builds on Theorem 2.3 of \citet{hoyland_HL} and adapts it to the leafwise setting.
Under Assumption~\ref{ass:HL_shift}, we show that the asymptotic distribution of
$\hat\tau_{\mathrm{HL}}(x;\Pi)$ is centered around $\Delta(x;\Pi)$. In particular, we need to
show that for every fixed $\Delta(x;\Pi)$,
\begin{equation} \label{proof:goal_3_1}
\lim_{N_l \to \infty}
Pr\left( \sqrt{N_l} \left(\hat{\tau}_{\mathrm{HL}}(x;\Pi) - \Delta(x;\Pi) \right) \leq a \right)
=
\Phi\left( \frac{a B}{A} \right),
\end{equation}
with $a \in \mathbb{R}$, where $\Phi$ is the cumulative distribution function of the standard
normal distribution and $A$ and $B$ are given below. We distinguish between the leafwise
population objects and the corresponding observed samples. Let $l=l(x;\Pi)$ denote the leaf
containing $x$. By Assumption~\ref{ass:HL_shift}, the leafwise population conditional
distributions are
\[
F_{1,l}(z)=Pr\bigl(Y_i(1)\le z \mid X_i\in l\bigr),
\qquad
F_{0,l}(z)=Pr\bigl(Y_i(0)\le z \mid X_i\in l\bigr),
\]
and these satisfy the location-shift relation
\[
F_{1,l}(z)=F_{0,l}(z-\Delta(x;\Pi))
\qquad\text{for all } z\in\mathbb R.
\]
The corresponding observed leafwise samples are
\[
Y^1_l=\{Y_j \mid j\in\mathcal S_1,\; X_j\in l\},
\qquad
Y^0_l=\{Y_m \mid m\in\mathcal S_0,\; X_m\in l\},
\]
with sample sizes $N_{1,l}$ and $N_{0,l}$ and total leafwise sample size $N_l=N_{1,l}+N_{0,l}$.

By the shift-equivariance of the Hodges--Lehmann estimator, adding a constant to every treated
outcome shifts $\hat\tau_{\mathrm{HL}}(x;\Pi)$ by that same constant, so
$\hat\tau_{\mathrm{HL}}(x;\Pi)-\Delta(x;\Pi)$ has the same distribution as the estimator computed
on two samples that both follow $F_{0,l}$. It therefore suffices to derive the limiting law under
the null shift $\Delta(x;\Pi)=0$, at which $F_{1,l}=F_{0,l}$. We accordingly write
\begin{equation}\label{def:G}
\tilde F(z)\coloneqq F_{1,l}\bigl(z+\Delta(x;\Pi)\bigr) = F_{0,l}(z),
\end{equation}
for the common null distribution, and let $\tilde f=f_{0,l}$ denote its density. The argument $a$
of \eqref{proof:goal_3_1} enters only through the local sequence
\[
\Delta_N(x;\Pi)= -\frac{a}{\sqrt{N_l}},
\qquad\text{so that}\qquad
\Delta_N(x;\Pi)\to 0
\quad\text{as}\quad N_l \to \infty,
\]
which is introduced by the rank-statistic inversion of Lemma~\ref{lem:lemma_bounds} below. Recall
that $N_l \to \infty$ and $N_{0,l}/N_l\to\lambda\in(0,1)$.

We first prove that, for every $u\in\mathbb{R}$,
\begin{equation}\label{proof:help_3_1}
\lim_{N_l\to\infty}
Pr_N\left(
\sqrt{N_l}(t_N-\xi_N)\le u
\right)
=
\Phi\!\left(\frac{u+aB}{A}\right),
\end{equation}
where $Pr_N$ denotes probability under the local shift $\Delta_N(x;\Pi)$, and $\mathbb E_N$ denotes the expectation with respect to $Pr_N$. Furthermore, we define 
\[
t_N=t_N(Y^0_l,Y^1_l)\coloneqq \frac{\mathcal W_l}{N_{1,l}N_{0,l}},
\]
where $\mathcal W_l$ is the Wilcoxon statistic from Definition~\ref{defn: Mann-Whitney} computed on the
leafwise samples $Y^0_l$ and $Y^1_l$, and $\xi_N=\tfrac12$ is the symmetry point under the null.
Under $Pr_N$ the control sample follows $F_{0,l}$ and the treated sample follows
$F_{0,l}(\boldsymbol{\cdot} -\Delta_N(x;\Pi))$. Let \(\tilde Y_l^0\) and \(\tilde Y_l^1\) be independent generic leafwise
observations with $\tilde Y_l^0\sim F_{0,l}$ and 
$\tilde Y_l^1\sim F_{1,l,N}$. Then
\[
\mathbb E[\varphi(\tilde Y_l^0,\tilde Y_l^1)]
=
Pr_N(\tilde Y_l^0<\tilde Y_l^1)
=
\mathbb E_N[t_N].\] We decompose
\begin{align*}
\sqrt{N_l}\left(t_N(Y^0_l,Y^1_l)-\frac12\right)
&=Q_N+R_N,
\\
Q_N
&:=\sqrt{N_l}\left(t_N(Y^0_l,Y^1_l)-\mathbb E[\varphi(\tilde Y^0_l,\tilde Y^1_l)]\right),
\\
R_N
&:=\sqrt{N_l}\left(\mathbb E[\varphi(\tilde Y^0_l,\tilde Y^1_l)]-\frac12\right),
\end{align*}
where $Q_N$ is the centered (mean-zero) stochastic part and $R_N$ is the deterministic drift
induced by the local shift. We analyze $R_N$ and $Q_N$ separately, starting with $R_N$. Since
\[
\mathbb E[\varphi(\tilde Y^0_l,\tilde Y^1_l)]=Pr_N(\tilde Y^0_l<\tilde Y^1_l)
=\int F_{0,l}(u)\,f_{0,l}\bigl(u-\Delta_N(x;\Pi)\bigr)\,du,
\]
and since the change of variable $v=u-\Delta_N(x;\Pi)$ gives the exact null identity
\[
\int F_{0,l}\bigl(u-\Delta_N(x;\Pi)\bigr)\,f_{0,l}\bigl(u-\Delta_N(x;\Pi)\bigr)\,du
=\int F_{0,l}(v)\,f_{0,l}(v)\,dv=\frac12,
\]
we may write
\begin{align}
R_N
&=
\sqrt{N_l}
\left(
\int F_{0,l}(u)\,f_{0,l}\bigl(u-\Delta_N(x;\Pi)\bigr)\,du
-
\int F_{0,l}\bigl(u-\Delta_N(x;\Pi)\bigr)\,f_{0,l}\bigl(u-\Delta_N(x;\Pi)\bigr)\,du
\right).
\label{eq:RN_new}
\end{align}
Replacing $f_{0,l}(u-\Delta_N(x;\Pi))$ by its leading-order limit $f_{0,l}(u)=\tilde f(u)$, which
changes \eqref{eq:RN_new} only at order $o(1)$ because $\Delta_N(x;\Pi)\to 0$, we obtain
\begin{align*}
R_N
&=
\sqrt{N_l}
\left(
\int F_{0,l}(u)\,\tilde f(u)\,du
-
\int F_{0,l}\bigl(u-\Delta_N(x;\Pi)\bigr)\,\tilde f(u)\,du
\right)+o(1)
\\
&=
-a
\int
\frac{F_{0,l}\bigl(u+a/\sqrt{N_l}\bigr)-F_{0,l}(u)}{a/\sqrt{N_l}}
\,\tilde f(u)\,du+o(1)
\;\longrightarrow\;
-a\underbrace{\int f_{0,l}(u)\,\tilde f(u)\,du}_{=:B}
=
-aB,
\end{align*}
with $B=\int f_{0,l}(u)\tilde f(u)\,du=\int f_{0,l}^2(u)\,du>0$.

For the second part, we show that $\lim_{N_l\to\infty}Pr_N(Q_N\le u)=\Phi(u/A)$, where
\[
A^2
=
\frac{1}{\lambda}\int \tilde F^2(u)\,f_{0,l}(u)\,du
+
\frac{1}{1-\lambda}\int F_{0,l}^2(u)\,\tilde f(u)\,du
-
\frac{1}{4\lambda(1-\lambda)},
\]
with $\lambda$ as in condition~(iii) of Assumption~\ref{ass:HL_shift}. The two integrals are the
Hájek projections of the two-sample U-statistic onto the control and treated arms, weighted by the
inverse arm proportions $1/\lambda$ and $1/(1-\lambda)$. Conditioning on the fixed leaf, the
leafwise treated and control outcomes are i.i.d.\ from continuous laws that coincide under the
null, so $t_N(Y^0_l,Y^1_l)$ is a two-sample (generalized) U-statistic
\citep{hollander_nonparametric_2014}, and the central limit argument of \citet{hoyland_HL} carries
over verbatim to the leafwise samples. Since $\Delta_N(x;\Pi)\to 0$ is a contiguous local sequence,
the limiting variance equals its null value, where $\tilde F=F_{0,l}$ and $\tilde f=f_{0,l}$ give
$\int \tilde F^2 f_{0,l}\,du=\int F_{0,l}^2\tilde f\,du=\int F_{0,l}^2\,dF_{0,l}=\tfrac13$, so that
\[
A^2
=\left(\frac{1}{\lambda}+\frac{1}{1-\lambda}\right)\frac13-\frac{1}{4\lambda(1-\lambda)}
=\frac{1}{12\,\lambda(1-\lambda)},
\]
the classical Wilcoxon constant. Applying Slutsky's theorem to $Q_N+R_N$ gives
\[
\lim_{N_l \to \infty}
Pr_N\left(
\sqrt{N_l}(t_N-\xi_N)\le u
\right)
=
\Phi\!\left(\frac{u+aB}{A}\right),
\]
which is \eqref{proof:help_3_1}.

Using Lemma~\ref{lem:lemma_bounds}, we now establish \eqref{proof:goal_3_1}. By shift-equivariance
the probability is evaluated under the null, and by the monotonicity of the rank-statistic
inversion,
\begin{align*}
&\lim_{N_l \to \infty}
Pr\left(
\sqrt{N_l}\bigl(\hat\tau_{\mathrm{HL}}(x;\Pi)-\Delta(x;\Pi)\bigr)\le a
\right)
\\
={}&
\lim_{N_l \to \infty}
Pr\left(
\hat\tau_{\mathrm{HL}}(x;\Pi)\le \frac{a}{\sqrt{N_l}}
\;\middle|\;
\Delta(x;\Pi)=0
\right)
&&\text{(shift-equivariance)}
\\
={}&
\lim_{N_l \to \infty}
Pr\left(
t_N\!\left(Y^0_l,\,Y^1_l-\frac{a}{\sqrt{N_l}}\right)\le \xi_N
\;\middle|\;
\Delta(x;\Pi)=0
\right)
&&\text{(Lemma~\ref{lem:lemma_bounds})}
\\
={}&
\lim_{N_l\to\infty}
Pr_N(t_N-\xi_N\le 0)
\;=\;
\Phi\!\left(\frac{aB}{A}\right),
\end{align*}
where the last equality applies \eqref{proof:help_3_1} with $u=0$.

Hence, under Assumption~\ref{ass:HL_shift}, the asymptotic distribution of
$\hat\tau_{\mathrm{HL}}(x;\Pi)$ is centered at the leaf-specific location shift $\Delta(x;\Pi)$,
which establishes consistency and completes the proof. In particular,
\eqref{proof:goal_3_1} states that
$\sqrt{N_l}\bigl(\hat\tau_{\mathrm{HL}}(x;\Pi)-\Delta(x;\Pi)\bigr)$ is asymptotically
$N\!\bigl(0,(A/B)^2\bigr)$ with
$(A/B)^2=\bigl[12\,\lambda(1-\lambda)\,(\int f_{0,l}^2)^2\bigr]^{-1}$, the classical
Hodges--Lehmann asymptotic variance.
\end{proof}

\begin{proof} \label{proof:theorem_median_unbiased}
\noindent\textbf{of Theorem  \ref{thm:theorem_median_unbiased}.}
Suppose first that $N_{1,l}N_{0,l}$ is odd, so that $N_{1,l}N_{0,l}=2m+1$ for some integer $m$. Then
\[
\xi_l=m+\frac12,
\]
which is not an attainable value of the statistic. Hence
\[
Pr\bigl(t(Y^0_l,Y^1_l)=\xi_l \mid \Delta(x;\Pi)=0\bigr)=0.
\]
By Lemma~\ref{lem:median_unbiased}, it follows that
\[
Pr\!\left(\hat\tau_{\mathrm{HL}}(x;\Pi)\le \Delta(x;\Pi)\right)=\frac12,
\]
so the estimator is exactly median unbiased in this case. Now suppose that $N_{1,l}N_{0,l}$ is even, say $N_{1,l}N_{0,l}=2m.$ Then
\[
\xi_l=m
\]
is an attainable value of the statistic. In this case the HL estimator is given by the average of
the two middle ordered pairwise differences,
\[
\hat\tau_{\mathrm{HL}}(x;\Pi)=\frac{U^{(m)}+U^{(m+1)}}{2},
\]
and therefore exact equality need not hold. However, Lemma~\ref{lem:median_unbiased} implies that
the probability of over- and underestimation remains close to $\frac12$. Hence the estimator is
approximately median unbiased when $N_{1,l}N_{0,l}$ is even.
\end{proof}

\begin{proof}\label{proof:median_unbiased}
\noindent\textbf{of Lemma \ref{lem:median_unbiased}.}
For the shift parameter $\Delta$, the bound is established in \citet{hodges_estimates_1963}.
Conditioning on the fixed leaf $l(x;\Pi)$,  randomized treatment assignment implies that observed outcomes are i.i.d.\ draws from the continuous leafwise conditionals $F_{1,l}$ and $F_{0,l}$, which coincide under the null shift. Therefore, the leafwise
Wilcoxon statistic is distribution-free and symmetric about $\xi_l$. The same argument therefore applies to the leafwise samples.
\end{proof}

\begin{proof}\label{proof:Delta_equals_tau_leaf} \noindent\textbf{of Lemma \ref{lemma:Delta_equals_tau_leaf}.}
Under the stated condition, the conditional distribution of $Y_i(1)$ given $X_i \in l(x;\Pi)$ is
the conditional distribution of $Y_i(0)$ shifted by the constant $\tau(l;\Pi)$. Hence
\[
F_{1,l}(z)=F_{0,l}(z-\tau(l;\Pi))
\qquad \text{for all } z\in\mathbb R.
\]
By the definition of the leafwise location shift, it follows that
\[
\Delta(x;\Pi)=\tau(l;\Pi).\]
Moreover, the constant-effect condition implies that

\[Y_i(1)-Y_i(0)=\tau(l;\Pi) \qquad \text{almost surely given } X_i\in l(x;\Pi).\]

Therefore the leafwise distribution of the individual treatment effect is degenerate at $\tau(l;\Pi)$, and hence

\[\tau_{\operatorname{med}}(x;\Pi)= \operatorname{med}\bigl(Y_i(1)-Y_i(0)\mid X_i\in l(x;\Pi)\bigr)
=\tau(l;\Pi).
\]
Hence,

\[\Delta(x;\Pi)=\tau(l;\Pi)=\tau_{\operatorname{med}}(x;\Pi),\]
as claimed.
\end{proof}

\begin{proof}\label{proof:Delta_equals_tau_equals_tauMed}\noindent\textbf{of Proposition~\ref{prop:Delta_equals_tau_equals_tauMed}.}
Assume the conditional means exist. By the location-shift model in condition~(i), the marginal
conditional distributions satisfy $F_{1,l}(z)=F_{0,l}(z-\Delta(x;\Pi))$, so
\[
\mathbb E[Y_i(1)\mid X_i\in l]
=\mathbb E[Y_i(0)\mid X_i\in l]+\Delta(x;\Pi),
\]
and therefore
\[
\tau(l;\Pi)
=\mathbb E[Y_i(1)-Y_i(0)\mid X_i\in l]
=\Delta(x;\Pi).
\]
By condition~(ii), the conditional distribution of
$Y_i(1)-Y_i(0)\mid X_i\in l$ is symmetric. A distribution that is symmetric about a point and
has a finite mean is symmetric about that mean, and its median equals its mean. Hence
\[
\tau_{\operatorname{med}}(x;\Pi)
=\operatorname{med}\bigl(Y_i(1)-Y_i(0)\mid X_i\in l\bigr)
=\mathbb E[Y_i(1)-Y_i(0)\mid X_i\in l]
=\tau(l;\Pi).
\]
Combining the two equations gives
\[
\Delta(x;\Pi)=\tau(l;\Pi)=\tau_{\operatorname{med}}(x;\Pi),
\]
as claimed.
\end{proof}

\end{document}